\documentclass[12pt,a4paper]{article}
\usepackage[margin=1in]{geometry}
\usepackage{setspace}
\onehalfspacing
\usepackage{authblk}

\usepackage{times}
\usepackage{bm}
\usepackage{natbib}

\usepackage{amsthm}
\usepackage{amssymb,amsmath}

\usepackage{algorithm}
\usepackage{multirow}
\usepackage{natbib}
\usepackage{xcolor}
\usepackage{graphicx}
\usepackage{subfigure}
\usepackage[font=small]{caption}

\newtheorem{theorem}{Theorem}[section]
\newtheorem{lemma}[theorem]{Lemma}
\newtheorem{proposition}[theorem]{Proposition}
\newtheorem{corollary}[theorem]{Corollary}

\newcommand{\Expect}[1]{E \left\{{#1}\right\}}
\newcommand{\Expects}[2]{E_{{#1}} \left\{{#2}\right\}}
\newcommand{\Var}[1]{\mathrm{var} \left\{{#1}\right\}}

\newcommand{\Abs}[1]{\left\vert{#1}\right\vert}
\newcommand{\Norm}[1]{\left\vert\left\vert{#1}\right\vert\right\vert}

\newcommand{\alphabar}{\overline{\alpha}}
\newcommand{\pihat}{\hat{\pi}}

\newcommand{\CSrm}[1]{}

\begin{document}

\title{Particle Metropolis-adjusted Langevin algorithms}
\author{Christopher Nemeth, Chris Sherlock and Paul Fearnhead}
\affil{Department of Mathematics and Statistics, Lancaster University, Lancaster LA1 4YF U.K. \\c.nemeth@lancaster.ac.uk, c.sherlock@lancaster.ac.uk, p.fearnhead@lancaster.ac.uk}

\maketitle

\begin{abstract}

This paper proposes a new sampling scheme based on Langevin dynamics 
that is applicable within pseudo-marginal and
particle Markov chain Monte Carlo algorithms. We investigate this algorithm's theoretical properties under standard asymptotics, which correspond to an increasing dimension of the
parameters, $n$. Our results show that the behaviour of the algorithm depends crucially on how accurately one can estimate the gradient of the log target density.
If the error in the estimate of the gradient is not sufficiently controlled as dimension increases, then asymptotically there will be no advantage over the simpler random-walk algorithm. However, if the error is sufficiently well-behaved, then the optimal scaling of this algorithm will be $O(n^{-1/6})$ compared to $O(n^{-1/2})$ for the random walk. Our theory also gives guidelines on how to tune the number of Monte Carlo samples in the likelihood estimate and the proposal step-size.

\end{abstract}

\textbf{Keywords:} Metropolis-adjusted Langevin algorithm; Optimal scaling; Particle Filter; Particle Markov chain Monte Carlo; Pseudo-marginal Markov chain Monte Carlo.

\section{Introduction}
\label{sec:introduction}

Markov chain Monte Carlo algorithms are a popular and well-studied methodology that can be used to draw samples from posterior distributions. 
 Over the past few years these algorithms have been extended to tackle problems where the model likelihood is intractable \citep{Beaumont2003}. \cite{Andrieu2009} showed that within the Metropolis--Hastings algorithm, if the likelihood is replaced with an unbiased estimate, then the sampler still targets the correct stationary distribution. \cite{Andrieu2010} extended this work further to create a class of Markov chain algorithms that use sequential Monte Carlo methods, also known as particle filters. 

Current implementations of pseudo-marginal and particle Markov chain
Monte Carlo use random-walk proposals to update the parameters
\cite[e.g.,][]{Golightly2011,Knape/deValpine:2012} and shall be
referred to herein as particle random-walk Metropolis
algorithms. Random walk-based algorithms propose a new value from
  some symmetric density centred on the current value. This
  density is not
  informed by the local properties of the posterior; however, we can
often obtain further information about such properties as we obtain our Monte Carlo estimate of the
posterior density, and at little or no additional
computational overhead. It is therefore natural to consider whether we can use this information to make better proposals for the parameters. In this paper we focus on using Monte Carlo methods to estimate the gradient of the log posterior density, and then use this to guide the proposed parameters towards regions of higher posterior probability. This results in a Monte Carlo version of the Metropolis-adjusted Langevin algorithm \citep{Roberts1998}, which we refer to herein as the particle Langevin algorithm.

When the likelihood is tractable, the Metropolis-adjusted Langevin algorithm has better theoretical
properties than the random-walk Metropolis algorithm. The mixing properties of these algorithms have been studied in the asymptotic limit as the dimension of the 
parameters, $n$, increases. In this asymptotic regime, the optimal
proposal step-size scales as $n^{-1/2}$ for the random-walk algorithm, but as $n^{-1/6}$ for the Metropolis-adjusted Langevin algorithm; and the optimal asymptotic acceptance rate is higher; see \cite{RobertsGelmanGilks1997}, \cite{Roberts1998} and
\cite{Roberts2001} for more details. It is natural to ask whether these advantages of the Metropolis-adjusted Langevin algorithm over the random-walk algorithm extend to pseudo-marginal and particle Markov chain Monte Carlo algorithms, and, in particular, how they are affected when only noisy estimates of the gradient of the log posterior density are available.

We investigate the asymptotic properties of the particle Langevin algorithm and show that its behaviour depends crucially on the accuracy of the estimate of the gradient of the log posterior density as $n$ increases. If the error in the estimate of a component of the gradient does not decay with $n$, then there will be no benefit over the particle random-walk algorithm. If the error is sufficiently well-behaved, then we find that the particle Langevin algorithm inherits the same asymptotic characteristics as the Metropolis-adjusted Langevin algorithm. The optimal proposal scales as $n^{-1/6}$, rather than $n^{-1/2}$, and there is a higher optimal acceptance rate. In this well-behaved regime we find that the number of particles should be chosen so that the variance in the estimate of the log posterior density is approximately 3. 

Furthermore, we provide explicit guidance for tuning the scaling of
the particle Langevin algorithm by aiming for a particular acceptance
rate. We show that the optimal acceptance rate depends crucially on
how accurately we estimate the log posterior density, a feature that is common
to other particle Markov chain Monte Carlo algorithms. As such, tuning
the particle Langevin algorithm using the acceptance rate is only
appropriate if we have an estimate of the variance of our estimator of
the log posterior density. Additionally, the optimal acceptance rate depends
on the accuracy of the gradient estimate. We propose a criterion
for choosing an appropriate scaling for the proposal given a fixed but
arbitrary number of particles. We provide an acceptance rate to tune to, which is
a function of the variance in the log posterior density estimate. This
acceptance rate is robust to the unknown accuracy of our estimate of
the gradient. Tuning to it will lead to an efficiency of at least
90\% of the efficiency of the optimally-scaled particle
Langevin algorithm, with the same number of particles and known accuracy of the gradient estimate. Under this criterion, and with sufficient particles so that the variance of the estimate of the log posterior density is approximately 3, we should scale the step-size so that the acceptance rate is 11\%.

\section{Efficient Markov chain Monte Carlo with intractable likelihoods}
\label{sec:part-marg-metr}

Let $p(z \mid x)$ be a model likelihood, with data $z \in
\mathcal{Z} \subseteq \mathbb{R}^{n_z}$ and model parameters
$x \in \mathcal{X} \subseteq \mathbb{R}^n$. Using Bayes' theorem, the posterior density over the parameters, up to a constant of proportionality, is $\pi(x) \propto p(z\mid x)p(x)$, where $p(x)$ is a prior density for $x$. 

Markov chain Monte Carlo algorithms draw samples,
$(x_1,\ldots,x_J)$, from the posterior
distribution. Typically, these samples are generated using the Metropolis--Hastings algorithm, where proposed values $y$ are sampled from a proposal distribution $q(\cdot\mid x)$ and accepted  with probability 
\begin{equation}
\label{eq:13}
  \alpha(y\mid x) = \min\left\{1, \frac{p(z\mid y)p(y)q(x\mid y)}{p(z\mid x)p(x)q(y\mid x)}\right\}.
\end{equation}

The Metropolis--Hastings algorithm requires that the likelihood $p(z\mid x)$ be tractable, but there are many situations where it can only be evaluated approximately. \cite{Andrieu2009} and \cite{Andrieu2010} have shown that the Metropolis--Hastings algorithm is still valid in this setting, provided there is a mechanism for simulating unbiased, 
non-negative estimates of the likelihood. This technique is known as pseudo-marginal Markov chain Monte Carlo.

The pseudo-marginal approach presupposes that a non-negative unbiased estimator $\hat{p}(z\mid x,\mathcal{U}_x)$ of $p(z\mid x)$ is available, where $\mathcal{U}_x \sim p(\cdot\mid x)$ denotes the random variables used in the sampling mechanism to generate an estimate of the likelihood. 
We then define a target density on the joint space $(x,\mathcal{U}_x)$ as,
\begin{equation}
  \label{eq:14}
  \hat{\pi}(x,\mathcal{U}_x) \propto \hat{p}(z\mid x,\mathcal{U}_x)p(\mathcal{U}_x\mid x)p(x).
\end{equation}
Since the estimate is unbiased, the marginal density of $x$ is
\begin{eqnarray*}
  \int   \hat{\pi}(x,\mathcal{U}_x) d\mathcal{U}_x &\propto& \int \hat{p}(z\mid x,\mathcal{U}_x)p(\mathcal{U}_x\mid x)p(x) d\mathcal{U}_x = p(z\mid x)p(x),
\end{eqnarray*}
which is the posterior density of interest. 

A valid Markov chain Monte Carlo algorithm targeting \eqref{eq:14}, with proposal $q(y\mid x)p(\mathcal{U}_y\mid y)$, has acceptance probability of the form \eqref{eq:13}, but with the intractable likelihoods, $p(z\mid x)$ and $p(z\mid y)$ replaced with realizations from their unbiased estimators, $\hat{p}(z\mid x,\mathcal{U}_x)$ and  $\hat{p}(z\mid y,\mathcal{U}_y)$. 

The efficiency of the Metropolis--Hastings algorithm is highly dependent on the choice of proposal distribution $q(y\mid x)$. Ideally, the proposal would use local information about the posterior to sample from areas of higher posterior density. 
One such proposal is the Metropolis-adjusted Langevin algorithm \cite[]{Roberts1998} which incorporates the gradient of the log posterior density, $\nabla \log \pi(x)$, within the proposal. 
The asymptotic behaviour of this algorithm, as the
number of parameters, $n$, increases, gives an optimal step-size of $O(n^{-1/6})$ \citep{Roberts1998} compared to $O(n^{-1/2})$ for the random-walk Metropolis algorithm \citep{RobertsGelmanGilks1997}. As a result, to maintain a reasonable acceptance rate for large
$n$, the Metropolis-adjusted Langevin algorithm may propose larger jumps in the posterior than the random-walk Metropolis algorithm,
reducing the first order auto-correlation and improving the mixing of the Markov chain. 

Using the Metropolis-adjusted Langevin algorithm in the pseudo-marginal setting is challenging because if the likelihood is intractable then typically, $\nabla \log \pi(x)$ will also be intractable. Therefore, one needs to efficiently estimate both the posterior density $\hat{\pi}(x)$, and its log gradient $\hat{\nabla} \log \pi(x)$. The resulting
algorithm, which we call the particle Langevin algorithm, proposes a new parameter value $y$ as
\begin{equation}
\label{eq:pmala}
  y = x + \lambda Z + \frac{\lambda^2}{2} \hat{\nabla} \log \pi(x), \quad Z \sim \mathcal{N}(0,\mathrm{I}).
\end{equation}
It is often possible to generate a Monte Carlo estimate of the gradient of the log posterior density with little additional computational overhead, from the output
of the same Monte Carlo method used to estimate the likelihood \citep{Poyiadjis2011}.
The efficiency of the particle Langevin algorithm will depend on the choice of scaling parameter $\lambda$ and the accuracy of the estimator $\hat{\nabla} \log \pi(x)$. In the next section we derive asymptotic results which allow us to optimally choose $\lambda$ and which show how the efficiency of the particle Langevin algorithm depends on the accuracy of the estimator of the gradient.

\section{Theoretical results}
\label{sec:optim-scal-results}

\subsection{High-dimensional setting}
\label{sec:high-d}

In this section we present two key theoretical results and investigate their practical consequences. These results apply in the general pseudo-marginal setting, but the practical guidance requires specific
distributional assumptions and is specific to algorithms where the estimate of the
likelihood is obtained using a particle filter. For simplicity,
therefore, we continue to use particle Langevin as a
general term for both pseudo-marginal and particle Markov chain Monte Carlo algorithms. All proofs are presented in the Supplementary Material. 

We consider an infinite sequence of targets
$\pi^n(x^n),~n=1,\dots$, where $x^n$ is an $n$-dimensional vector. We obtain limiting forms for the acceptance rate and
expected squared jump distance, $J_n$, for the particle Langevin proposal. The expected squared jumping distance has been used extensively as a measure of mixing of
Markov chain Monte Carlo algorithms \cite[e.g.,][]{BeskosRobertsStuart:2009,Sherlock/Roberts:2009,Sherlock:2013}, where maximizing it is
equivalent to minimizing the first order auto-correlation of the Markov chain. The particle Metropolis-adjusted Langevin kernel itself is not a positive operator; however any kernel with a rejection probability of at least $0.5$ for all possible moves is a positive operator and typically we will be tuning our algorithm to give an average acceptance probability between $0.1$ and $0.15$, so that most moves have a rejection probability in excess of $0.5$. Moreover, in the presence of a limiting diffusion, the limiting, scaled expected squared jumping distance is the speed of the diffusion and hence precisely the right measure of efficiency. \cite{Sherlock:2015} and \cite{Roberts1998} show limiting diffusions, respectively, for the particle random-walk Metropolis algorithm and the Metropolis-adjusted Langevin algorithm, suggesting the likely existence of a limiting diffusion for the particle Metropolis-adjusted Langevin kernel. 

We start by considering the idealized particle Langevin algorithm where, for any given $x^n$,  an unbiased stochastic estimate of the target density is used, with an exact gradient of the log target density, $\nabla\log \pi^n(x^n)$. This algorithm is unlikely to be usable in practice, but provides a useful reference point for the more general particle Langevin proposal where we assume that we have a noisy and possibly biased estimate of $\nabla \log \pi^n(x^n)$. Introducing the possibility of both noise and bias in the estimate allows our results to be applied to a wider range of algorithms that could be used to estimate the gradient of the log target density.

We study a target of the form
\begin{equation}
\label{eqn.product.form}
\pi^n(x^n)=\prod_{i=1}^nf(x_i^n),
\end{equation}
where $x_i^n$ denotes the $i\mbox{th}$ component of an $n$ dimensional vector
$x^n$. 
We set $g(x)=\log f(x)$ and assume that $g(x)$ and its derivatives $g^{(i)}(x)$ satisfy
\begin{equation}
\label{eqn.poly.mom.g}
|g(x)|,|g^{(i)}(x)|\le M_0(x), \quad i=1,\dots,8,
\end{equation}
where $M_0(x)$ is some polynomial, and 
\begin{equation}
\label{eqn.finite.moments}
\int_{\mathbb{R}}x^kf(x)~dx<\infty, \quad k=1,2,3,\dots.
\end{equation}

Our assumptions on the form of the target, \eqref{eqn.product.form}--\eqref{eqn.finite.moments}, are the same as those in
\cite{Roberts1998}. In particular, for tractability, the target is assumed to have a product form. This apparently restrictive assumption is common in much of the literature on high-dimensional limit results, including
\cite{RobertsGelmanGilks1997}, \cite{Roberts1998}, \cite{NealRoberts:2006}, \cite{RobRosSimTem} and \cite{Sherlock:2015}. Some of these results have been
 extended to more general settings 
 \cite[e.g.,][]{Roberts2001,Bedard:2007,Sherlock/Roberts:2009,BeskosRobertsStuart:2009,Sherlock:2013}, where 
 optimality criteria obtained using a product target
 have been found to hold for more general statistical applications. The results are also widely used within 
 adaptive Markov chain Monte Carlo algorithms
 \cite[e.g.,][]{AndrieuThoms:2008,RobRos:2009,Sarkkaetal:2015}.

We consider the additive noise in the log target density at the current and
proposed values:
\[
W^{n}=\log \pihat^n(x^n,\mathcal{U}_x^{n})-\log \pi^n(x^n),
~~~
V^{n}=\log \pihat^n(y^n,\mathcal{U}_y^{n})-\log \pi^n(y^n)
\]
and their difference
\begin{equation}
\label{eq:B}
B^{n}=V^{n}-W^{n}.
\end{equation}
As in \cite{Pitt2012}, \cite{Sherlock:2015}  and \cite{Doucet:2015}, we
assume that the distributions of $V^{n}$ and $W^{n}$ are independent
of position. This is unlikely to hold
in practice, but simulations in those articles show that it can hold
approximately and that guidance obtained from the resulting theory can be robust to
 variations with position. In the Supplementary Material we investigate and discuss this assumption for the scenarios in
Section \ref{sec:particle-filtering}.

For particle Markov chain Monte Carlo, \cite{Berard2014} examine the particle filter in the limit of a large
number, $N$, of particles acting on a large number of observations and find
that 
\begin{equation}
\label{eqn.SARa}
V^n\mid x^n,y^n,w~~\sim~~ \mathcal{N}\left(-\frac{1}{2}\sigma^2,\sigma^2\right),
\end{equation}
for some fixed $\sigma^2\propto 1/N$. From the definition of $W^n$, and directly from \eqref{eq:14}, it follows that
\begin{equation}
\label{eqn.targ.W}
W^n\sim \mathcal{N}\left(\frac{1}{2}\sigma^2,\sigma^2\right),
~~~B^n\sim \mathcal{N}(-\sigma^2,2\sigma^2),
\end{equation}
when the chain is at stationarity \citep{Pitt2012}.

We apply our theoretical results to this common scenario with the
assumption \cite[e.g.,][]{Pitt2012,Sherlock:2015,Doucet:2015} that the
computational cost is proportional to $N$ and hence
inversely proportional to $\sigma^2$. 
Therefore, our measure of efficiency is, up to a
constant of proportionality,
\begin{equation}
\label{eqn.eff.CPU}
\mbox{Eff}(\ell,\sigma^2)= \sigma^2 J_n(\ell,\sigma^2),
\end{equation}
where $\ell$ is related to the scaling of the proposal as
in \eqref{eqn.define.lambda} and Theorem \ref{sec:scaling-grads}.

We consider a range of levels of control for the bias and variance of the errors in the
estimate of each component of the gradient. For a given level of control, we investigate the scaling that is necessary to achieve a non-degenerate limiting
acceptance rate, and the behaviour of the efficiency function in
that limit. A natural corollary of our analysis is that these scaling requirements, and the resulting general forms for the limiting acceptance rate and expected squared jump distance, would persist even
if we were able to use an unbiased estimate of the gradient.

\subsection{Idealized particle Langevin algorithm}
\label{sec:pseudo-marginal-mala}

In this section we consider the idealized particle Langevin algorithm, providing general limiting forms for the
acceptance rate and expected squared jump distance. 

Let the scaling for the proposal on the target $\pi^n$ be
\begin{equation}
\label{eqn.define.lambda}
\lambda_n=\ell n^{-1/6},
\end{equation}
where $\ell>0$ is a tuning parameter. As mentioned earlier, in the idealized particle Langevin algorithm we make the unrealistic assumption that the gradient of the log target density may be evaluated precisely so that the $i\mbox{th}$ component of the proposal is
\begin{equation}
\label{eqn.define.Y}
Y_i^n=x_i^n+\lambda_nZ_i+\frac{1}{2}\lambda_n^2g'(x_i^n),
\end{equation}
with $Z_i\sim \mathcal{N}(0,1)~(i=1,\dots,n)$ 
independent of all other sources of variation. 

Let $\alpha_n(x,w;y,v)$ be the acceptance probability for the
idealized particle Langevin algorithm with current value $(x,w)$ and proposed value $(y,v)$. 
We are interested in the expected acceptance rate and the expected squared jump distance,
\begin{eqnarray*}
\overline{\alpha}_n(\ell)&=&\Expect{\alpha_n(X^n,W^n;Y^n,V^n)},\\
J_n(\ell)&=&\Expect{\Norm{Y^n-X^n}^2\alpha_n(X^n,W^n;Y^n,V^n)},
\end{eqnarray*}
where expectation is over $X^n,W^n,Y^n,V^n$  with distributions as defined
in \eqref{eqn.product.form}, \eqref{eqn.define.Y}, \eqref{eqn.SARa}  and \eqref{eqn.targ.W}. Our first result is as follows.

\begin{theorem}
\label{thrm.main}
As $n\rightarrow \infty$, the following limits hold in probability:
\[
\alphabar_n(\ell)\rightarrow
\alpha(\ell)=2 E \left\{
\Phi\left(\frac{B}{\ell^3K}-\frac{\ell^3K}{2}\right)
\right\},
~~~
n^{-2/3}J_n(\ell) \rightarrow \ell^2\alpha(\ell),
\]
where in distribution $B = \lim_{n\rightarrow \infty}B^n$, and $B^n$
is defined in \eqref{eq:B}. Here,
\begin{equation}
  \label{eq:K}
K=\left[\frac{1}{48}\Expect{5g'''(X)^2-3g''(X)^3}\right]^{1/2} \in \mathbb{R}^+,  
\end{equation}
where expectation in \eqref{eq:K} is with respect to the density
$f(x)$.
\end{theorem}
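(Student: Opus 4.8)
The plan is to split the logarithm of the acceptance ratio into an ``exact-gradient'' Metropolis-adjusted Langevin term and an independent noise term, and to handle the two separately. On the joint target \eqref{eq:14} the auxiliary densities $p(\mathcal{U}_x\mid x)$ cancel, so, since the idealized algorithm estimates the density but uses the exact gradient $g'$ in \eqref{eqn.define.Y}, the log acceptance ratio is
\[
R_n = \rho_n + B^n, \qquad \rho_n = \log\pi^n(Y^n)-\log\pi^n(X^n)+\log q(X^n\mid Y^n)-\log q(Y^n\mid X^n),
\]
where $\rho_n$ is exactly the log acceptance ratio of the deterministic Langevin proposal and $B^n=V^n-W^n$ carries all of the estimation noise. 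Because the distributions of $V^n,W^n$ are assumed independent of position and are generated from fresh auxiliary variables, $B^n$ is independent of $(X^n,Z)$, hence of $\rho_n$, and $B^n\to B$ in distribution.

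First I would establish the limiting law of $\rho_n$. Writing $Y_i^n-X_i^n=\lambda_nZ_i+\tfrac12\lambda_n^2 g'(X_i^n)$ with $\lambda_n=\ell n^{-1/6}$, I would Taylor-expand each of the $n$ independent summands of $\rho_n$ in powers of $\lambda_n$. The leading orders cancel because of the Langevin drift, and the first surviving contributions are of order $\lambda_n^6=\ell^6/n$; summing $n$ i.i.d.\ terms and applying a triangular-array central limit theorem (verifying Lyapunov's condition) gives
\[
\rho_n \Rightarrow \mathcal{N}\!\left(-\tfrac12\ell^6K^2,\ \ell^6K^2\right),
\]
with the variance constant equal to $\ell^6K^2$ for $K$ as in \eqref{eq:K}. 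The mean--variance relation is forced by the exact identity $\Expect{e^{\rho_n}}=1$ at stationarity, so only the variance need be computed in detail. This step is the main obstacle: it requires careful bookkeeping of the high-order Taylor terms coming from both the target-density difference and the asymmetric proposal-density ratio, checking that all lower-order contributions cancel in mean and variance and that the surviving coefficient is precisely $\tfrac{1}{48}\Expect{5g'''(X)^2-3g''(X)^3}$. The assumptions \eqref{eqn.poly.mom.g}--\eqref{eqn.finite.moments}, in particular control of the derivatives up to order eight and all polynomial moments of $f$, are exactly what is needed to bound the remainder terms and to apply the central limit theorem.

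Next I would combine the two pieces. Conditioning on $B^n$ and using independence, the conditional expected acceptance probability is $E_{\rho_n}\!\left\{\min(1,e^{\rho_n+B^n})\right\}$; passing to the limit and applying the Gaussian identity $\Expect{\min(1,e^X)}=\Phi(m/s)+e^{m+s^2/2}\Phi(-m/s-s)$ for $X\sim\mathcal N(m,s^2)$ with $m=B-\tfrac12\ell^6K^2$ and $s=\ell^3K$ yields
\[
\Phi\!\left(\frac{B}{\ell^3K}-\frac{\ell^3K}{2}\right)+e^{B}\,\Phi\!\left(-\frac{B}{\ell^3K}-\frac{\ell^3K}{2}\right).
\]
To recover the stated factor of two I would use the stationary structure of the noise: by \eqref{eqn.targ.W} the density $\phi_B$ of $B$ satisfies the exact tilting identity $e^{b}\phi_B(b)=\phi_B(-b)$. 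Substituting $b\mapsto-b$ in the second term then shows it equals $\Expect{\Phi(B/(\ell^3K)-\ell^3K/2)}$, so the two contributions coincide and $\alphabar_n(\ell)\to\alpha(\ell)=2\Expect{\Phi(B/(\ell^3K)-\ell^3K/2)}$. The conditional version of this statement, obtained by fixing the current state and invoking the law of large numbers and central limit theorem over the $n$ components, delivers the convergence in probability before the noise is integrated out.

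Finally, for the expected squared jump distance I would expand
\[
\Norm{Y^n-X^n}^2=\lambda_n^2\sum_{i=1}^nZ_i^2+\lambda_n^3\sum_{i=1}^nZ_ig'(X_i^n)+\tfrac14\lambda_n^4\sum_{i=1}^ng'(X_i^n)^2,
\]
and observe that only the first term matters at order $n^{2/3}$: since $\lambda_n^2=\ell^2n^{-1/3}$ and $n^{-1}\sum_iZ_i^2\to1$, we obtain $n^{-2/3}\Norm{Y^n-X^n}^2\to\ell^2$ in probability, while the remaining terms are $O_p(1)$ and $O_p(n^{1/3})$ respectively and vanish after normalization. The last step is to show that the normalized jump distance and the acceptance probability decorrelate in the limit: because $n^{-1}\sum_iZ_i^2$ concentrates at its mean while $\alpha_n$ is bounded and depends on the $Z_i$ only through the aggregate $\rho_n$, a uniform-integrability argument based on the moment assumption \eqref{eqn.finite.moments} yields $n^{-2/3}J_n(\ell)\to\ell^2\alpha(\ell)$.
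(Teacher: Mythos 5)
Your proposal follows essentially the same route as the paper's proof: decompose the log acceptance ratio into the exact-gradient Langevin term plus the independent noise difference $B^n$, establish a CLT giving $\rho_n \Rightarrow \mathcal{N}(-\tfrac12\ell^6K^2,\,\ell^6K^2)$ via Taylor expansion of the $n$ i.i.d.\ component-wise log ratios, combine this with the Gaussian identity for $E(1\wedge e^U)$ and the exponential-tilting property of $B$ at stationarity to obtain the factor of two, and prove the jump-distance limit by showing $n^{-2/3}\|Y^n-X^n\|^2$ concentrates at $\ell^2$ while the acceptance probability is bounded. The only differences are minor variations within the same architecture: you pin the limiting mean through the unbiasedness identity $E(e^{\rho_n})=1$ (which would additionally require uniform integrability of $e^{\rho_n}$) where the paper computes $E(C_6)$ directly by integration by parts, you spell out the tilting identity $e^b\phi_B(b)=\phi_B(-b)$ that the paper leaves implicit, and you close the jump-distance argument with uniform integrability where the paper uses the Cauchy--Schwarz inequality.
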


The following corollary details the parameters that optimize the
efficiency function for the particle filter scenario. 
\begin{corollary}
\label{sec:opt-acc-rate}
Subject to \eqref{eqn.SARa} and \eqref{eqn.targ.W}, the efficiency defined in \eqref{eqn.eff.CPU} 
is maximized when the scaling and noise variance are $\ell_{\mathrm{opt}} \approx 1.125K^{-1/3}$, and $\sigma_{\mathrm{opt}}^2 \approx 3.038$, at which point, the limiting acceptance rate is $\alpha_{\mathrm{opt}}\approx 15.47\%$.
\end{corollary}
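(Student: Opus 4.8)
The plan is to start from Theorem~\ref{thrm.main} together with the particle-filter noise model, reduce the limiting efficiency to an explicit function of two scalar parameters, and then locate its maximum. By Theorem~\ref{thrm.main}, $n^{-2/3}J_n(\ell)\to\ell^2\alpha(\ell)$, so up to the factor $n^{2/3}$, which does not affect the location of the optimum, the limiting form of the efficiency \eqref{eqn.eff.CPU} is proportional to $\sigma^2\ell^2\alpha(\ell)$. Under \eqref{eqn.SARa}--\eqref{eqn.targ.W} the limiting noise satisfies $B\sim\mathcal{N}(-\sigma^2,2\sigma^2)$, and this is the only place $\sigma^2$ enters, so the task is to maximize $\sigma^2\ell^2\alpha(\ell)$ jointly over $\ell>0$ and $\sigma^2>0$.

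First I would obtain a closed form for $\alpha(\ell)$. Writing $a=\ell^3K$, the argument of $\Phi$ in Theorem~\ref{thrm.main} is $U=B/a-a/2$, which is Gaussian with mean $-\sigma^2/a-a/2$ and variance $2\sigma^2/a^2$. Using the standard Gaussian identity $\Expect{\Phi(\mu+\tau Z)}=\Phi\left(\mu(1+\tau^2)^{-1/2}\right)$ for $Z\sim\mathcal{N}(0,1)$, and then multiplying numerator and denominator by $a>0$, this collapses to
\[
\alpha(\ell)=2\Phi\left(-\frac{\sigma^2+a^2/2}{(a^2+2\sigma^2)^{1/2}}\right),\qquad a=\ell^3K.
\]

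Next I would exploit the reparametrization to strip out $K$ and reduce the problem to a universal two-dimensional optimization. Since $\ell^2=a^{2/3}K^{-2/3}$, the efficiency is proportional to
\[
K^{-2/3}\,\sigma^2 a^{2/3}\,\Phi\left(-\frac{\sigma^2+a^2/2}{(a^2+2\sigma^2)^{1/2}}\right),
\]
so the maximizing pair $(a,\sigma^2)$ does not depend on $K$. The objective is continuous and strictly positive on the open quadrant, and I would check that it vanishes on every boundary: the $a^{2/3}$ and $\sigma^2$ prefactors kill the limits $a\to0$ and $\sigma^2\to0$, while as $a\to\infty$ or $\sigma^2\to\infty$ the argument of $\Phi$ diverges to $-\infty$ and the Gaussian tail decays faster than the polynomial prefactors grow. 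Hence a maximizer exists in the interior and satisfies the first-order conditions $\partial_a=\partial_{\sigma^2}=0$.

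The hard part is that these stationarity equations are transcendental, involving the inverse Mills ratio $\phi/\Phi$, and admit no closed form; they must be solved numerically. Doing so yields $a_{\mathrm{opt}}\approx1.424$ and $\sigma^2_{\mathrm{opt}}\approx3.038$, whence $\ell_{\mathrm{opt}}=a_{\mathrm{opt}}^{1/3}K^{-1/3}\approx1.125\,K^{-1/3}$; substituting back into the closed form for $\alpha(\ell)$ gives the argument of $\Phi$ approximately $-1.42$ and hence $\alpha_{\mathrm{opt}}=2\Phi(-1.42)\approx0.1547$. The one remaining rigorous point is to confirm that this interior critical point is the global maximizer rather than merely a stationary point: the boundary-decay argument already guarantees that a global maximum is attained in the interior, and I would close the gap by verifying numerically (or via a second-order check) that the critical point is unique, so that it is indeed the global optimum.
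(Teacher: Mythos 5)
Your proposal is correct, and its skeleton matches the paper's: derive a closed form for the limiting acceptance rate under the Gaussian noise assumptions, observe that the optimization over $(a,\sigma^2)$ with $a=\ell^3 K$ is universal in $K$, and finish numerically. The closed form itself you obtain via the Gaussian smoothing identity $E\{\Phi(\mu+\tau Z)\}=\Phi\bigl(\mu(1+\tau^2)^{-1/2}\bigr)$, whereas the paper gets the same expression probabilistically, writing $\Phi(B/a-a/2)$ as $\mathbb{P}\bigl(aZ-B\le -a^2/2\bigr)$ for an independent standard Gaussian $Z$ and then unconditioning on $B$; these are equivalent. The genuine difference is in how the optimum is located. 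The paper exploits the fact that, setting $b^2=2\sigma^2$, the $\Phi$ factor depends on $(a,b)$ only through $a^2+b^2$: holding $a^2+b^2$ fixed, the prefactor $a^{2/3}b^2$ is maximized analytically at $b^2=3a^2$, which collapses the problem to the one-dimensional maximization of $a^{8/3}\Phi(-a)$, with maximizer $\hat a\approx 1.423$ delivering all three constants simultaneously together with the exact relations $2\sigma_{\mathrm{opt}}^2=3\ell_{\mathrm{opt}}^6K^2$ and $\alpha_{\mathrm{opt}}=2\Phi(-\hat a)$. You instead solve the two-dimensional stationarity system numerically, which is why you need the boundary-decay existence argument and a uniqueness/second-order check to claim global optimality; the paper's constrained-maximization trick sidesteps that bookkeeping, since the inner maximum is explicit and the outer problem is a one-dimensional function whose global maximum can be inspected directly. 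Both routes are valid and agree numerically (your $a_{\mathrm{opt}}\approx 1.424$ versus the paper's $1.423$, and your values indeed satisfy $2\sigma_{\mathrm{opt}}^2\approx 3a_{\mathrm{opt}}^2$, consistent with the paper's identity). As a side remark, the displayed acceptance rate in the paper's proof carries a typographical error, with the exponent $-1/2$ where $+1/2$ is meant, as is clear from the regime (3) formula in Section \ref{sec:tuning-particle-mala}; your version has the correct exponent.
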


The optimal variance of the noise in the log target density differs only
slightly from that of the particle random-walk Metropolis algorithm, where
$\sigma^2_{\mathrm{opt}}\approx 3.283$ \cite[]{Sherlock:2015}; however, the optimal
asymptotic acceptance rate is increased from $7.00\%$ to $15.47\%$ and the scaling is improved from $O(n^{-1/2})$ to $O(n^{-1/6})$. Therefore, for large $n$, the particle Langevin algorithm permits larger jumps leading to a more efficient proposal distribution.

\begin{figure}[t!]
  \centering
  \includegraphics[scale=0.5]{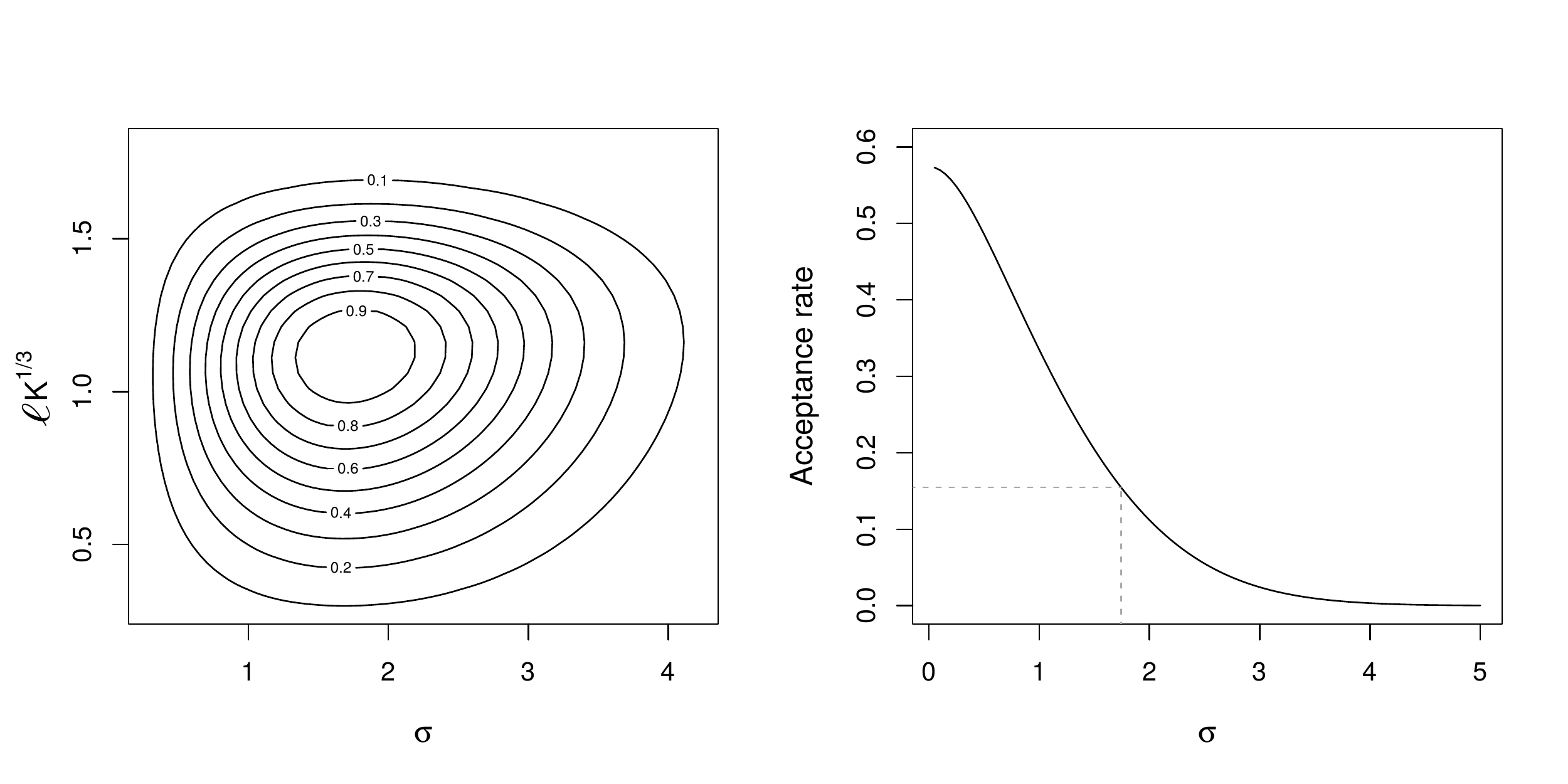}
  \caption{Contour plot of the relative efficiency Eff($\ell$,$\sigma^2$)/Eff($\ell_{\mathrm{opt}}$,$\sigma^2_{\mathrm{opt}}$) (left panel), and asymptotic acceptance rate (right panel) plotted against $\sigma$, where $\ell$ is optimized for each $\sigma$, for the
  idealized particle Langevin algorithm and for the particle Langevin algorithm in asymptotic regime (3) of Theorem \ref{sec:scaling-grads}.}
  \label{fig:contours}
\end{figure}

Figure \ref{fig:contours} shows the relative efficiency as a function
of the scaling and the standard deviation, $\sigma$, of the noise, and
the optimal acceptance rate as a function of $\sigma$. The left panel
shows that over a wide range of variances the optimal scaling $\ell$
is close to $1.125K^{-1/3}$, and over a wide range of scalings, the
optimal variance $\sigma^2$ is close to $3.038$. This relative
insensitivity between the scaling and variance means that the scaling
which maximizes the expected squared jump distance over all possible
noise variances will be close to the optimal scaling for any specific
noise variance in a large range. The right panel gives the acceptance
rate for a range of variances, where $\ell$ is optimally tuned for each $\sigma^2$. The optimal acceptance rate varies considerably over a range of sensible noise variances. This suggests that, given a sensible, but not necessarily optimal noise variance, tuning to achieve an acceptance rate of about $15\%$ may lead to a relatively inefficient algorithm. Instead, one should either choose a scaling which optimizes the effective sample size directly, or estimate the variance in the noise in the log target density, find the acceptance rate that corresponds to the optimal scaling conditional on the estimated variance, and tune to this.

\subsection{Scaling conditions for the particle Langevin algorithm}
\label{sec:scal-grad-comp}

In the particle Langevin algorithm we do not have an exact estimate
for the gradient of the log target density. In fact, depending on the approach
used to estimate the gradient, the estimate may be both biased and
noisy. In this section we give conditions on the bias and noise of the
gradient estimate that would lead to an efficient proposal distribution. 

We start by considering the $i\mbox{th}$ component of the particle Langevin
proposal ($i=1,\dots,n$): 
\begin{equation}
\label{eqn.define.Y.bias}
Y_i^n=x_i^n+\lambda_nZ_i+\frac{1}{2}\lambda_n^2 \left[g'(x_i^n)+\frac{1}{n^\kappa}\left\{b(x_i^n) + \tau U_{x_i^n}\right\} \right],
\end{equation}
where, for all $i$, $Z_i\sim \mathcal{N}(0,1)~(i=1,\dots,n)$ and $U_{x_i^n}$ are
independent of each other and of all other sources of variation. For
any $x$, $U_x$ is a random variable with a distribution that is
independent of $X$ and $W$, with $E(U_{x})=0$, 
$\mathrm{var}(U_{x})=1$ and
\begin{equation}
\label{eqn.mom.U}
\Expect{\Abs{U_{x}}^k}<\infty, \quad k>0.
\end{equation}
In the Supplementary Material, the assumption that the variance of $U_{x_i^n}$ is constant, and that
$U_{x_i^n}$ and $W$ are independent, are checked on the models from Section \ref{sec:particle-filtering}; the variance is shown to change by at most an order
of magnitude, and independence is shown to be a good working assumption.

Even though the variance of the noise is fixed, the bias $b(x_i^n)$
in the estimate of the $i\mbox{th}$ component of the gradient (at $x_i^n$) can be position specific. Furthermore, we assume that $b(x)$ and its derivatives $b^{(i)}(x)~(i=1,\dots,7)$ satisfy
\begin{equation}
  \label{eq:cond.b}
|b(x)|,|b^{(i)}(x)|\le M_0(x),  
\end{equation}
where $M_0(x)$ is, without loss of generality, the
same polynomial as in \eqref{eqn.poly.mom.g}.

The particle Langevin proposal \eqref{eqn.define.Y.bias} can be viewed as
a generalization of the Metropolis-adjusted Langevin proposal, which can be retrieved by setting $b(x)=\tau=0$. The bias and noise
components of \eqref{eqn.define.Y.bias} are scaled by an $n^{-\kappa}$
term, where $\kappa \geq 0$. If $\kappa=0$, as shall be shown
in Part (1) of Theorem \ref{sec:scaling-grads}, in order to achieve a
non-degenerate limiting acceptance rate, the scaling of the proposal
must be chosen so that the particle Langevin proposal has the same
limiting behaviour as the particle random-walk algorithm. In addition to
the definition of $K$ in \eqref{eq:K}, we define
\begin{eqnarray}
\label{eqn.defn.Kstar}
K_*^2&=&\Expects{f}{b(X)^2}+\frac{1}{2}\tau^2,\\
\label{eqn.defn.Kstarstar}
K_{**}&=&-\frac{1}{4}\Expects{f}{b'(X)g''(X)}.
\end{eqnarray}
 where, by assumptions \eqref{eqn.poly.mom.g}, 
\eqref{eqn.finite.moments} and \eqref{eq:cond.b}, these expectations
are finite.

\begin{theorem}
\label{sec:scaling-grads}
Define $\psi(a;B)=\Phi(B/a - a/2)$, where in distribution $B = \lim_{n\rightarrow \infty}B^n$, and $B^n$
is defined in \eqref{eq:B}. As $n \rightarrow \infty$ the following limits hold in probability:

(1) If $\kappa=\frac{1}{3}-\epsilon$, where $0<\epsilon \leq \frac{1}{3}$, then $\lambda_n=\ell n^{-1/6-\epsilon}$ for a non-degenerate limiting acceptance rate, whence  
\[
\alphabar_n(\ell)\rightarrow \alpha^{(1)}(\ell) =
2 E \left\{\psi(\ell K_*;B)\right\},
~~~
n^{-1+\kappa}J_n(\ell) \rightarrow \ell^2\alpha^{(1)}(\ell).
\]

(2) If $\kappa=\frac{1}{3}$, then $\lambda_n=\ell n^{-1/6}$ for a non-degenerate limiting acceptance rate, whence  
\[
\alphabar_n(\ell)\rightarrow \alpha^{(2)}(\ell)=
2 E \left[\psi\left\{\left(\ell^6K^2+2\ell^4K_{**}+\ell^2K_*^2\right)^{1/2};B\right\}\right],~~~ n^{-2/3}J_n(\ell) \rightarrow \ell^2\alpha^{(2)}(\ell);
\]
\[
\mbox{where}~~~\ell^6K^2+2\ell^4K_{**}+\ell^2K_*^2\ge 0.
\]

(3) If $\kappa > \frac{1}{3}$, then $\lambda_n=\ell n^{-1/6}$ for a non-degenerate limiting acceptance rate, whence  
\[
\alphabar_n(\ell)\rightarrow \alpha^{(3)}(\ell) =2 E \left\{
\psi\left(\ell^3K;B\right)
\right\},
~~~
n^{-2/3}J_n(\ell) \rightarrow \ell^2\alpha^{(3)}(\ell).
\]
\end{theorem}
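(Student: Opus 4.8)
The plan is to write the log acceptance ratio as $R^n=B^n+S^n$, where $B^n$ is the additive log-density noise of \eqref{eq:B} (asymptotically independent of the rest and converging in distribution to $B$) and $S^n=\sum_{i=1}^n r_i^n$ collects the exact target-and-proposal contribution of each coordinate. Substituting the Gaussian proposal densities and writing $G_x=g'(x)+n^{-\kappa}\{b(x)+\tau U_x\}$ for the estimated gradient, the $i$th coordinate contributes
\[
r_i^n=g(Y_i^n)-g(x_i^n)-\tfrac12\lambda_n Z_i\left(G_{x_i^n}+G_{Y_i^n}\right)-\tfrac18\lambda_n^2\left(G_{x_i^n}+G_{Y_i^n}\right)^2 ,
\]
with $U$ evaluated at the current point in $G_{x_i^n}$ and an independent copy at the proposal in $G_{Y_i^n}$. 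First I would Taylor expand $g(Y_i^n)-g(x_i^n)$ and $G_{Y_i^n}$ about $x_i^n$ in powers of $Y_i^n-x_i^n=\lambda_n Z_i+\tfrac12\lambda_n^2 G_{x_i^n}$, retaining enough orders that the remainder is negligible after summing over the $n$ coordinates; this is exactly what the polynomial bounds \eqref{eqn.poly.mom.g} and \eqref{eq:cond.b} on derivatives of $g$ and $b$, together with the moment conditions \eqref{eqn.finite.moments} and \eqref{eqn.mom.U}, are there to control.

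Collecting the expansion, three effects survive. The pure-Langevin terms in $g',g'',g'''$ behave as in Theorem \ref{thrm.main}, contributing per coordinate at order $\lambda_n^6$ and summing to $K$ of \eqref{eq:K}. The gradient error enters first through a mean-zero term $-\lambda_n Z_i\,n^{-\kappa}\{b(x_i^n)+\tfrac12\tau(U_{x_i^n}+U_{Y_i^n})\}$, linear in $Z_i$ and of order $\lambda_n n^{-\kappa}$, whose summed variance gives $\Expects{f}{b(X)^2}+\tfrac12\tau^2=K_*^2$. The remaining bias effect is the delicate one: its apparent leading part, of order $\lambda_n^2 n^{-\kappa}$, has mean a multiple of $\Expects{f}{b(X)g'(X)}+\Expects{f}{b'(X)}$, and integrating by parts against $f$ (using $g'=f'/f$, with boundary terms killed by the polynomial bounds) shows $\Expects{f}{b'(X)}=-\Expects{f}{b(X)g'(X)}$, so this part cancels identically. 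The first surviving bias effect is thereby pushed to order $\lambda_n^4 n^{-\kappa}$, where a second integration by parts recombines the residual $b',b''$ expectations into $K_{**}=-\tfrac14\Expects{f}{b'(X)g''(X)}$.

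Writing $\lambda_n=\ell n^{-\beta}$, the summed contributions scale as $n^{1-6\beta}$ (Langevin), $n^{1-4\beta-\kappa}$ (bias) and $n^{1-2\beta-2\kappa}$ (error variance); a non-degenerate limit requires the largest exponent to vanish. For $\kappa>\tfrac13$ the Langevin term binds, forcing $\beta=\tfrac16$ and suppressing the others (regime 3); for $\kappa=\tfrac13$ all three vanish at $\beta=\tfrac16$ (regime 2); for $\kappa<\tfrac13$ the error variance binds, forcing $\beta=\tfrac12-\kappa=\tfrac16+\epsilon$ and suppressing the Langevin and bias terms (regime 1). In each regime a triangular-array central limit theorem gives $S^n\Rightarrow\mathcal{N}(-\tfrac12 a^2,a^2)$, the mean and variance being tied by the detailed-balance relation inherited from reversibility of the exact Metropolis--Hastings kernel, so the mean need not be computed separately. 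The variance $a^2$ equals $\ell^6K^2$, $\ell^6K^2+2\ell^4K_{**}+\ell^2K_*^2$ and $\ell^2K_*^2$ in regimes $3$, $2$ and $1$; the cross term $2\ell^4K_{**}$ appears as the limiting covariance between the Langevin and bias fluctuations, and being part of a genuine variance the combination satisfies $\ell^6K^2+2\ell^4K_{**}+\ell^2K_*^2\ge 0$. Finally I would combine $S^n$ with the independent limit $B$: for $S\sim\mathcal{N}(-a^2/2,a^2)$, conditioning on $B$ and applying the same stationarity identity for $B$ used in Theorem \ref{thrm.main} collapses $\Expect{\min(1,e^{B+S})}$ to $2\Expect{\psi(a;B)}$, the stated form. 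The expected squared jump distance follows from $\Norm{Y^n-X^n}^2$ concentrating at $n\lambda_n^2$ and being asymptotically independent of the acceptance event, so that $J_n(\ell)\sim n\lambda_n^2\,\alphabar_n(\ell)$, and rescaling by the corresponding power of $n$ produces the limit $\ell^2\alpha^{(j)}(\ell)$.

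The main obstacle is the bookkeeping of the expansion together with the two integration-by-parts cancellations: one must confirm that the $O(\lambda_n^2 n^{-\kappa})$ drift vanishes and that the many $\Expects{f}{b'(X)g'(X)^2}$ and $\Expects{f}{b''(X)g'(X)}$ contributions recombine into the single coefficient $K_{**}$, while also extracting the covariance that supplies its factor of two. Establishing uniform control of the Taylor remainders, and upgrading the distributional convergence of $S^n$ to convergence of the expectations $\alphabar_n(\ell)$ and $J_n(\ell)$ by uniform integrability based on \eqref{eqn.poly.mom.g}, \eqref{eqn.finite.moments}, \eqref{eq:cond.b} and \eqref{eqn.mom.U}, is the remaining technical burden.
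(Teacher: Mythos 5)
Your route is essentially the one the paper takes. You expand the per-coordinate log Metropolis--Hastings ratio exactly as in Proposition \ref{prop.mathematica.two}: your mean-zero term of order $\lambda_n n^{-\kappa}$ is the paper's $C_{1,1}$, whose summed variance gives $K_*^2$; your first integration-by-parts cancellation is the paper's observation that $\Expect{C_{2,1}}=0$ because $\int b'(x)e^{g(x)}dx=-\int b(x)g'(x)e^{g(x)}dx$; and your $K_{**}$ enters, as in \eqref{eqn.equal.Kstarstar}, both through a mean term and as the covariance between the Langevin fluctuation $C_3$ and the gradient-error fluctuation $C_{1,1}$. Your exponent bookkeeping reproduces Table \ref{tab:coeff}, the CLT step is Lemma \ref{lemma.CLT.two}, and the final assembly (Gaussian formula for $\Expect{1\wedge e^U}$, bounded convergence, the stationarity identity $\Expect{e^Bf(B)}=\Expect{f(-B)}$, and the ESJD limit via $L^2$-concentration of $\Norm{Y^n-X^n}^2$ plus Cauchy--Schwarz) mirrors the end of the proof of Theorem \ref{thrm.main}. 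One genuine improvement: your justification of $\ell^6K^2+2\ell^4K_{**}+\ell^2K_*^2\ge 0$ as the limit of the variances of $C_3+C_{1,1}$ is cleaner than the paper's Cauchy--Schwarz bound $K_{**}^2\le K_*^2K^2$.

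The one genuine gap is your claim that ``the mean need not be computed separately.'' The identity $\Expect{e^{T_n^*}}=1$ does hold exactly here (integrate out the reverse-move noise $U_y$ first), but transferring it to the limit requires uniform integrability of $e^{T_n^*}$, which you do not establish: Fatou alone yields $e^{\mu+a^2/2}\le 1$, i.e.\ only the one-sided bound $\mu\le -a^2/2$ on the limiting mean. Moreover, to invoke the CLT you must in any case show that $n\Expect{r_i^n}$ converges, and doing so amounts to computing exactly the expectations you hoped to skip. The paper closes this by direct computation: the integration-by-parts identities $\Expect{C_{2,2}}=-\frac{1}{2}\Expect{C_{1,1}^2}$ and $\Expect{C_{4,1}}=-\Expect{C_3C_{1,1}}$ are precisely the mean--variance tie, established term by term. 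Separately, note that your (correct) relation $J_n(\ell)\approx n\lambda_n^2\,\alphabar_n(\ell)$ gives $n\lambda_n^2=\ell^2n^{2\kappa}$ in regime (1), so your rescaling is $n^{-2\kappa}$ rather than the printed $n^{-1+\kappa}$; these agree only at $\kappa=1/3$, and the paper's own argument (Proposition \ref{prop.SqDist} adapted to $\lambda_n=\ell n^{-1/6-\epsilon}$) also produces $n^{-2\kappa}$, so the printed exponent appears to be a typo in the statement which your method silently corrects rather than a defect in your approach.
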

The theorem highlights the relative
contributions of the change in the true
posterior and the error in the
gradient term appearing in the Metropolis--Hastings acceptance ratio. When $\kappa<1/3$, the contribution from the gradient
term must be brought under control by choosing a smaller
scaling, but when this smaller scaling is used, 
the limiting acceptance ratio for the Metropolis-adjusted Langevin algorithm is $1$ and so the
roughness of the target itself, $K$, is irrelevant. It is
only at the most
extreme end of regime (1), when $\kappa=0$, that the expected squared jump distance is of the same
order of magnitude as for the pseudo-marginal random-walk Metropolis algorithm. By contrast, when the scaling is $\kappa>1/3$ the effect
of the errors in the gradient on the acceptance ratio  is negligible;
the behaviour is that of the idealized particle Langevin algorithm. 
The case where $\kappa = 1/3$ gives a balance between the
contributions to the acceptance ratio.

\subsection{Tuning the particle Langevin algorithm}
\label{sec:tuning-particle-mala}

Theorem \ref{sec:scaling-grads} has two important implications. Firstly, it provides insight into
the relative performance of the particle Langevin algorithm compared to the particle random-walk algorithm. Asymptotically, the former algorithm has better
mixing properties providing there is some control over the error of each component of the gradient, $\kappa>0$. The greater the control, the better the scaling of the step-size as the number of parameters increases. 
Under our assumption on the target (\ref{eqn.product.form}) it would be natural to expect that condition (3) of Theorem \ref{sec:scaling-grads} would hold,
and that the optimal scaling would be proportional to $n^{-1/6}$.
This is because, for the particle Langevin algorithm, we need to control the variance of the estimate of the posterior density as $n$ increases. This requires the number of particles used to estimate each
component of (\ref{eqn.product.form}) to increase linearly with $n$ so that the Monte Carlo variance of the  estimate of each term in the product (\ref{eqn.product.form}) is of order $1/n$. Under this regime, the Monte Carlo error of the estimate of each component 
of the gradient would be of order $n^{-1/2}$, which corresponds to $\kappa=1/2$. Empirical investigations for two models reported in Section \ref{sec:particle-filtering}, and 
Section \ref{sect.regime.diags} of the Supplementary Material, 
indicate that both fit into  case (3). 

The second consequence of Theorem \ref{sec:scaling-grads} is the implementation guidance for the particle Langevin algorithm. In particular, results on optimal acceptance rates are important for tuning the proposal appropriately, and results on the expected squared jump distance aid in the choice of number of particles.

In the particle filter scenario, in an analogous manner to the first
part of the proof of Corollary \ref{sec:opt-acc-rate}, the three
acceptance rates can
be shown to simplify to 
\begin{eqnarray*}
\alpha^{(1)}(\ell,\sigma^2)&=&2\Phi\left\{-\frac{1}{2}(\ell^2K_*^2+2\sigma^2)^{1/2}\right\},\\
\alpha^{(2)}(\ell,\sigma^2)&=&2\Phi\left\{-\frac{1}{2}(\ell^6K^2+2\ell^4K_{**}+\ell^2K_*^2+2\sigma^2)^{1/2}\right\},\\
\alpha^{(3)}(\ell,\sigma^2)&=&2\Phi\left\{-\frac{1}{2}(\ell^6K^2+2\sigma^2)^{1/2}\right\},
\end{eqnarray*}
where we now make the dependence of the acceptance rates on the
distribution of the noise difference, $B$, explicit through the
parameter $\sigma^2$.
The $K_{**}$ (\ref{eqn.defn.Kstarstar}) term appearing in the acceptance rate for case (2) can be negative, and this can lead 
to the counter-intuitive situation where increasing the step-size can increase the acceptance rate; see the Supplementary Material.

For regime (3) the optimal variance ($\sigma^2\approx 3.0$) and
acceptance rate ($\alpha\approx 15.5$) are supplied by Corollary
\ref{sec:opt-acc-rate}. For regimes (1) and (2) the optimal choices
will depend on the relationship between the number of particles, $K_*$
and $K_{**}$, and this relationship is unknown. If $K_*$ were fixed
then the optimal variance for regime (1) would be $\sigma^2\approx
3.3$ as for the particle random-walk Metropolis algorithm, because the
efficiency has the same general form; however it is reasonable to
assume that $K_*$ and $K_{**}$ will decrease as $\sigma^2$
decreases. 
In this case, we can always slightly increase our efficiency measure (10) by reducing $\sigma^2$ and increasing $\ell$
in such a way that $\ell^2K^2_*+2\sigma^2$ is fixed. In a real, finite-dimensional, problem
our limit theory is not appropriate for very large $\ell$. However, the above argument suggests the optimal variance will be
less than 3.3. A similar argument holds in case (2), and thus we recommend choosing the number of particles
so that the variance of the estimate of the log target density is roughly 3.0.
Conditional on a choice of the number of particles or, equivalently, of the variance of the estimator of the log target density, it is possible to provide an acceptance rate that is close to optimal simultaneously across all three regimes. The scaling can therefore be adjusted to obtain this acceptance rate.  The idea is to maximize the worst-case performance of the particle Langevin algorithm.

Fix $\sigma^2$ and assume that the behaviour of the particle Langevin algorithm is
described by one of the limiting regimes of Theorem
\ref{sec:scaling-grads}. Denote the complete details of this regime by $r=(\kappa,K,K_{*},K_{**})\in\mathcal{R}$,
 where $\mathcal{R}$ denotes the set of possible regimes. Given the counter-intuitive behaviour described above when $K_{**}<0$, we consider only
regimes with $K_{**}\geq 0$. Denote the asymptotic expected
squared jump distance of the particle Langevin algorithm as $J(\alpha,r)$ for
regime $r$, where $\ell$ is chosen to give an average acceptance
probability $\alpha$. This is well-defined for
$0<\alpha<2\Phi(-\sigma/ \surd 2)$, as the acceptance rate is
continuous and monotonically decreasing with $\ell$. Then, for this
regime, the relative efficiency of the particle Langevin, with average acceptance rate $\alpha$, can be measured as 
\[
 \mbox{EffR}(\alpha,r)=\frac{J(\alpha,r)}{\max_{\alpha'}J(\alpha',r)},
\]
the ratio of the expected squared jump distances for this implementation of the particle Langevin algorithm and for the optimal implementation within regime $r$. A robust choice of average acceptance rate to tune to is the value that maximizes the minimum efficiency,
\[
 \arg\max_{\alpha} \min_{r} \mbox{EffR}(\alpha,r).
\]
We call this the maximin acceptance rate. Calculating, for any $\sigma$, 
the corresponding maximin acceptance rate is straightforward numerically. In Figure \ref{fig:minimax} we show the maximin acceptance rate as a function of $\sigma$ and the corresponding worst-case efficiency.
The maximin
choice of acceptance rate gives a worst-case relative efficiency of approximately 90\% for all values of $\sigma$. For $\sigma^2\approx 3$ ($\sigma \approx 1.73$)
we have a maximin optimal average acceptance rate of $\approx 11\%$.

\begin{figure}[t!]
  \centering
  \includegraphics[scale=0.35]{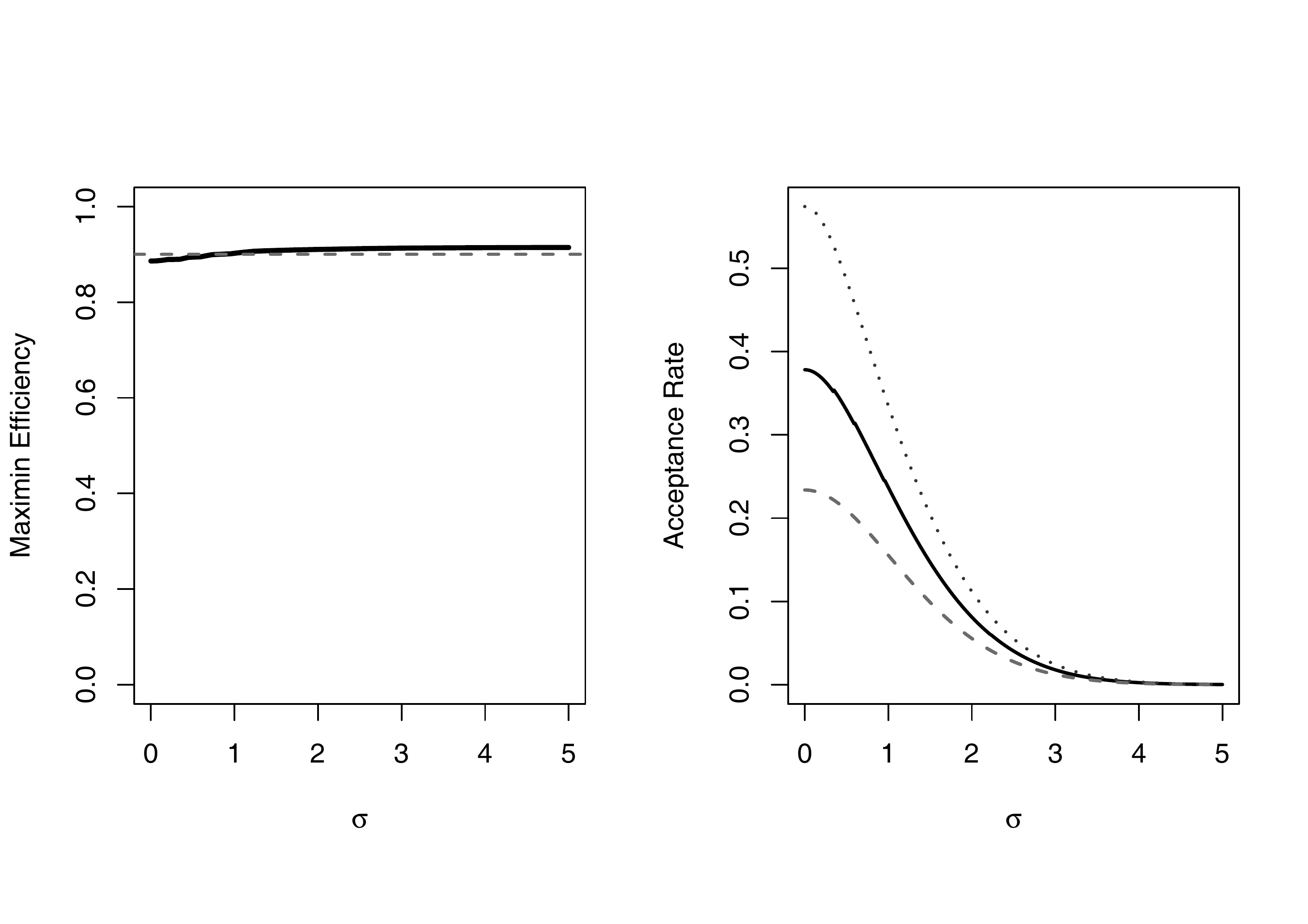}
  \caption{Plots of the maximin implementation. The left panel shows the maximin relative efficiency (black solid line) against the 90\% efficiency level (dashed line). The right panel shows the maximin optimal acceptance rates as a function of $\sigma$ (black). For comparison we also plot the optimal acceptance rate for regime (3), idealized particle Langevin algorithm, (dotted line); and regime (1) (dashed line). }
  \label{fig:minimax}
\end{figure}

\section{Inference for state-space models}
\label{sec:particle-filtering}

\subsection{Particle filtering}
\label{sec:particle-filtering-1}

In this section we apply the particle Langevin algorithm to two state-space model examples, where, for consistency with notation, we denote $x \in \mathcal{X}$ as a vector of model parameters and let $\{S_t: 1 \leq t \leq T\}$  be a latent Markov process taking values on some measurable space $\mathcal{S} \subseteq \mathbb{R}^{n_s}$. The process is fully characterized by its initial density $p(s_1\mid x)=\mu_x(s_1)$ and transition probability density
\begin{equation*}
  p(s_t\mid s_{1:t-1},x) = p(s_t\mid s_{t-1},x) = f_x(s_t\mid s_{t-1}),
\end{equation*}
where we use the notation $s_{1:t-1}$ in place of $(s_1,\ldots,s_{t-1})$.

We assume that the process $S_t$ is not directly observable, but partial observations are received via a second process $Z_t\subseteq \mathcal{Z}^{n_z}$. The observations $z_t$ are conditionally independent given $S_t$ and are defined by the probability density
\begin{equation*}
  p(z_t\mid z_{1:t-1},s_{1:t},x) = p(z_t\mid s_t,x) = g_x(z_t\mid s_t).
\end{equation*}
The posterior density of the parameters $\pi(x) \propto p(z_{1:T}\mid x)p(x)$, where $p(x)$ is a prior density for $x$,  is obtained by integrating out the latent process $\{S_t\}_{t \geq 1}$ to give the marginal likelihood
\begin{equation*}
  p(z_{1:T}\mid x) = p(z_1\mid x)\prod_{t=2}^T p(z_t\mid z_{1:t-1},x),
\end{equation*}
where
\begin{equation}
\label{eq:5}
  p(z_t\mid z_{1:t-1},x)  =  \int  g_x(z_t\mid s_t) \int f_x(s_t\mid s_{t-1})p(s_{t-1}\mid z_{1:t-1},x) ds_{t-1}  ds_t
\end{equation}
is the predictive likelihood.
 
In general it is impossible to evaluate the likelihood analytically, but it is often possible to approximate the likelihood using a particle filter \citep{Pitt2012,DoucetJohansen2011,Fearnhead2007}, by replacing $p(s_{t-1}\mid z_{1:t-1},x)$ in \eqref{eq:5} with a particle approximation
\begin{equation}
  \label{eq:7}
\hat{p}(ds_{t-1}\mid z_{1:t-1},x) = \sum_{i=1}^N w_{t-1}^{(i)}\delta_{s^{(i)}_{t-1}}(ds_{t-1}),  
\end{equation}
where $\delta_s$ is a Dirac mass at $s$ and $s_{t-1}^{(i)}$ is the $i\mbox{th}$ particle at $t-1$ with normalized weight $w_{t-1}^{(i)}$. An approximation to the likelihood \eqref{eq:5} is then given by the particle approximation $\{w_{t-1}^{(i)},s_{t-1}^{(i)}\}_{i=1}^N$, 
\[
\hat{p}(z_t\mid z_{1:t-1},x) = \sum_{i=1}^N \frac{\tilde{w}_t^{(i)}}{N}, 
\]
where $\tilde{w}_t^{(i)}$ is the $i\mbox{th}$ unnormalized importance weight at $t$. Using the Metropolis--Hastings algorithm \eqref{eq:13} we can target the exact posterior density $\pi(x)$ as outlined in Section \ref{sec:part-marg-metr}. Full details are given in the Supplementary Material.

The particle Langevin algorithm requires an estimate of the gradient of the log posterior density, $\nabla \log \pi(x) = \nabla \log p(z_{1:T}\mid x) + \nabla \log p(x)$. Assuming $\nabla \log p(x)$ is known, it is then only necessary to approximate the score vector $\nabla \log p(z_{1:T}\mid x)$ with a particle approximation of Fisher's identity \citep{cappe2005inference}
\begin{eqnarray}
\label{eq:fisher}
  \nabla \log p(z_{1:T}\mid x)
  &=& E \left\{\nabla \log p(S_{1:T},z_{1:T}\mid x)\mid z_{1:T},x \right\},
\end{eqnarray}
which is the expectation, with respect to $p(s_{1:T}|z_{1:T},x)$, of 
\begin{equation*}
  \nabla\log p(s_{1:T},z_{1:T}\mid x) =  \sum_{t=1}^T \nabla \log g_x(z_t\mid s_t) + \nabla \log f_x(s_t\mid s_{t-1})  
\end{equation*}
over the path $s_{1:T}$, where we have used the notation $f_x(s_1\mid s_0)=\mu_x(s_1)$.

A particle approximation is obtained by running the particle filter for $t=1,\ldots,T$ and storing the particle path $s_{1:T}^{(i)}$. Using the method of \cite{Poyiadjis2011}, the score vector is approximated by
\[
 \nabla \log \hat{p}(z_{1:T}\mid x)= \sum_{i=1}^N w_T^{(i)}\nabla\log p(s_{1:T}^{(i)},z_{1:T}\mid x),
\]
where $w_T^{(i)}$ is an importance weight.

With this approach the variance of the score estimate increases quadratically with $T$. 
\cite{Poyiadjis2011} suggest an alternative particle filter algorithm, which avoids the quadratically increasing variance, but at the expense of a computational cost that is quadratic in the number of particles. Instead, we use the algorithm of \cite{Nemeth2013}, which uses kernel density estimation and Rao--Blackwellization to substantially reduce the Monte Carlo variance, but still maintains an algorithm whose computational cost is linear in the number of particles; see the Supplementary Material. Importantly, the theory presented in Section \ref{sec:optim-scal-results} is not tied to any particular method for approximating the gradient of the log posterior density, and as such, alternative approaches proposed by \cite{Poyiadjis2011}, \cite{Ionides2011}, \cite{Dahlin2014} and others, are equally supported by our theoretical results.

\subsection{Linear Gaussian Model}
\label{sec:line-gauss-model}

This section provides simulation results to support the theory
outlined in Section \ref{sec:optim-scal-results}. We show that, while our theory is based on the limit as the number of
parameters tends to infinity, it adequately describes the
empirical results for a target with a finite number of parameters. 

We start by considering the following linear Gaussian state-space model, where it is possible to estimate the posterior density $p(x\mid z_{1:T})$, and its gradient, exactly with the Kalman filter \citep{durbin2001time},
\begin{equation*}
  z_t = \alpha + \beta s_t + \tau_\epsilon \nu_t, \quad s_t  = \mu + \phi s_{t-1} + \sigma_\epsilon \eta_t,  \quad s_0 \sim \mathcal{N}\{\mu/(1-\phi),\sigma_\epsilon^2/(1-\phi^2)\},
\end{equation*}
where $\nu_t$ and $\eta_t$ are standard independent Gaussian random
variables and the vector of model parameters is $x = (\alpha,\beta,\tau_\epsilon,\mu,\phi,\sigma_\epsilon)^{\mathrm{T}}$.
We simulated 500 observations from the model with parameters $x =
(0.2,1,1,0.1,0.9,0.15)^{\mathrm{T}}$, and defined the following prior distributions:
\[\left(\begin{matrix}   \alpha\\  \beta \end{matrix} \right) \sim \mathcal{N}\left\{\left(\begin{matrix} 0.3 \\1.2 \end{matrix} \right),\tau_\epsilon^2 \left(\begin{matrix}
  0.25  & 0 \\
  0  & 0.5 
 \end{matrix} \right)\right\}, \quad
\tau_\epsilon^2 \sim \mathrm{Inverse~Gamma}(1,7/20),
\]
$\mu \sim \mathcal{N}(0.15,0.5)$, $(\phi+1)/2 \sim \mathrm{Beta}(20,5)$ and $\sigma_\epsilon^2 \sim \mathrm{Inverse~Gamma}(2,1/40)$.

The parameters $(\phi,\sigma_\epsilon,\tau_\epsilon)$ are constrained as $|\phi|<1$, $\sigma_\epsilon >0$ and $\tau_\epsilon
>0$. These parameters are transformed as $\tanh\phi$,
$\log\sigma_\epsilon$ and $\log\tau_\epsilon$ to implement the
particle Langevin and random-walk proposals on the unconstrained
space.

For this model it is possible to use the fully adapted particle filter \citep{Pitt1999a} using the optimal proposal for the latent states, which, compared to the simpler bootstrap filter \citep{Gordon1993}, reduces the variance in the posterior estimates.  The particle Langevin algorithm was run for 100,000 iterations with $\lambda^2 = \gamma_i^2 \times 1.125^2/6^{-1/3} \times \hat{V}$, where $\gamma = (0.25,0.5,0.75,1,1.25,1.5,1.75,2)$ and $\hat{V}$ is the empirical posterior covariance estimated from a pilot run. Estimates of the posterior density and gradient of the log posterior density were calculated using a particle filter with particles $N = (200,100,70,40,20,5,10,1)$; see the Supplementary Material.

Figure \ref{fig:scaling_vs_sigma} shows the efficiency of the particle Langevin algorithm for various scalings $\gamma$ and noise $\sigma^2$. Dividing the minimum effective sample size, taken over the parameters, by the computational time of the algorithm provides a practical measure of efficiency corresponding to the theoretical measure in \eqref{eqn.eff.CPU}.

\begin{figure}[t!]
  \centering
  \includegraphics[scale=0.5]{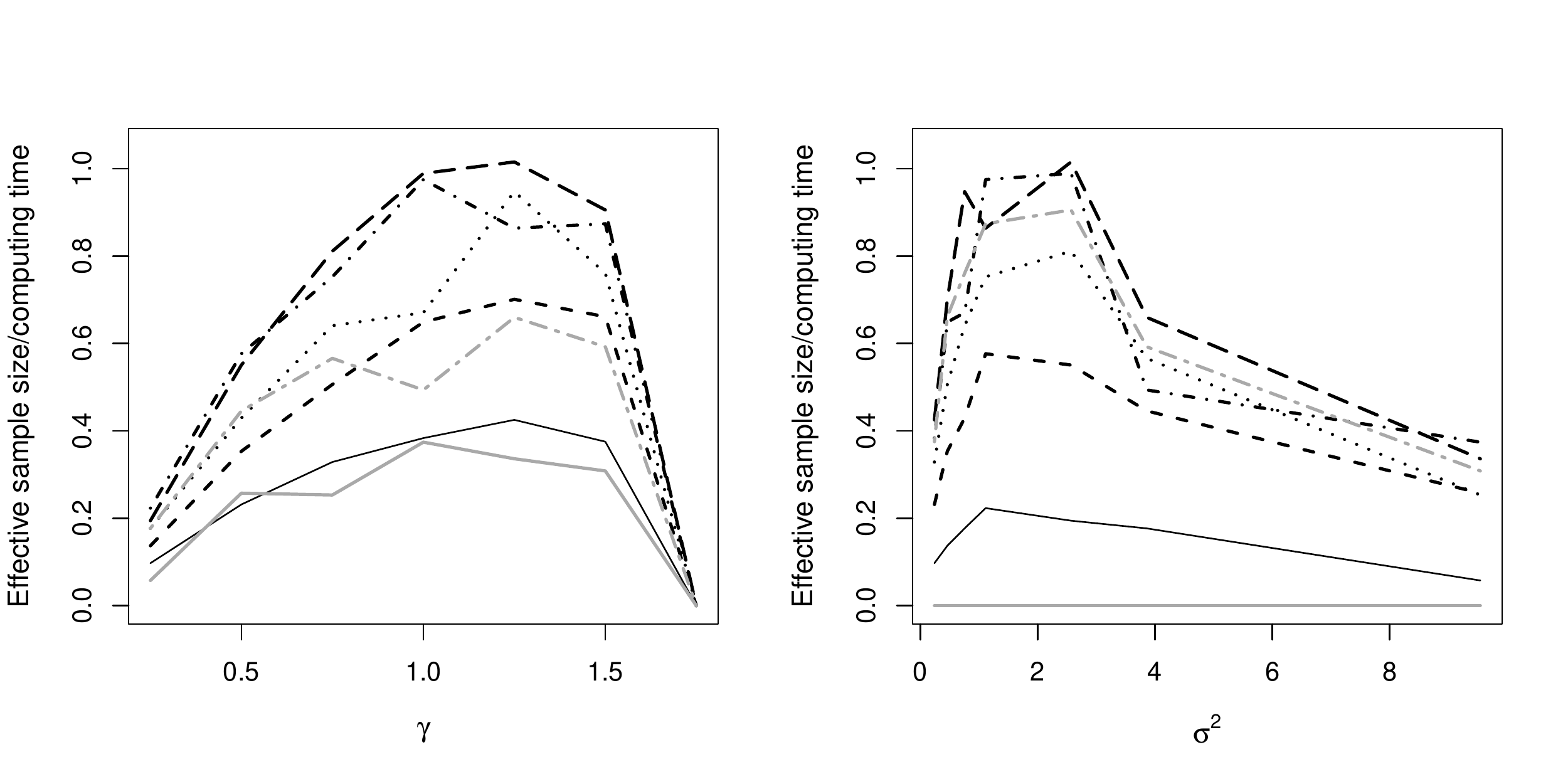}
  \caption{Empirical efficiency measured as the minimum effective sample size per computational second. The left panel gives the efficiency plotted against $\gamma$ for $(N=5~ \color{lightgray}\protect\rule[0.5ex]{0.6cm}{0.5pt}\color{black},N=10~\color{lightgray}\protect\rule[0.5ex]{0.1cm}{0.7pt}~\protect\rule[0.5ex]{0.4cm}{0.7pt}~\protect\rule[0.5ex]{0.1cm}{0.7pt}~\protect\rule[0.5ex]{0.4cm}{0.7pt} \color{black},N=20 ~\protect\rule[0.5ex]{0.2cm}{0.5pt}~\protect\rule[0.5ex]{0.2cm}{0.5pt}~\protect\rule[0.5ex]{0.2cm}{0.5pt} ,N=40~\boldsymbol{-} \cdot \boldsymbol{-} \cdot \boldsymbol{-},N=70~\cdot\cdot\cdot\cdot\cdot\cdot,N=100 ~\protect\rule[0.5ex]{0.1cm}{0.5pt}~\protect\rule[0.5ex]{0.1cm}{0.5pt}~\protect\rule[0.5ex]{0.1cm}{0.5pt}~\protect\rule[0.5ex]{0.1cm}{0.5pt}~\protect\rule[0.5ex]{0.1cm}{0.5pt},N=200~\protect\rule[0.5ex]{0.6cm}{0.5pt})$. The right panel gives the efficiency plotted against $\sigma^2$, estimated at the true parameter, for various scalings $(\gamma=0.25~\protect\rule[0.5ex]{0.6cm}{0.5pt},\gamma=0.5 ~\protect\rule[0.5ex]{0.1cm}{0.5pt}~\protect\rule[0.5ex]{0.1cm}{0.5pt}~\protect\rule[0.5ex]{0.1cm}{0.5pt}~\protect\rule[0.5ex]{0.1cm}{0.5pt}~\protect\rule[0.5ex]{0.1cm}{0.5pt},\gamma=0.75~\cdot\cdot\cdot\cdot\cdot\cdot,\gamma=1~\boldsymbol{-} \cdot \boldsymbol{-} \cdot \boldsymbol{-},\gamma=1.25 ~\protect\rule[0.5ex]{0.2cm}{0.5pt}~\protect\rule[0.5ex]{0.2cm}{0.5pt}~\protect\rule[0.5ex]{0.2cm}{0.5pt},\gamma=1.5~\color{lightgray}\protect\rule[0.5ex]{0.1cm}{0.7pt}~\protect\rule[0.5ex]{0.4cm}{0.7pt}~\protect\rule[0.5ex]{0.1cm}{0.7pt}~\protect\rule[0.5ex]{0.4cm}{0.7pt} \color{black},\gamma=1.75~ \color{lightgray}\protect\rule[0.5ex]{0.6cm}{0.5pt}\color{black})$.}
  \label{fig:scaling_vs_sigma}
\end{figure}

The left panel of Fig. \ref{fig:scaling_vs_sigma} shows that, initially,
increasing the number of particles leads to a more efficient sampler. However, beyond $20$ particles, the increase in computational 
cost outweighs the further improvement in mixing. Setting $N=20$ results in a noisy estimate of the posterior density
with $\sigma^2 \approx 2.6$, supporting Corollary
\ref{sec:opt-acc-rate}; the optimal acceptance rate was $19\%$, slightly above the theoretical optimum. Also, the insensitivity of the optimal scaling to the noise
variance, as shown in Figure \ref{fig:contours}, is seen here as the efficiency is maximized for $\gamma$ 
between $1$ and $1.5$ regardless of the number of particles; similarly
the right panel shows the same insensitivity of the optimal
variance to the scaling, with efficiency maximized for
$\sigma^2$ between $1.5$ and $3$, regardless of the scaling. Both of
these insensitivities are predicted by the theory established in Section \ref{sec:optim-scal-results}.

\subsection{Mixture model of autoregressive experts}
\label{sec:mixt-autor-experts}

We now use a real data example from \cite{Pitt2012} to illustrate the improvement of using the particle Langevin algorithm \eqref{eq:pmala} over the particle random-walk algorithm. Moreover, we show that estimating the gradient using the $\mathcal{O}(N)$ algorithm of \cite{Nemeth2013} is more efficient than the $\mathcal{O}(N^2)$ algorithm of \cite{Poyiadjis2011}.

This example uses a two-component mixture of experts model observed with noise. Each of the experts is represented by a first order autoregressive process, where the mixing of the experts is probabilistic rather than deterministic. The model is defined as
\begin{eqnarray}
  \label{eq:mixExperts}
  z_t = s_t + \tau_\epsilon \nu_t, \quad\quad  s_t = \psi_{J_t}+\phi_{J_t}s_{t-1}+\sigma_{J_t} \eta_t \quad J_t=1,2\\   \nonumber
  \mathrm{pr}(J_t=1\mid s_{t-1},s_{t-2}) = \frac{\exp\{\xi_1+\xi_2 s_{t-1}+\xi_3 (s_{t-1}-s_{t-2})\}}{1+\exp\{\xi_1+\xi_2 s_{t-1}+\xi_3 (s_{t-1}-s_{t-2})\}},
\end{eqnarray}
where $\nu_t$ and $\eta_t$ are standard independent Gaussian random variables and there are 10 model parameters $x=(\tau_\epsilon,\psi_1,\psi_2,\phi_1,\phi_2,\sigma_1,\sigma_2,\xi_1,\xi_2,\xi_3)^{\mathrm{T}}$.

\cite{Pitt2012} used the mixture of autoregressive experts to model the growth of US gross domestic product from the second quarter of 1984 to the third quarter of 2010. This model follows previous observations that economic cycles display nonlinear and non-Gaussian features \citep{hamilton89}. Including measurement noise in the model accounts for adjustments made to the data between the first and final release \citep{zellner92}. We impose the constraint $\psi_1(1-\phi_1)<\psi_2(1-\phi_2)$ to ensure that the mean of expert one is less than that of expert two. This implies that the first expert is identified as a low growth regime.

A particle filter approach to this problem is ideal if we assume measurement error in the data. Standard Markov chain Monte Carlo methods could be applied on this model where the latent states are sampled conditional on the parameters and vice-versa \citep{pittbook2010}. However, this would cause the sampler to mix slowly and would ultimately be less efficient than a particle filter implementation, whereby the latent states are integrated out. We compare the particle Langevin algorithm against the particle random-walk algorithm, as implemented in \cite{Pitt2012}. For both methods we implement a fully adapted particle filter, where the number of particles were tuned to give a variance of less than 3 for the log posterior density.

We ran the particle Markov chain Monte Carlo algorithm for 100,000 iterations, discarding the first half as burn-in. We compare the random-walk proposal, scaled as $\lambda^2 = 2.526^2/10 \times \hat{V}$, against the particle Langevin proposal, where $\lambda^2 = 1.125^2/10^{-1/3} \times \hat{V}$ and $\hat{V}$ is an estimate of the posterior covariance taken from a pilot run. A Gaussian prior density is assumed for $x$, where constrained parameters are transformed appropriately, and the hyper-parameters are given in \cite{Pitt2012}. Table \ref{tab:ess} gives a comparison of the proposals, including a particle Langevin algorithm using the $\mathcal{O}(N^2)$ gradient estimator of \cite{Poyiadjis2011}. The minimum and maximum effective sample size per computational minute, taken over 10 simulations, are reported.

\begin{table}[t!]
\caption{Empirical effective sample size per computational time}
\begin{tabular}{c|c|cccccccccc}
\multirow{2}{*}{Algorithm} & \multicolumn{11}{c}{Effective sample size per computation time} \\
 & & $\tau_\epsilon$ & $\psi_1$&  $\psi_2$& $\phi_1$ & $\phi_2$ & $\sigma_1$ & $\sigma_2$ & $\xi_1$ & $\xi_2$ & $\xi_3$ \\
\hline
\multirow{2}{*}{Particle random walk} & Min & 3.39 &2.96& 1.65& 2.15& 1.96& 1.38 &2.16 &2.54 &2.05 &2.09 \\
 & Max &   4.65 &3.68 &3.15 &3.68 &3.48 &2.82& 3.56& 4.20 &3.32& 3.71 \\
\multirow{2}{*}{Particle Langevin $\mathcal{O}(N)$} & Min &   4.11 &3.21 &4.77 &3.57 &4.18 &2.60 &3.68 &4.59 &3.32 &3.08 \\
 & Max &  5.12 &5.71 &6.37 &6.12 &6.43 &5.47 &6.22 &7.34 &7.02 &6.10 \\
\multirow{2}{*}{Poyiadjis $\mathcal{O}(N^2)$} & Min & 0.76 &0.60 &1.00 &0.96& 0.47& 0.33& 0.90 &1.06 &0.59 &0.59  \\
 & Max &  1.25 &1.19 &1.37 &1.35 &1.35 &1.17 &1.26 &1.82 &1.23& 0.96   \\
\end{tabular}
\label{tab:ess}
\end{table}

The results from the simulation study are summarized in Table \ref{tab:ess}. There is a significant improvement in terms of effective sample size when using the particle Langevin proposal compared to the random-walk proposal. The effective sample size of the \cite{Poyiadjis2011} $\mathcal{O}(N^2)$ algorithm is approximately equal to that of our particle Langevin algorithm, but when taking into account the computational cost, this proposal performs worse than the random-walk algorithm. Therefore, it is important to estimate the gradient of the log posterior density with the same computational cost used to estimate the log posterior density in order for the gradient information to be beneficial.

\section{Discussion}
\label{sec:discussion}

Our theory identifies three
distinct asymptotic regimes corresponding to three different ranges
of control over errors in the estimate of the gradient of the
log posterior density. We have shown that, if there is no control of these errors, then the particle Langevin algorithm is asymptotically no more
efficient than the particle random-walk Metropolis algorithm. By contrast, if there is
sufficient control, the particle
Langevin algorithm attains the same asymptotic advantage in
efficiency over the particle random-walk algorithm as the Metropolis
adjusted Langevin algorithm enjoys over its random-walk Metropolis counterpart.

In the preferred regime, and specifically when the estimate of the
log posterior density is generated by a particle filter, we identify an
optimal variance for the error in the log posterior density of approximately
$3.0$ and an optimal
acceptance rate of approximately $15\%$.  We also find that the optimal scaling is insensitive
to the choice of variance and vice-versa. In general, however, the regime is not known, and so,
 conditional on a fixed but arbitrary
number of particles, we provide a mechanism for tuning the scaling of
the proposal by aiming for a maximin acceptance rate that is
robust to the  regime. This ensures that the resulting algorithm will
achieve at least $90\%$ of the efficiency that it would were the regime known and
the best scaling for that regime chosen.

Our results are the latest in a number of results concerning at optimal
implementations of pseudo-marginal and particle Markov chain Monte Carlo algorithms
\cite[]{Pitt2012,Sherlock:2015,Doucet:2015}. Using similar techniques
to those in this article, \cite{Sherlock:2015} 
identified an optimal variance of $3.3$ for the target density in the particle random-walk Metropolis algorithm, and found that the optimal variance is insensitive to the scaling and
vice-versa. \cite{Doucet:2015} analyzed non-asymptotic bounds on the integrated
autocorrelation time and suggested that for any Metropolis--Hastings
algorithm, the optimal variance should be between $0.85$ and
$2.82$, also suggesting an insensitivity. In high dimensions,
one should tune to the variance suggested by the asymptotic theory, but 
empirical studies on both the particle random-walk and the particle Langevin algorithms have shown that in low dimensions the optimal variance
is typically less than $3$.
Given that all our assumptions hold at best approximately in practice,
we would recommend two possible tuning strategies.
The first strategy is to evaluate the variance at several points in the main posterior mass
and ensure that the largest of these is slightly lower than $3$;
this is justified both because of the above mentioned empirical
findings and because we expect the optimal variance to be
smaller in regimes (1) and (2). Then
tune the scaling to achieve the acceptance rate given by Figure
\ref{fig:minimax}. For the second strategy,  start with a sensible scaling, find the number of
particles that optimizes the overall efficiency, for example effective sample size per second, then with
this number of particles, find
the scaling which optimizes efficiency, for example effective sample size.

\section*{Acknowledgements}
\label{sec:acknowledgements}

The authors are grateful to the editor, associate editor and referees whose comments and advice have greatly improved this paper. We thank Jeff Rosenthal for providing the Mathematica scripts from \cite{Roberts1998}. This research was supported by the EPSRC i-like grant and the STOR-i Centre for Doctoral Training.

\bibliography{library,library1}
\bibliographystyle{apalike}

\pagebreak
\begin{center}
\textbf{\large Supplemental Materials}
\end{center}
\setcounter{equation}{0}
\setcounter{section}{0}
\setcounter{figure}{0}
\setcounter{table}{0}
\setcounter{page}{1}
\makeatletter
\renewcommand{\theequation}{S\arabic{equation}}
\renewcommand{\thefigure}{S\arabic{figure}}
\renewcommand{\bibnumfmt}[1]{[S#1]}
\renewcommand{\citenumfont}[1]{S#1}

\section{Proof of Theorem \ref{thrm.main}} 
\label{sec:proof-theorem-1}

The proposal density for any given component is
\[
q(x,y)=\left(\lambda_n^2 2\pi\right)^{-1/2}
\exp\left\{-\frac{1}{2\lambda_n^2}\left(y-x-\frac{1}{2}\lambda_n^2g'(x)\right)^2\right\}.
\]
Define
\[
R(x_i^n,Y^n_i)=\log\left\{\frac{f(Y^n_i)q(Y^n_i,x_i^n)}{f(x_i^n)q((x_i^n,Y^n_i))}\right\}
\]
and
$T_n(X^n,Y^n)=\sum_{i=1}^n R(X^n_i,Y^n_i)$, 
so that the acceptance probability is
\begin{equation}
\label{eqn.acc.prop.n}
\alpha_n(x^n,w^n;Y^n,V^n)=1\wedge \exp\left\{V^n-w^n+T_n(x^n,Y^n)\right\},
\end{equation}
where $V^n$ and $W^n$ are given in \eqref{eqn.SARa} and \eqref{eqn.targ.W}. Note, we use the notation $a \wedge b = \mbox{min} (a,b)$.
\begin{proposition}
\label{prop.mathematica}
\begin{eqnarray}
\nonumber
R(x_i^n,Y^n_i)
&=&C_3(x_i^n,Z_i)\lambda_n^3+C_4(x_i^n,Z_i)\lambda_n^4+C_5(x_i^n,Z_i)\lambda_n^5\\
&&+C_6(x_i^n,Z_i)\lambda_n^6+C_7(x_i^n,Z_i,\lambda_n),
\label{eqn.ri.def}
\end{eqnarray}
where 
\[
C_3(x_i^n,Z_i)=-\frac{1}{12}\ell^3\left\{3Z_ig'(x_i^n)g''(x_i^n)+Z_i^3g'''(x_i^n)\right\},
\]
and where $C_4(x_i^n,Z_i)$, $C_5(x_i^n,Z_i)$ and $C_6(x_i^n,Z_i)$ are also polynomials in $Z_i$ and the derivatives of $g$. Furthermore, if $E_Z$ denotes expectation with $Z\sim \mathcal{N}(0,1)$ and $E_X$ denotes expectation with $X$ having the density $f(\cdot)$, then
\begin{equation}
\label{eqn.expect.zero}
E_X[\Expects{Z}{C_3(X,Z)}]
=
E_X[\Expects{Z}{C_4(X,Z)}]
=
E_X[\Expects{Z}{C_5(X,Z)}]=0,
\end{equation}
whereas
\begin{equation}
\label{eqn.expect.nonzero}
E_X[\Expects{Z}{C_3(X,Z)^2}]=\ell^6K^2=-2E_X[\Expects{Z}{C_6(X,Z)}]>0.
\end{equation}
Also 
\begin{equation}
\label{eqn.finite.vars}
\Var{C_4(X,Z)}<\infty,~\Var{C_5(X,Z)}<\infty, \Var{C_6(X,Z)}<\infty,
\end{equation}
where $\mathrm{var}$ denotes variance over both $Z$ and $X$.
Finally
\begin{equation}
\label{eqn.remainder}
\Expects{Z}{\Abs{C_7(x_i^n,Z_i,\lambda_n)}}\le n^{-7/6}p(x_i^n),
\end{equation}
where $p(x)$ is a polynomial in $x$.
\end{proposition}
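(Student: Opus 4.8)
The plan is to reduce $R$ to a closed form and then Taylor-expand it. Since $\log f = g$ and the forward increment satisfies $Y_i^n - x_i^n = \lambda_n Z_i + \tfrac12\lambda_n^2 g'(x_i^n)$, the two Gaussian proposal densities cancel cleanly: writing $x=x_i^n$, $y=Y_i^n$ and $\Delta=y-x$, the forward exponent is exactly $-Z_i^2/2$, and a short computation collapses the density ratio to give the exact identity $R(x,y)=\{g(y)-g(x)\}-\tfrac12\lambda_n Z_i\{g'(x)+g'(y)\}-\tfrac18\lambda_n^2\{g'(x)+g'(y)\}^2$. I would take this as the starting point.

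First I would substitute $\Delta=\lambda_n Z_i+\tfrac12\lambda_n^2 g'(x)$, expand $g(y)$ and $g'(y)$ about $x$ by Taylor's theorem, and group the result in powers of $n^{-1/6}$. The Metropolis-adjusted Langevin drift is chosen precisely so that the coefficients of $n^0$, $n^{-1/6}$ and $n^{-1/3}$ vanish identically in $(x,Z_i)$; the first surviving contribution is at order $n^{-1/2}$, and a direct (computer-algebra assisted, as the proposition label suggests) calculation gives its coefficient as $C_3=-\tfrac{\ell^3}{12}\{3Z_i g'(x)g''(x)+Z_i^3 g'''(x)\}$, as claimed. Continuing to orders $n^{-2/3}$, $n^{-5/6}$, $n^{-1}$ defines $C_4,C_5,C_6$, while collecting everything of order $n^{-7/6}$ and smaller, including the Lagrange remainders and the high-order pieces of the lower $\Delta^m$ terms, defines $C_7(x,Z_i,\lambda_n)$. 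A useful structural observation is that, because $\Delta$ carries $\lambda_n Z_i$ at order $\lambda_n$ and $\tfrac12\lambda_n^2 g'$ at order $\lambda_n^2$, every monomial in the coefficient of $n^{-k/6}$ carries $Z_i$ to a power of the same parity as $k$. Hence $C_3$ and $C_5$ are odd in $Z_i$, so $E_Z[C_3]=E_Z[C_5]=0$ pointwise, delivering two of the three equalities in \eqref{eqn.expect.zero}.

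The remaining identities are the substantive ones. For $E_X E_Z[C_4]=0$ and for $E_X E_Z[C_3^2]=-2E_X E_Z[C_6]=\ell^6K^2$, the plan is to first take the Gaussian expectation $E_Z$ (using $E[Z^{2m}]$) to obtain expressions in the derivatives of $g$, and then integrate against $f$ by parts, repeatedly exploiting $f'=g'f$ and discarding boundary terms, which vanish by the decay implied by \eqref{eqn.finite.moments}. For instance, differentiating $(g'')^2 g' f$ and integrating shows $E_X\{(g')^2(g'')^2+2g'g''g'''+(g'')^3\}=0$, which collapses $E_X E_Z[C_3^2]$ to $\tfrac{\ell^6}{48}E_X\{5g'''(X)^2-3g''(X)^3\}=\ell^6K^2$, matching \eqref{eq:K}; an analogous reduction gives $-2E_X E_Z[C_6]=\ell^6K^2$, and positivity of $K^2$ follows from the same identity. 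The parallel integration by parts annihilates $E_X E_Z[C_4]$.

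Finally I would dispatch \eqref{eqn.finite.vars} and \eqref{eqn.remainder}. Since $C_4,C_5,C_6$ are polynomials in $Z_i$ whose coefficients are products of derivatives $g^{(i)}(X)$, each bounded in modulus by the polynomial $M_0(X)$ via \eqref{eqn.poly.mom.g}, their variances are finite on combining the finite Gaussian moments of $Z_i$ with the polynomial-moment assumption \eqref{eqn.finite.moments}. The remainder bound \eqref{eqn.remainder} is where the real work lies and is the main obstacle: the Taylor remainders involve derivatives (available up to order eight) evaluated at an intermediate point $\xi$ between $x$ and $y$, and one must control $M_0(\xi)$ uniformly after a random, $Z_i$-dependent displacement. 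The plan is to use $|\xi-x|\le|\Delta|\le\lambda_n|Z_i|+\tfrac12\lambda_n^2 M_0(x)$ together with the polynomial form of $M_0$ to bound $M_0(\xi)$ by a polynomial in $x$ and $|Z_i|$ for all $n$ large enough that $\lambda_n\le1$. Every term in $C_7$ then carries at least $\lambda_n^7=\ell^7 n^{-7/6}$, and taking $E_Z$, finite for all moments of $|Z_i|$, yields a polynomial in $x$ times $n^{-7/6}$, giving the required $n^{-7/6}p(x)$ after absorbing $\ell$ and the residual $\lambda_n$ factors into $p$.
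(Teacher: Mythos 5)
Your proposal is correct and follows essentially the same route as the paper's proof: Taylor expansion of $R$ collected in powers of $n^{-1/6}$ (computer-algebra assisted), oddness in $Z_i$ to annihilate $E_Z\{C_3\}$ and $E_Z\{C_5\}$, integration by parts against the density $e^{g(x)}$ to obtain $E\{C_4\}=0$ and $E\{C_3^2\}=-2E\{C_6\}=\ell^6K^2$, polynomial forms plus the moment assumptions for the finite variances, and a Lagrange-remainder-plus-polynomial bound followed by expectation over $Z_i$ for $C_7$. The differences are purely presentational: you make explicit the exact cancellation identity for the log Metropolis--Hastings ratio, the parity structure linking powers of $Z_i$ to powers of $\lambda_n$, and a worked integration by parts verifying the $K^2$ identity, all of which the paper delegates to MATHEMATICA and ``inspection''.
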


\begin{proof}
As in \cite{Roberts1998}, equation \eqref{eqn.ri.def} follows by
Taylor expansion of $g$ and its derivatives using MATHEMATICA
\citep{Wolfram} and collecting terms in powers of $n$. Straightforward
inspection shows that $C_3$ has the claimed form and that $C_4,~C_5$
and $C_6$ are also polynomials in $Z$ and the derivatives of $g$, as
claimed. All terms in both  $C_3$ and $C_5$ contain odd powers of $Z$
and so their expectations are zero. Equation
\eqref{eqn.expect.nonzero}, and the fact that the expectation of $C_4$
is zero, follows after integrating by parts where expectations of
products of the derivatives of $g$ are being taken with respect to the
density $e^{g(x)}$. Thus the equivalent form of $K$ defined in
\eqref{eq:K} is real and positive. Equation \eqref{eqn.finite.vars} follows from the polynomial form for $C_4,~C_5$ and $C_6$ and assumptions \eqref{eqn.poly.mom.g} and \eqref{eqn.finite.moments}.

Using the remainder formula of the Taylor series expansion we may derive the bound
\[
\Abs{C_7(x_i^n,Z_i,\lambda_n)}\le n^{-7/6}p_*(x_i^n,w_i),
\]
for some polynomial $p_*$, with $\Abs{w_i}\le \Abs{Z_i}$. But for any
polynomial $p_*(x,w)\le A(1+x^N)(1+w^N)$, with a sufficiently large $A$
and for a sufficiently large even integer $N$, \eqref{eqn.remainder} follows with
$p(x)=A\Expect{(1+Z^N)}(1+x^N)$. 
\end{proof}

Proposition \ref{prop.mathematica} allows us to find the limiting distribution of one of the key terms in the acceptance probability of the algorithm when the Markov chain on $(X,W)$ is stationary, \eqref{eqn.acc.prop.n}.

\begin{lemma}
\label{lemma.CLT}
\[
T_n(X^n,Y^n)\Rightarrow T\sim \mathcal{N}\left(-\frac{1}{2}\ell^6K^2,\ell^6K^2\right).
\]
\end{lemma}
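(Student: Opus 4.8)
The plan is to regard $T_n(X^n,Y^n)=\sum_{i=1}^n R(X_i^n,Y_i^n)$ as a normalised sum of independent and identically distributed contributions and apply a central limit theorem. Independence across $i$ is immediate, since $X_i^n\sim f$ and $Z_i\sim\mathcal{N}(0,1)$ are independent over $i$ and $Y_i^n$ depends only on $(X_i^n,Z_i)$; thus, for each fixed $n$, the summands $R(X_i^n,Y_i^n)$ are i.i.d., and $T_n$ is a row sum of a triangular array, the dependence on $n$ entering through $\lambda_n=\ell n^{-1/6}$ as in \eqref{eqn.define.lambda}. Inserting the expansion \eqref{eqn.ri.def} of Proposition \ref{prop.mathematica}, I would write $T_n=M_n+D_n+E_n+\rho_n$, where $M_n=\sum_{i}C_3(X_i^n,Z_i)\lambda_n^3$ is the leading stochastic term, $D_n=\sum_{i}C_6(X_i^n,Z_i)\lambda_n^6$ supplies the deterministic drift, $E_n=\sum_{i}\{C_4(X_i^n,Z_i)\lambda_n^4+C_5(X_i^n,Z_i)\lambda_n^5\}$ collects the intermediate orders, and $\rho_n=\sum_i C_7(X_i^n,Z_i,\lambda_n)$ is the remainder, and then treat these four pieces separately.

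For the leading term $M_n$, the summands have mean zero by \eqref{eqn.expect.zero}, so $M_n$ is a centred i.i.d.\ sum. The normalisation $\lambda_n=\ell n^{-1/6}$ is precisely the one for which this sum has a non-degenerate limit, and summing the per-component identity \eqref{eqn.expect.nonzero} over the $n$ coordinates identifies the limiting variance as $\Var{M_n}\to\ell^6K^2$. To upgrade this to weak convergence I would verify a Lyapunov condition: because $C_3$ is a polynomial in $Z_i$ and the derivatives of $g$, the polynomial bound \eqref{eqn.poly.mom.g} together with the finite-moment assumption \eqref{eqn.finite.moments} ensures $\Expect{\Abs{C_3(X,Z)}^{2+\delta}}<\infty$ for, say, $\delta=1$, and the normalised third-absolute-moment ratio is then $O(n^{-1/2})\to0$. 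Hence $M_n\Rightarrow\mathcal{N}(0,\ell^6K^2)$.

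For the drift, \eqref{eqn.expect.nonzero} gives $\Expect{D_n}\to-\tfrac12\ell^6K^2$, while \eqref{eqn.finite.vars} yields $\Var{D_n}=O(n^{-1})\to0$; thus $D_n\cip-\tfrac12\ell^6K^2$, a constant. The intermediate term $E_n$ has mean zero by \eqref{eqn.expect.zero} and, by \eqref{eqn.finite.vars}, variance $O(n^{-1/3})\to0$, so $E_n\cip0$. The remainder is handled in $L^1$ using \eqref{eqn.remainder}: $\Expect{\Abs{\rho_n}}\le\sum_i\Expect{\Abs{C_7(X_i^n,Z_i,\lambda_n)}}\le n\cdot n^{-7/6}\Expect{p(X)}=O(n^{-1/6})\to0$, so $\rho_n\cip0$. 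Writing $T_n=M_n+(D_n+E_n+\rho_n)$, the bracketed sum converges in probability to the constant $-\tfrac12\ell^6K^2$, and Slutsky's theorem then gives $T_n\Rightarrow\mathcal{N}(-\tfrac12\ell^6K^2,\ell^6K^2)$, as claimed.

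The main obstacle is the uniform control of moments needed for the negligibility and Lyapunov steps. Everything hinges on the fact that $C_3,\dots,C_6$ are polynomials in $Z$ and the derivatives of $g$ with all moments finite under \eqref{eqn.poly.mom.g}--\eqref{eqn.finite.moments}, and that the Taylor remainder admits the summable bound \eqref{eqn.remainder}: the per-component factor $n^{-7/6}$, rather than $n^{-1}$, is exactly what makes $\rho_n$ vanish after summing $n$ terms. Once these moment bounds are in place the decomposition, variance bookkeeping and Slutsky argument are routine; the delicate point is simply ensuring that each subleading contribution is of strictly smaller order in $n$ than the $O(1)$ limiting variance of $M_n$.
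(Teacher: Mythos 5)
Your proposal is correct and takes essentially the same route as the paper's proof: both rest on the expansion in Proposition \ref{prop.mathematica}, kill the remainder $\sum_i C_7$ in $L^1$ via \eqref{eqn.remainder} and Markov's inequality, use the moment identities \eqref{eqn.expect.zero}--\eqref{eqn.finite.vars} to identify the limiting mean $-\tfrac12\ell^6K^2$ and variance $\ell^6K^2$, and finish with the central limit theorem and Slutsky. The only difference is bookkeeping: the paper computes the mean and variance of the whole truncated sum $T_n'$ and invokes the CLT for it in one step, whereas you apply the classical i.i.d.\ CLT only to the leading $C_3$ sum and dispatch the $C_4,C_5,C_6$ contributions by Chebyshev as constants-in-probability; this is marginally tidier, since the summands of $T_n'$ depend on $n$ and strictly form a triangular array, but it is the same argument in substance.
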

\begin{proof}
First note that, by \eqref{eqn.remainder},
\[
\Expect{\Abs{\sum_{i=1}^nC_7(X_i^n,Z_i,\lambda_n)}}
\le
\sum_{i=1}^n\Expect{\Abs{C_7(X_i^n,Z_i,\lambda_n)}}
\le n^{-1/6}\Expect{p(X)}.
\]
However $\Expect{p(X)}<\infty$ by \eqref{eqn.finite.moments} so, by Markov's inequality, $\sum_{i=1}^nC_7(X_i^n,Z_i,\lambda_n) \rightarrow 0$ in probability as $n \rightarrow \infty$. By Slutsky's Theorem it is therefore sufficient to show that in probability $T'_n \rightarrow T\sim \mathcal{N}\left(-\frac{1}{2}\ell^6K^2,\ell^6K^2\right)$, where we define
\[
T_n'=\sum_{i=1}^n\left\{C_3(X_i^n,Z_i)n^{-1/2}+C_4(X_i^n,Z_i)n^{-2/3}+C_5(X_i^n,Z_i)n^{-5/6}\\
+C_6(X_i^n,Z_i)n^{-1}\right\}.
\]
Combining \eqref{eqn.expect.zero}, \eqref{eqn.expect.nonzero} and \eqref{eqn.finite.vars}
\begin{eqnarray*}
E(T_n')&=&-\frac{1}{2}\ell^6K^2,\\
\mathrm{var}(T_n')&=&\Var{C_3(X,Z)}+O(n^{-1/6})\rightarrow\ell^6K^2
\end{eqnarray*}
in probability.
Moreover $T_n'$ is the sum of $n$ independent and identically
distributed terms, so the result follows by the central limit
theorem. 
\end{proof}

Thus 
\begin{equation}
T_n+V^n-W^n\Rightarrow \mathcal{N}\left(B-\frac{1}{2}\ell^6K^2,\ell^6K^2\right).
\end{equation}
Now  if $U\sim \mathcal{N}(a,b^2)$ then $E(1 \wedge e^{U})=\Phi(a/b)+e^{a+b^2/2}\Phi(-b-a/b)$ (e.g., \cite{RobertsGelmanGilks1997}). Since $\alpha_n=E(1 \wedge e^{T_n+B^n})$, we may apply the Bounded Convergence Theorem to see that
\[
\lim_{n\rightarrow \infty}\alpha_n= E(1 \wedge e^{T+B})
= 2 E \left\{
\Phi\left(\frac{B}{\ell^3K}-\frac{1}{2}\ell^3K\right)\right\}.
\]
proving the first part of Theorem \ref{thrm.main}. 

To prove the second result, we first note the following 
\begin{proposition}
\label{prop.SqDist}
\begin{eqnarray*}
\lim_{n\rightarrow\infty}E\left(n^{-2/3}\Norm{Y^n-X^n}^2\right)&=&\ell^2\\
\lim_{n\rightarrow\infty}\Expect{\left(n^{-2/3}\Norm{Y^n-X^n}^2-\ell^2\right)^2}&=&0.
\end{eqnarray*}
\end{proposition}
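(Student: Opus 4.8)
The plan is to expand the squared jump distance component-wise and reduce the claim to a second-moment (weak law of large numbers) computation for three separate sums. Since $Y_i^n-X_i^n=\lambda_nZ_i+\frac{1}{2}\lambda_n^2g'(X_i^n)$ with $\lambda_n=\ell n^{-1/6}$, squaring gives $(Y_i^n-X_i^n)^2=\lambda_n^2Z_i^2+\lambda_n^3Z_ig'(X_i^n)+\frac{1}{4}\lambda_n^4g'(X_i^n)^2$. Multiplying by $n^{-2/3}$, summing over $i$, and using the exact identity $n^{-2/3}\lambda_n^2=\ell^2/n$, I would write
\[
n^{-2/3}\Norm{Y^n-X^n}^2=\frac{\ell^2}{n}\sum_{i=1}^nZ_i^2+\frac{\ell^3}{n^{7/6}}\sum_{i=1}^nZ_ig'(X_i^n)+\frac{\ell^4}{4n^{4/3}}\sum_{i=1}^ng'(X_i^n)^2.
\]
The strategy is then to show that the three terms converge in $L^2$ to $\ell^2$, $0$, and $0$ respectively, and to combine them via the triangle inequality for the $L^2$ norm. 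Both assertions of the proposition follow at once from $L^2$ convergence of the whole expression to the constant $\ell^2$: the first is convergence of the mean, and the second is precisely the statement that the $L^2$ distance to $\ell^2$ vanishes.

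For the individual terms I would use that, at stationarity under the product target \eqref{eqn.product.form}, the $X_i^n$ are independent and identically distributed with density $f$, while the $Z_i$ are independent standard normals independent of the $X_i^n$. The first term is $\ell^2$ times the sample average of the $Z_i^2$, so it has mean $\ell^2$ and variance $2\ell^4/n\to0$. The second term has mean zero, by independence and $\Expect{Z}=0$, and variance of order $n\cdot n^{-7/3}=n^{-4/3}\to0$. The third term has mean of order $n\cdot n^{-4/3}=n^{-1/3}\to0$ and variance of order $n\cdot n^{-8/3}=n^{-5/3}\to0$. Each term therefore converges in $L^2$ to its stated limit, and the triangle inequality yields the result, with only the $\lambda_n^2Z_i^2$ term surviving in the limiting mean.

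The one technical point to verify is the finiteness of the moments invoked above: $\Var{Z^2}<\infty$ is immediate, whereas $\Var{Zg'(X)}<\infty$ and $\Var{g'(X)^2}<\infty$ require $\Expects{f}{g'(X)^2}<\infty$ and $\Expects{f}{g'(X)^4}<\infty$. These follow directly from the polynomial bound $|g'(x)|\le M_0(x)$ in \eqref{eqn.poly.mom.g} together with the finiteness of all moments of $f$ in \eqref{eqn.finite.moments}. I do not anticipate any genuine obstacle here; the argument is a routine variance calculation over independent summands, and the only care required is in tracking the powers of $n$ to confirm that the cross and squared-drift contributions are asymptotically negligible in $L^2$ while the diffusive term $\ell^2 n^{-1}\sum_iZ_i^2$ alone determines the limit.
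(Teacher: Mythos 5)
Your proof is correct and takes essentially the same route as the paper: a direct first- and second-moment computation exploiting the i.i.d. structure of the components, the independence of $Z_i$ and $X_i^n$, and the moment bounds from \eqref{eqn.poly.mom.g} and \eqref{eqn.finite.moments}. The only difference is organizational — you split the expansion into three separate sums and invoke the $L^2$ triangle inequality, whereas the paper computes the mean and variance of $n^{-2/3}\Norm{Y^n-X^n}^2$ in one pass (writing the variance of the sum as $n$ times the variance of a single term) and then combines variance plus squared bias; the two arguments are mathematically equivalent.
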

\begin{proof} To simplify the exposition we suppress the superscripts, $n$, in $X^n$ and $Y^n$. Firstly,
\begin{eqnarray*}
n^{-2/3}E\left(\Norm{Y-X}^2\right)&=&n^{1/3}\Expect{(Y_1-X_1)^2}\\
&=&
n^{1/3}E\left[\left\{\ell n^{-1/6}Z_1+\frac{1}{2}\ell^2n^{-1/3}g'(X_1)\right\}^2\right]\\
&=&\ell^2+\frac{1}{4}\ell^4n^{-1/3}\Expect{g'(X_1)^2}.
\end{eqnarray*}
By assumptions \eqref{eqn.poly.mom.g} and \eqref{eqn.finite.moments},
$\Expect{g'(X_1)^2}<\infty$ and the first result follows. Also as
$n\rightarrow \infty$,
\begin{eqnarray*}
\mathrm{var}\left(n^{-2/3} \Norm{Y-X}^2-\ell^2\right)&=&n^{-4/3}\Var{\sum_{i=1}^n(Y_1-X_1)^2}\\
&=&n^{-1/3}\mathrm{var}\left[\left\{\ell
    n^{-1/6}Z_1+\frac{1}{2}\ell^2n^{-1/3}g'(X_1)\right\}^2\right]
\rightarrow 0,
\end{eqnarray*}
by \eqref{eqn.poly.mom.g} and \eqref{eqn.finite.moments}. This,
combined with the first part of this proposition proves the second
part. 
\end{proof}

To complete the proof, we abbreviate $\alpha_n(X^n,W^n;Y^n,V^n)$ to $A_n$.
Now
\[
\Abs{E\left(n^{-2/3}\Norm{Y^n-X^n}^2A_n\right)-\ell^2\alphabar}
\le \Abs{\Expect{\left(n^{-2/3}\Norm{Y^n-X^n}^2-\ell^2\right)A_n}}
+\Abs{\Expect{\ell^2\left(A_n-\alphabar\right)}}.
\]
The second term on the right hand side converges to zero by the first part of Theorem \ref{thrm.main}. The Cauchy--Schwarz inequality bounds the first term on the right hand side by
\[
\Expect{\left(n^{-2/3}\Norm{Y^n-X^n}^2-\ell^2\right)^2}^{1/2}E(A_n^2)^{1/2}.
\]
The first term converges to zero by Proposition \ref{prop.SqDist} and the second term is bounded.

\section{Proof of Corollary \ref{sec:opt-acc-rate}}
\label{sec:proof-corollary1}

First note that for some $Z\sim N(0,1)$ that is independent of $B$,
\[
\Phi\left(\frac{B}{\ell^3K}-\frac{\ell^3K}{2}\right)
=
\mathbb{P}\left(\ell^3KZ-B\le -\frac{1}{2}\ell^6K^2\right)
=\Phi\left\{-\left(\frac{1}{2}\ell^6K^2+\sigma^2\right)\left(\ell^6K^2+2\sigma^2\right)^{-1/2}\right\}.
\]
So
\[
\alpha(\ell,\sigma^2)=2\Phi\left\{-\frac{1}{2}(\ell^6K^2+2\sigma^2)^{-1/2}\right\}.
\]
Set $a^2=K^2\ell^6$ and $b^2=2\sigma^2$ then
\[
\mbox{Eff}(\ell,\sigma^2)\propto a^{2/3}b^2\Phi\left\{-\frac{1}{2}(a^2+b^2)^{-1/2}\right\}.
\]
Given $a^2+b^2$, $a^{2/3}b^2$ is maximized when $b^2=3a^2$, at which
point the efficiency is proportional to
$a^{8/3}\Phi\left(-a\right)$. Numerical optimization shows that this
function is maximized at $\hat{a}\approx 1.423$, and thus the optimal
acceptance rate is $\hat{\alpha}=2\Phi(-\hat{a})\approx 15.47\%$. As a
result, the optimal scaling and variance are $\ell_{\mathrm{opt}}\approx 1.125K^{-1/3}$ and $\sigma^2_{\mathrm{opt}} \approx 3.038$, as given in the statement.

\section{Proof of Theorem \ref{sec:scaling-grads}}
\label{sec:proof-theorem2}

For the sake of brevity, we shall prove statements (1), (2) and (3)
of Theorem \ref{sec:scaling-grads} together rather than
separately. Throughout the proof, therefore, the superscript $*$ will
be used to denote a superscript that could be replaced by $(1)$, $(2)$ or $(3)$ according to the case in the statement of Theorem \ref{sec:scaling-grads} that is being considered. 

For $*\in\{(1),(2),(3)\}$ let $R^*(x_i^n,Y^n_i)$ be the log Metropolis--Hastings ratio where the proposal, $q^*(x_i^n,Y_i^n)$, is the particle Langevin proposal given in
\eqref{eqn.define.Y.bias}, and let 
\begin{equation}
\label{eqn.define.Tstar}
T^*_n(X^n,Y^n)=\sum_{i=1}^n
R^*(X_i^n,Y_i^n).
\end{equation} 
We also define $U_i=(U_{x_i},U_{y_i})$, to be the vector of (zero mean and unit variance) noise terms in the $i\mbox{th}$ component of the gradient estimate used, respectively, in the particle Langevin proposal from the current value and the proposal for the corresponding reverse move from the proposed value.

The proof commences with an analogous result to Proposition
\ref{prop.mathematica} from Section \ref{sec:proof-theorem-1}.

\begin{proposition}
\label{prop.mathematica.two}
Let $R(x_i^n,Y_i)$ 
be the idealized particle Langevin algorithm term from
Proposition \ref{prop.mathematica}. Then for (*) in (1), (2) or (3)
\begin{eqnarray}
\nonumber
R^*(x_i^n,Y_i^n)&=&R(x_i^n,Y_i^n)+C_{1,1}(x_i^n,U_i,Z_i)\lambda_nn^{-\kappa}+C_{2,1}(x_i^n,U_i,Z_i)\lambda_n^2n^{-\kappa}+C_{3,1}(x_i^n,U_i,Z_i)\lambda_n^3n^{-\kappa}\\
&&+C_{4,1}(x_i^n,U_i,Z_i)\lambda_n^4n^{-\kappa}+C_{2,2}(x_i^n,U_i,Z_i)\lambda_n^2n^{-2\kappa}+C_{r}(x_i^n,U_i,Z_i,n),
\label{eqn.Rstar.expand}
\end{eqnarray}
Here 
\begin{eqnarray*}
C_{1,1}=- \frac{1}{2}\ell\tau U_{x_i^n}Z_i - \frac{1}{2}\ell\tau U_{y_i^n}Z_i - b(x_i^n)\ell Z_i,
\end{eqnarray*}
and $C_{2,1}C_{3,1},C_{4,1}$ and $C_{2,2}$ are all polynomials in $Z$,
in derivatives of $g(x)$, and in $b(x)$ and its derivatives.
Furthermore, let $E_{X,U,Z}$ denote expectation
with respect to $X$ having the density $f(\cdot)$,  $Z\sim \mathcal{N}(0,1)$, and with respect to $U_x$ and $U_y$ with
$E(U_x)=E(U_y)=0$ and $\mathrm{var}(U_x)=\mathrm{var}(U_y)=1$. Then
\begin{eqnarray}
{\Expects{X,U,Z}{C_{1,1}(X,U,Z)}}
&=&
{\Expects{X,U,Z}{C_{2,1}(X,U,Z)}}
=
{\Expects{X,U,Z}{C_{3,1}(X,U,Z)}}=0
\label{eqn.expect.zero,bias}
\end{eqnarray}
\begin{eqnarray}
  \label{eqn.equal.Kstar}
{\Expects{X,U,Z}{C_{1,1}(X,U,Z)^2}} &=&
-2{\Expects{X,U,Z}{C_{2,2}(X,U,Z)}}=K_*^2\\
  \label{eqn.equal.Kstarstar}
{\Expects{X,U,Z}{C_{3}(X,U,Z)C_{1,1}(X,U,Z)}} &=&
-{\Expects{X,U,Z}{C_{4,1}(X,U,Z)}}=K_{**}.
\end{eqnarray}
where $K_*$ and $K_{**}$ are defined in \eqref{eqn.defn.Kstar} and
\eqref{eqn.defn.Kstarstar}. Also 
\begin{eqnarray}
\nonumber
\Var{C_{1,1}(X,U,Z)}&<&\infty,~\Var{C_{2,1}(X,U,Z)}<\infty,\Var{C_{3,1}(X,U,Z)}<\infty,\\
\Var{C_{4,1}(X,U,Z)}&<&\infty, \Var{C_{2,2}(X,U,Z)}<\infty,
\label{eqn.finite.vars.two}
\end{eqnarray}
where $\mathrm{var}$ denotes variance over $Z$, $U$ and $X$.
Finally
\begin{equation}
\label{eqn.remainder.two}
\Expects{U,Z}{\Abs{C_7(x_i^n,U_i,Z_i,\lambda_n)}}\le n^{-7/6}p(x_i^n),
\end{equation}
where $p(x)$ is a polynomial in $x$.
\end{proposition}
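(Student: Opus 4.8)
The plan is to mirror the strategy of Proposition \ref{prop.mathematica}, treating the biased, noisy proposal \eqref{eqn.define.Y.bias} as a perturbation of the idealized proposal \eqref{eqn.define.Y}. Since the target ratio $g(Y_i^n)-g(x_i^n)$ is common to both $R$ and $R^*$, I would begin by writing $R^*-R$ purely as the difference of the log proposal ratios, $[\log q^*(Y_i^n,x_i^n)-\log q(Y_i^n,x_i^n)]-[\log q^*(x_i^n,Y_i^n)-\log q(x_i^n,Y_i^n)]$. As $q$ and $q^*$ are Gaussian with common variance $\lambda_n^2$ and means differing only by $\tfrac12\lambda_n^2 n^{-\kappa}\{b+\tau U\}$, each difference is an explicit quadratic, and using $Y_i^n-\mu^*_{\mathrm{fwd}}=\lambda_n Z_i$ together with $x_i^n-\mu_{\mathrm{rev}}=-\lambda_n Z_i-\tfrac12\lambda_n^2\{g'(x_i^n)+g'(Y_i^n)\}-\tfrac12\lambda_n^2 n^{-\kappa}\{b(x_i^n)+\tau U_{x_i^n}\}$ reduces everything to polynomials in $\lambda_n$, $n^{-\kappa}$, $Z_i$, $U_i$ and the values of $b,g'$ at $x_i^n$ and $Y_i^n$. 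Taylor expanding $g',b$ and their derivatives about $x_i^n$ (with MATHEMATICA handling the bookkeeping, as in Proposition \ref{prop.mathematica}) and collecting by the order $\lambda_n^j n^{-m\kappa}$ yields the decomposition \eqref{eqn.Rstar.expand} and the stated form of $C_{1,1}$, which is simply the cross term between the increment $\lambda_n Z_i$ and the extra drift, with $b(Y_i^n)$ and the noise replaced to leading order by their values at $x_i^n$.

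The moment identities are the heart of the argument, and each rests on a short integration by parts against the density $f=e^{g}$. For \eqref{eqn.expect.zero,bias}, $E\{C_{1,1}\}=0$ is immediate since $C_{1,1}$ is odd in $Z_i$ and $E(U)=0$; for $C_{2,1}$ and $C_{3,1}$ I would discard every term carrying an odd power of $Z_i$ or a factor of $U$, and show the surviving even-in-$Z$ part integrates to zero using $\int (b e^{g})'\,dx=0$, i.e. $E_f\{b g'+b'\}=0$, together with its higher-order analogue. The relation \eqref{eqn.equal.Kstar} then follows by a direct computation exploiting the mutual independence and unit variances of $Z_i,U_{x_i^n},U_{y_i^n}$: both $E\{C_{1,1}^2\}$ and $-2E\{C_{2,2}\}$ evaluate to (the appropriate power of $\ell$ times) $E_f\{b^2\}+\tfrac12\tau^2=K_*^2$, where obtaining $C_{2,2}$ correctly requires collecting the $n^{-2\kappa}$ contributions both from the squared corrections and from the cross term in $\tfrac12 n^{-\kappa}(x_i^n-\mu_{\mathrm{rev}})\{b(Y_i^n)+\tau U_{y_i^n}\}$. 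For \eqref{eqn.equal.Kstarstar}, the $U$ parts of $C_{1,1}$ drop out of $E\{C_3 C_{1,1}\}$, leaving, after taking Gaussian moments of $Z_i$, a multiple of $E_f\{b(g'g''+g''')\}$; the key step is the identity $\int b\,\tfrac{d}{dx}(g'' e^{g})\,dx=-\int b' g''\,e^{g}\,dx$, which converts this into $-\tfrac14 E_f\{b'g''\}=K_{**}$ up to the same $\ell$-scaling, and the same integration by parts identifies $E\{C_{4,1}\}$ as its negative.

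Finally, the finite-variance claims \eqref{eqn.finite.vars.two} are routine: each coefficient is a polynomial in $Z$, $U$ and the derivatives of $g$ and $b$, so \eqref{eqn.poly.mom.g}, \eqref{eqn.finite.moments}, \eqref{eq:cond.b} and the moment bound \eqref{eqn.mom.U} give finiteness. For the remainder \eqref{eqn.remainder.two} I would apply the Lagrange form of the Taylor remainder exactly as in Proposition \ref{prop.mathematica}, producing a bound $n^{-7/6}p_*(x_i^n,w_i,U_i)$ with $|w_i|\le|Z_i|$ and $p_*$ a polynomial; integrating over $Z_i$ and $U_i$ and invoking \eqref{eqn.mom.U} absorbs the noise moments into a polynomial $p(x)$, and the extra $n^{-\kappa}$ factors ($\kappa\ge0$) only help. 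I expect the main obstacle to be the bookkeeping needed to guarantee that $C_{2,2}$ captures every $n^{-2\kappa}$ term, since missing the cross term would break the exact cancellation $E\{C_{1,1}^2\}=-2E\{C_{2,2}\}$; the integration-by-parts identities themselves are short once the correct groupings are in place.
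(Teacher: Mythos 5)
Your proposal is correct and follows essentially the same route as the paper's proof: the explicit Gaussian log-proposal-ratio algebra, a MATHEMATICA-assisted Taylor expansion of $g$, $b$ and their derivatives about $x_i^n$ collected in powers of $\lambda_n^j n^{-m\kappa}$, expectations killed by odd powers of $Z_i$ and the zero-mean $U$'s, the integration-by-parts identities $\int \{b'(x)+b(x)g'(x)\}e^{g(x)}\,dx=0$ and $\int b\,(g''e^g)'\,dx=-\int b'g''e^g\,dx$ delivering \eqref{eqn.equal.Kstar}--\eqref{eqn.equal.Kstarstar}, and variance/remainder bounds inherited from the argument of Proposition \ref{prop.mathematica} via \eqref{eqn.mom.U}. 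The only blemish is that your displayed expression for $x_i^n-\mu_{\mathrm{rev}}$ omits the $-\tfrac{1}{2}\lambda_n^2 n^{-\kappa}\{b(Y_i^n)+\tau U_{y_i^n}\}$ contribution from the reverse-move drift, but since you later explicitly collect exactly this cross term when assembling $C_{2,2}$, this is a notational slip rather than a gap.
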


\begin{proof}
Writing $A(x)=b(x) + \tau U_{x}$ and $A(y)=b(y) + \tau
U_{y}$, after some algebra we obtain
\[
\log q^*(y_i^n,x_i^n) - \log q^*(x_i^n,y_i^n) =
\frac{1}{2}Z_i^2-\frac{1}{2}\left(Z_i+\frac{\lambda_n}{2}\left[g'(x_i^n)+g'(y_i^n)+n^{-\kappa}\{A(x_i^n)+A(y_i^n)\}\right]\right)^2.
\]
This, together with a simpler calculation for $\log \pi(y_i^n)-\log \pi(x_i^n)$,
shows that in a Taylor expansion of $R^*(x_i^n,y_i^n)$ about $x_i^n$, terms in
$n^{-a\kappa}~(a=1,\dots)$ must also be multiplied by $\lambda_n^b$
with $b\in\{a,a+1,\dots\}$. Consideration of the maximum possible size of all terms
in the Taylor expansion for the three different cases shows that it
must be of the form given in \eqref{eqn.Rstar.expand} with the
largest part of the remainder term being at most $O(n^{-7/6})$.

As with Proposition \ref{prop.mathematica}, the polynomial forms for
$C_{1,1}, C_{2,1}, C_{3,1}, C_{4,1}$ and $C_{2,2}$ are produced using MATHEMATICA \citep{Wolfram}, but this time by also Taylor expanding the term $b(y)$ in $y-x$. 

Clearly $E(C_{1,1})=0$ as the terms are multiples of $U_x$,
$U_y$ and odd powers of $Z$. The same argument can be used for the
expectations of $C_{3,1}$; however for $C_{2,1}$, $C_{4,1}$ and for
the relationships in \eqref{eqn.equal.Kstar} and
\eqref{eqn.equal.Kstarstar}, it must be used in tandem with
integration by parts with respect to the target density $e^{g(x)}$
and using assumptions \eqref{eqn.poly.mom.g},
\eqref{eqn.finite.moments} and \eqref{eq:cond.b}. 

We illustrate this by providing the form for $C_{2,1}$:
\[
C_{2,1}(X,U,Z)=-\frac{1}{2}\ell^2Z^2\left\{b'(X)+b(X)g'(X)+\tau U_{y}g'(X)\right\}.
\]
However \eqref{eqn.poly.mom.g} and \eqref{eq:cond.b} imply that 
$\int b'(x)e^{g(x)} ~d x=-\int b(x)g'(x)e^{g(x)}$.

The final two parts of the proposition follow from analogous arguments
to those used in Proposition \ref{prop.mathematica} provided
\eqref{eqn.mom.U} holds. 
\end{proof}

Integration by parts, the Cauchy--Schwarz inequality and then further
integration by parts gives
\begin{eqnarray*}
K^2_{**}&=&\frac{1}{16} E \left[b(X)\{g'(X)g''(X) + g'''(X)\}\right]^2
\le 
\frac{1}{16} E \left\{b(X)^2\right\} E \left[\{g'(X)g''(X)+g''(X)\}^2\right] \\
&=&\frac{1}{48} E \left\{b(X)^2\right\} E \left[3\{g'''(X)\}^2-3\{g''(X)\}^3\right] \le K_*^2K^2,
\end{eqnarray*}
so that $\ell^4K^2+2\ell^2K_{**}+K_*^2\ge 0$.


\begin{lemma}
\label{lemma.CLT.two}
For $*\in\{(1),(2),(3)\}$
\[
T^*_n(X^n,Y^n)\Rightarrow T^*\sim \mathcal{N}\left(-\frac{1}{2}a^{*},a^*\right),
\]
where $T^*$ is defined in \eqref{eqn.define.Tstar} and 
\[
a^{(1)} =\ell^2 K_*^2,
~a^{(2)} =\ell^2 K_*^2 + 2 \ell^4 K_{**}+\ell^6K^2,
~a^{(3)}=\ell^6K^2.
\]
\end{lemma}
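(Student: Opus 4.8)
The plan is to follow the proof of Lemma \ref{lemma.CLT} closely, replacing the idealized expansion of Proposition \ref{prop.mathematica} with that of Proposition \ref{prop.mathematica.two}. First I would discard the remainder: by \eqref{eqn.remainder.two}, $\Expect{\Abs{\sum_{i=1}^n C_r(X_i^n,U_i,Z_i,n)}}\le\sum_i\Expect{\Abs{C_r}}\le n^{-1/6}\Expect{p(X)}$, which is finite by \eqref{eqn.finite.moments}, so Markov's inequality shows this sum tends to $0$ in probability. By Slutsky's theorem it then suffices to establish the claimed limit for the truncated sum $T_n^{*\prime}$ obtained by dropping $C_r$ from \eqref{eqn.Rstar.expand} and summing over $i$.

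Next I would substitute the regime-specific scalings ($\lambda_n=\ell n^{-1/6-\epsilon}$, $\kappa=\tfrac13-\epsilon$ in (1); $\lambda_n=\ell n^{-1/6}$ with $\kappa=\tfrac13$ in (2) and $\kappa>\tfrac13$ in (3)) and collect powers of $n$. Each surviving contribution has the form $\sum_i c\,C(X_i,U_i,Z_i)$ for a deterministic coefficient $c=c(n)$; a mean-zero summand contributes to the limiting variance precisely when $c$ is of order $n^{-1/2}$, and a summand contributes to the limiting mean precisely when $c$ is of order $n^{-1}$. Every other term is negligible: its variance contribution vanishes, and any term whose coefficient is of larger order than $n^{-1}$ has zero expectation by \eqref{eqn.expect.zero} and \eqref{eqn.expect.zero,bias}, so it produces no diverging drift. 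Tracking the exponents then shows that in regime (3) only the idealized terms $C_3$ and $C_6$ survive, so the argument of Lemma \ref{lemma.CLT} applies unchanged and $a^{(3)}=\ell^6K^2$; in regime (1) the idealized terms are subdominant and only $C_{1,1}$ (variance) and $C_{2,2}$ (drift) survive, giving $a^{(1)}=\ell^2K_*^2$; and in regime (2) the two scales balance, so $C_3,C_{1,1}$ both survive at the variance order and $C_6,C_{4,1},C_{2,2}$ all survive at the drift order.

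The convergence itself follows from the classical central limit theorem and law of large numbers applied to the iid summands, which are independent across $i$ by the product form \eqref{eqn.product.form} of the target and the independence of the proposal noise, and have finite variances by \eqref{eqn.finite.vars.two}: the order-$n^{-1/2}$ part is $n^{-1/2}$ times a sum of iid zero-mean variables and converges to a centred normal with variance equal to the variance of the combined summand, while each order-$n^{-1}$ part converges in probability to its expectation, and the pieces are recombined by Slutsky's theorem. The mean and variance are then identified through the ratio identities of Proposition \ref{prop.mathematica.two}: \eqref{eqn.expect.nonzero} gives $\Expect{C_3^2}=-2\Expect{C_6}$, \eqref{eqn.equal.Kstar} gives $\Expect{C_{1,1}^2}=-2\Expect{C_{2,2}}$, and \eqref{eqn.equal.Kstarstar} gives $\Expect{C_3C_{1,1}}=-\Expect{C_{4,1}}$. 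In every regime these force the limiting drift to equal exactly $-\tfrac12$ of the limiting variance, so that $T_n^*\Rightarrow\mathcal{N}(-\tfrac12 a^*,a^*)$, with the surviving expectations collecting into $K$, $K_*$ and $K_{**}$ through \eqref{eq:K}, \eqref{eqn.defn.Kstar} and \eqref{eqn.defn.Kstarstar} to yield the stated $a^{(1)},a^{(2)},a^{(3)}$.

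The step I expect to be the main obstacle is regime (2), where the two mean-zero terms $C_3$ and $C_{1,1}$ survive simultaneously at the variance order. The limiting variance is then the variance of their correlated sum and carries a cross term $2\Expect{C_3C_{1,1}}$ in addition to $\Expect{C_3^2}$ and $\Expect{C_{1,1}^2}$; this cross term is the source of the $2\ell^4K_{**}$ appearing in $a^{(2)}$. The identity $\Expect{C_3C_{1,1}}=-\Expect{C_{4,1}}$ from \eqref{eqn.equal.Kstarstar} is precisely what guarantees that this cross term also satisfies the drift $=-\tfrac12$ variance relation, via the matching drift contributed by $C_{4,1}$. Finally, nonnegativity of $a^{(2)}$, which is needed for the limiting normal to be well-defined, is ensured by the inequality $\ell^4K^2+2\ell^2K_{**}+K_*^2\ge0$ established immediately before the statement.
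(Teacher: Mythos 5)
Your proposal follows the same route as the paper's proof: discard the remainder by Markov's inequality and Slutsky, track the powers of $n$ attached to each $C$-term under the regime-specific scalings (your order-$n^{-1/2}$/order-$n^{-1}$ bookkeeping is exactly the paper's coefficient table), identify the limiting mean and variance via the identities \eqref{eqn.expect.nonzero}, \eqref{eqn.equal.Kstar} and \eqref{eqn.equal.Kstarstar} — including the $C_3 C_{1,1}$ cross term producing $2\ell^4 K_{**}$ in regime (2) — and conclude by the central limit theorem for the iid summands. The argument is correct and matches the paper's proof in substance and structure.
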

\begin{proof}
As proved in Lemma \ref{lemma.CLT}, by Markov's inequality,
$\sum_{i=1}^nC_r(X_i^n,U_i,Z_i,n)\rightarrow 0$ in probability and therefore it is
sufficient to show that in probability
$T^*_n\rightarrow T \sim\mathcal{N}\left(-\frac{1}{2}a^{*},a^*\right)$, where
$*\in\{(1),(2),(3)\}$ and
$T^*_n=\sum_{i=1}^n\left\{R^*(x_i^n,Y_i^n,Z_i^n)-C_r(x_i^n,U_i,Z_i,n)\right\}$.
Table \ref{tab:coeff} shows the coefficient of each non-remainder term in
\eqref{eqn.Rstar.expand} in each of the three cases.

\begin{table}[t!]
\caption{Coefficients of \eqref{eqn.Rstar.expand} terms}
\begin{tabular}{c|lllllllll}
&$C_3$&$C_4$&$C_5$&$C_6$&$C_{1,1}$&$C_{2,1}$&$C_{3,1}$&$C_{4,1}$&$C_{2,2}$\\
\hline
(1)&$n^{-\frac{1}{2}-3\epsilon}$&$n^{-\frac{2}{3}-4\epsilon}$&$n^{-\frac{5}{6}-5\epsilon}$
&$n^{-1-6\epsilon}$& $n^{-\frac{1}{2}}$&$n^{-\frac{2}{3}-\epsilon}$&$n^{-\frac{5}{6}-2\epsilon}$&$n^{-1-3\epsilon}$&$n^{-1}$\\
(2)&$n^{-\frac{1}{2}}$&$n^{-\frac{2}{3}}$&$n^{-\frac{5}{6}}$
&$n^{-1}$& $n^{-\frac{1}{2}}$&$n^{-\frac{2}{3}}$&$n^{-\frac{5}{6}}$&$n^{-1}$&$n^{-1}$\\
(3)&$n^{-\frac{1}{2}}$&$n^{-\frac{2}{3}}$&$n^{-\frac{5}{6}}$
&$n^{-1}$& $n^{-\frac{1}{2}-\epsilon}$&$n^{-\frac{2}{3}-\epsilon}$&$n^{-\frac{5}{6}-\epsilon}$&$n^{-1-\epsilon}$&$n^{-1-2\epsilon}$\\
\end{tabular}
\label{tab:coeff}
\end{table}

Since $\epsilon>0$, as $n\rightarrow \infty$,
combining \eqref{eqn.expect.zero}, \eqref{eqn.expect.nonzero},
\eqref{eqn.expect.zero,bias}, \eqref{eqn.equal.Kstar} and
\eqref{eqn.equal.Kstarstar}  gives
\begin{eqnarray*}
E(T_n^{(1)})&=&-\frac{1}{2}n^{-6\epsilon}\ell^6K-n^{-3\epsilon}\ell^4K_{**}-\frac{1}{2}\ell^2K_*\rightarrow
-\frac{1}{2}\ell^2K_*,\\
E(T_n^{(2)})&=&-\frac{1}{2}\ell^6K-\ell^4K_{**}-\frac{1}{2}\ell^2K_*,\\
E(T_n^{(3)})&=&-\frac{1}{2}\ell^6K-n^{-\epsilon}\ell^4K_{**}-\frac{1}{2}n^{-2\epsilon}\ell^2K_*\rightarrow-\frac{1}{2}\ell^6K,
\end{eqnarray*}
in probability. Similarly, using \eqref{eqn.finite.vars} and \eqref{eqn.finite.vars.two},
\begin{eqnarray*}
\mathrm{var}(T_n^{(1)})&\rightarrow&E(C_{1,1}^2)=\ell^2K_*^2\\
\mathrm{var}(T_n^{(2)})&\rightarrow&\Expect{\left(C_3+C_{1,1}\right)^2}
=\ell^2K_*^2+2\ell^4K_{**}+\ell^6K^2\\
\mathrm{var}(T_n^{(3)})&\rightarrow&E(C_3^2)=\ell^6K^2,
\end{eqnarray*}
in probability.  Moreover, $T_n^{*}$ is the sum of $n$ independent and identically distributed terms, so the result follows by the central limit theorem. 
\end{proof}

The proof for the asymptotic acceptance rate,
$\alphabar_n(\ell,\sigma^2)\rightarrow \alpha^{*}(\ell,\sigma^2)$, is
completed using Lemma \ref{lemma.CLT.two} as in the proof of Theorem \ref{thrm.main}  by accounting for the distribution of the noise of the log-target.

Finally, as in the proof of Theorem \ref{thrm.main}, the limit for
the squared jump distance, $J_n(\ell,\sigma^2)$, follows from Proposition \ref{prop.SqDist}.

\section{Implementation details for the particle Langevin algorithm}
\label{sec:impl-deta-part}

Particle filters, also known as sequential Monte Carlo algorithms, use importance sampling to sequentially approximate the posterior distribution. In the context of state-space modelling, we are interested in approximating the posterior density $p(s_t\mid z_{1:t},x)$ of the filtered latent state $s_t$, given a sequence of observations $z_{1:t}$. In this section, we shall assume that the model parameters $x$ are fixed. Approximations of $p(s_{t}\mid z_{1:t},x)$ can be calculated recursively by first approximating $p(s_1\mid z_1,x)$, then $p(s_{2}\mid z_{1:2},x)$ and so forth for $t=1,\ldots,T$. At time $t$ the posterior of the filtered state is 
\begin{equation}
  \label{eq:6}
  p(s_t\mid z_{1:t},x) \propto  g_x(z_t\mid s_t) \int f_x(s_t\mid s_{t-1}) p(s_{t-1}\mid z_{1:t-1},x) ds_{t-1} 
\end{equation}
where $p(s_{t-1}\mid z_{1:t-1},x)$ is the posterior density at time $t-1$.

The posterior at time $t$ can be approximated if we assume that at time $t-1$ we have a set of particles $\{s_{t-1}^{(i)}\}_{i=1}^{N}$ and corresponding normalized weights $\{w_{t-1}^{(i)}\}_{i=1}^N$ which produce a discrete approximation of $p(s_{t-1}\mid z_{1:t-1},x)$. This induces the following approximation to \eqref{eq:6}, 
\begin{equation} \label{eq:1a}
  \hat{p}(s_t\mid z_{1:t},x) \approx c  g_x(z_t\mid s_t) \sum_{i=1}^{N} w_{t-1}^{(i)} f_x(s_t\mid s_{t-1}^{(i)}),
\end{equation}
where $c$ is a normalizing constant. The filtered density, as given above, can be updated recursively by propagating and updating the particle set using importance sampling techniques. The resulting algorithms are called particle filters, see \cite{Doucet2000} and \cite{Cappe2007} for a review.

In this paper the particle approximations of the latent process are created with the auxiliary particle filter of \cite{Pitt1999a}.  
This filter can be viewed as a general filter from which simpler filters are given as special cases \cite[]{Fearnhead2008}. 
The aim is to view the target (\ref{eq:1a}) as defining a joint distribution on the particle at time $t-1$ and the value of a new particle at time $t$. The probability of sampling particle $s_{t-1}^{(i)}$ and $s_t$ is
\begin{equation*}
  c w_{t-1}^{(i)}g_x(z_t\mid s_t)f_x(s_t\mid s_{t-1}^{(i)}).
\end{equation*}
We approximate this with $ \xi_t^{(i)}q(s_t\mid s_{t-1}^{(i)},z_t,x)$, where  $q(s_t\mid s_{t-1}^{(i)},z_t,x)$ is a density function that can be sampled from and $\{\xi_t^{(i)}\}_{i=1}^N$ are a set of probabilities. 
This defines a proposal which we can simulate from by first choosing particle $s_{t-1}^{(i)}$ with probability $\xi_t^{(i)}$, and then, conditional on this, a new particle value, $s_t$, is sampled
from $q(s_t\mid s_{t-1}^{(i)},z_t,x)$. The weight assigned to our new particle is then
\[
\tilde{w}_t=\frac{w_{t-1}^{(i)}g_x(z_t\mid s_t)f_x(s_t\mid s_{t-1}^{(i)})}{\xi_t^{(i)}q(s_t\mid s_{t-1}^{(i)},z_t,x)}.
\]
Details are summarized in Algorithm \ref{alg1}.

The optimal proposal density, in terms of minimizing the variance of the weights \citep{Doucet2000}, is available when $q(s_t\mid s_{t-1}^{(i)},z_t,x) = p(s_t\mid s_{t-1}^{(i)},z_t,x)$ and $\xi_t^{(i)} \propto w_{t-1}^{(i)}p(z_t\mid s_{t-1}^{(i)})$. 
This filter is said to be fully adapted as all the weights $w_{t}^{(i)}$ will equal $1/N$. Generally, it is not possible to sample from the optimal proposal, but alternative proposals can be used which approximate the fully adapted filter.

One of the benefits of using the particle filter is that an estimate for the likelihood $p(z_{1:T}\mid x)$ is given for free from the particle filter output. We can estimate $p(z_t\mid z_{1:t-1},x)$ by
\begin{equation}
\label{eq:8}
  \hat{p}(z_t\mid z_{1:t-1},x) = \sum_{i=1}^N \frac{\tilde{w}_t^{(i)}}{N}, 
\end{equation}
where $\tilde{w}_t^{(i)}$ are unnormalized weights. An unbiased estimate of the likelihood \citep{moral2004feynman} is then
\begin{equation*}
  \hat{p}(z_{1:T}\mid x)=\hat{p}(z_1\mid x) \prod_{t=2}^T \hat{p}(z_t\mid z_{1:t-1},x).
\end{equation*}

\begin{algorithm}
\caption{Auxiliary Particle Filter}          
\label{alg1}                           
\textit{Step 1:} Iteration $t=1$.\\
\quad (a) For $i=1,\ldots,N$, sample particles $\{s_1^{(i)}\}$ from the prior $p(s_1\mid x)$ and set $\tilde{w}_1^{(i)} = p(z_1\mid s_1^{(i)})$. \\
\quad (b) Calculate $C_1=\sum_{i=1}^N \tilde{w}_1^{(i)}$; set $\hat{p}(z_1)=C_1/N$; and calculate normalized weights $w_1^{(i)}=\tilde{w}_1^{(i)} /C_1$ for $i=1,\ldots,N$. \\
\textit{Step 2:} Iteration $t=2,\ldots,T$. Assume a user-defined set of proposal weights $\{\xi_{t}^{(i)}\}_{i=1}^N$ and family of proposal distributions $q(s_t\mid s_{t-1}^{(i)},z_t,x)$. \\
\quad (a) Sample indices $\{k_1,k_2,\ldots,k_N\}$ from $\{1,\ldots,N\}$ with probabilities $\xi_{t}^{(i)}$. \\
\quad (b) Propagate particles $s_t^{(i)} \sim q(\cdot\mid s_{t-1}^{(k_i)},z_{t},x)$. \\
\quad (c) Weight particles $\tilde{w}_t^{(i)} = \frac{w_{t-1}^{(k_i)} g_x(z_t\mid s_t^{(i)})f_x(s_t^{(i)}\mid s_{t-1}^{(k_i)})}{\xi_t^{(k_i)}q(s_t^{(i)}\mid s_{t-1}^{(k_i)},z_t,x)}$ and calculate $C_t=\sum_{i=1}^N\tilde{w}_t^{(i)}$. \\
\quad (d) Obtain an estimate of the predictive likelihood, $\hat{p}(z_t\mid z_{1:t-1},x) = C_t/N $, and calculate normalized weights $w_t^{(i)}=\tilde{w}_t^{(i)}/C_t$ for $i=1,\ldots,N$.
\end{algorithm}

Implementing the particle Langevin algorithm requires an approximation of the gradient of the log posterior density $\nabla \log \pi(x)$, where $\nabla \log \pi(x) = \nabla \log p(z_{1:T}\mid x) + \nabla \log p(x)$. As outlined in the Section \ref{sec:particle-filtering} we can use the \cite{Poyiadjis2011} algorithm to approximate the gradient, however, the variance of this approximation increases quadratically with $t$. An alternative method proposed by \cite{Nemeth2013} has been shown to produce estimates of the gradient with only linearly increasing variance. We shall use this method to create the particle Langevin proposal, details of which are as follows.

For each particle at a time $t-1$, there is an associated path, defined by tracing the ancestry of each particle back in time. With slight abuse of notation denote this path by $s_{1:t-1}^{(i)}$. We can thus associate with particle $i$ at time $t-1$ a value $\alpha_{t-1}^{(i)} = \nabla\log p(s_{1:t-1}^{(i)},z_{1:t-1}\mid x)$. These values can be updated recursively. Remember that in step 2(b) of Algorithm \ref{alg1} we sample $k_i$, which is the index of the particle at time $t-1$ that is propagated to produce the $i\mbox{th}$ particle at time $t$. Thus we have
\begin{equation} \label{eq:3}
   \alpha_t^{(i)} =  \alpha_{t-1}^{(k_i)} + \nabla \log g_x(z_t\mid s_t^{(i)}) + \nabla \log f_x(s_t^{(i)}\mid s_{t-1}^{(k_i)}).
\end{equation}

The main idea behind the \cite{Nemeth2013} approach is to use kernel density estimation to replace each discrete $\alpha_{t-1}^{(i)}$ value by a Gaussian distribution:
\begin{equation} \label{eq:4}
 \alpha_{t-1}^{(i)} \sim \mathcal{N}(m_{t-1}^{(i)},V_{t-1}).
\end{equation}
The mean of this distribution is obtained by shrinking $\alpha_{t-1}^{(i)}$ towards the mean of $\alpha_{t-1}$,
\[
 m_{t-1}^{(i)}=\zeta\alpha_{t-1}^{(i)}+(1-\zeta)\sum_{i=1}^N w_{t-1}^{(i)}\alpha_{t-1}^{(i)}.
\]
Here $0<\zeta<1$ is a user-defined shrinkage parameter. The idea of this shrinkage is that it corrects for the increase in variability introduced through the kernel density estimation of \cite{West1993}. For a definition of $V_{t-1}$ see \cite{Nemeth2013}, however, its actual value does not affect the following details.

The resulting model for the $\alpha_{t}$'s, including their updates (\ref{eq:3}), is linear Gaussian. Hence
we can use Rao--Blackwellization to avoid sampling $\alpha_{t}^{(i)}$, and instead calculate the parameters of the kernel (\ref{eq:4}) directly. This gives the following recursion for the means,
\begin{eqnarray}
  \label{eq:21}
m_t^{(i)} &=& \zeta m_{t-1}^{(k_i)}+(1-\zeta)\sum_{i=1}^N w_{t-1}^{(i)} m_{t-1}^{(i)} + \nabla \log g_x(z_t\mid s_t^{(i)}) + \nabla \log f_x(s_t^{(i)}\mid s_{t-1}^{(k_i)}). \nonumber
\end{eqnarray}

The final score estimate depends only on these means, and is
\[ \nabla \log \hat{p}(z_{1:t}\mid x)=\sum_{i=1}^N w_t^{(i)} m_{t}^{(i)}.\]

See Algorithm \ref{alg6} for a summary.

\begin{algorithm}
\caption{Rao-Blackwellized Kernel Density Estimate of the Score Vector}          
\label{alg6}                           
 Add the following steps to Algorithm \ref{alg1}. \\
 \textit{Step 1:} \\
 (c) Set $\nabla \log \hat{p}(z_{1}\mid x)= \nabla \log g_x(z_1\mid s_1^{(i)}) + \nabla \log \mu_x(s_1^{(i)})$. \\
 \textit{Step 2:} \\
 (e) For $i=1,\ldots,N$, calculate \\
\vspace{-0.5cm}
\[
m_t^{(i)} = \zeta m_{t-1}^{(k_i)}+(1-\zeta)\sum_{i=1}^N w_{t-1}^{(i)} m_{t-1}^{(i)} + \nabla \log g_x(z_t\mid s_t^{(i)}) + \nabla \log f_x(s_t^{(i)}\mid s_{t-1}^{(k_i)}). 
\]
 (f)  Update and store the score vector 
\[
 \nabla \log \hat{p}(z_{1:t}\mid x)= \sum_{i=1}^N w_t^{(i)}m_{t}^{(i)}. 
\]
\end{algorithm}

When $\zeta=1$ the recursion simplifies to the method given by \cite{Poyiadjis2011}, where the variance of the score estimate will increase quadratically with $t$. The use of a shrinkage parameter $\zeta<1$ alleviates the degeneracy problems that affect the estimation of the score and significantly reduces the estimate's variance.
As a rule of thumb, setting $\zeta=0.95$ produces reliable estimates and we shall use this tuning for all examples in the Section \ref{sec:particle-filtering}. Decreasing $\zeta$ leads to a decrease in variance, but at the cost of increasing the bias in the estimate of the gradient. \cite{Nemeth2013} have shown that reliable results can be obtained for a wide range of $\zeta$ and that values in the range $0.5<\zeta<0.99$ work particularly well.  

\section{Negative $K_{**}$} \label{App:K**}

The $K_{**}$ term, that appears in the acceptance rate for regime (2) can be negative. 
This term depends on the interaction between the bias in our estimate of the gradient
and the curvature of the posterior. A negative value corresponds to a case where the bias in our estimate of the gradient is beneficial and actually improves the mixing of the algorithm. Intuitively, for these
cases, the bias is correcting for the error in the Euler discretization of the Langevin diffusion that is used to obtain the Metropolis-adjusted Langevin proposal. A negative $K_{**}$ value can lead 
to the counter-intuitive situation 
where increasing the step-size can sometimes increase the acceptance rate. 

To see how this happens, we present a simple example. We assume that the target distribution has independent and identically distributed standard Gaussian components, and that
the bias of the estimate of the gradient for the component of interest is $b(x)/n^{1/3}=-x/n^{1/3}$. So as to emphasize the effect that the bias is having, we will consider the case where $\sigma^2=\tau^2=0$, so the
likelihood is estimated without error, and the only error in the gradient is due to the bias. We are considering regime (2), where $\kappa=1/3$.

Simple calculations give $K=1/4$, $K_{*}^2=1$ and $K_{**}=-1/4$. The limiting acceptance rate is thus
\[
 \alpha^{(2)}(\ell,0)=2\Phi\left\{-\frac{1}{2}\left(\ell^6/16-\ell^4/2+\ell^2 \right) \right\}.
\]
This limiting acceptance rate is equal to 1 either when $\ell=0$, or when $\ell=2$. 

The Langevin dynamics for the component of interest are defined by the stochastic differential equation
\[
 \mbox{d}X_t=-\frac{1}{2}X_t\mbox{d}t+\mbox{d}B_t.
\]
The standard Langevin algorithm will propose, using an Euler approximation, 
\begin{equation} \label{eq:A1}
 Y=x\left(1-\lambda^2/2\right)+\lambda Z,
\end{equation}
where $x$ is the current value of the chain, and $Z$ is an independent standard Gaussian random variable. The particle Langevin algorithm will have proposal
\begin{equation} \label{eq:A2}
 Y=x\left(1-\lambda^2/2-\frac{\lambda^2}{2n^{1/3}} \right)+ \lambda Z,
\end{equation}
where the difference is due to the bias in the estimate of the gradient. 

Now it is straightforward to show that if  $\lambda<1$ a proposal of the form
\begin{equation} \label{eq:A3}
 Y=x\left(1- \lambda^2 \right)^{1/2}+ \lambda Z
\end{equation}
will have acceptance rate of 1, as this is the true transition density of the Langevin dynamics over a time-step of size $-\log(1-\lambda^2)$. 

For our asymptotic regime (2) we have $\lambda=\ell n^{-1/6}$ and we let $n\rightarrow \infty$. We can expand the coefficient of $x$ in (\ref{eq:A3}) to give
\[
 \left(1- \lambda^2 \right)^{1/2}=1-\frac{1}{2}\lambda^2-\frac{1}{8}\lambda^4+O\left(\lambda^6\right).
\]
The standard Langevin proposal (\ref{eq:A1}) is the same as the ideal
proposal (\ref{eq:A3}) up to order $\lambda^2$. By comparison there is
a 
local maximum of our limiting acceptance rate at $\ell=2$, and, with
this scaling, the particle Langevin proposal (\ref{eq:A2}) is better as it is the same as the ideal proposal (\ref{eq:A3}) up to order $\lambda^4$.

\section{Empirical analysis of assumptions pertaining to theoretical results}
\label{sec:empir-analys-assumpt}

Our theoretical results are posited on a number of simplifying
assumptions. Some, such as the shape of the target and the
independence between position and the distribution of the noise in the
log-target are discussed at the start of Section \ref{sec:optim-scal-results}. Others, such
as the asymptotic distribution of the particle filter estimates, are
based on previous theory \citep{Berard2014} and have been
investigated previously
\cite[e.g.,][]{Sherlock:2015,Doucet:2015}. Others pertain to the
estimates of the gradient of the log-target and are entirely new. In this
section, we verify that many of these assumptions hold approximately for the two examples in
our simulation study.

Our theoretical results also show three possible regimes, with the
final regime, where the effect of the error in the gradient is negligible, being the most desirable. We describe diagnostics that
relate to the regime and we use these to show that both of our simulation
studies are in the desirable regime (3).

\subsection{Noise in the log posterior density}
\label{sec:noise-log-posterior}

Theorems \ref{thrm.main} and \ref{sec:scaling-grads} both assume that
the distribution of the noise in the log posterior density is independent of
the position in the parameter value, $x$. Corollary
\ref{sec:opt-acc-rate}, and our maximin procedure, specify further that
the noise is Gaussian \eqref{eqn.SARa}
with a variance that is inversely proportional to the number of
particles. 
These three assumptions have been made before
\citep{Doucet:2015,Sherlock:2015,Pitt2012}; the first of
them, in particular, is unlikely to hold in practice but has been found
to hold approximately. The second and third are suggested by
particle filter theory
\citep{moral2004feynman,Berard2014}. We now check these assumptions for our simulation study examples.
 
Figure \ref{fig:var_post} shows a histogram of the variance of the
noise in the log posterior density evaluated at $100$ points sampled at
random from the posterior. It can be seen that the variance 
fluctuates by about half an order of magnitude either side of a
central value. \cite{Sherlock2014} shows that for
random walk-based algorithms a key
quantity of interest, the optimal scaling, is robust to changes in the
global distributions of $V^{n}$ and $W^{n}$; Figure \ref{fig:contours} suggests
a similar robustness for the particle Langevin algorithm. In moderate to high
dimensions the particle Langevin algorithm can require many iterations to
traverse the posterior. Provided the variance in the noise changes
sufficiently slowly, the variance will appear to be approximately
constant for many consecutive iterations; thus, tuning to the 
optimal scaling that would apply to the current variance if it were
global should be close to
optimal locally. Since the optimal scaling is robust to the variance it seems plausible, that, as suggested by our empirical findings, guidance from
our theory may be robust to (sufficiently slow) local variations in the distributions of $V^{n}$ and $W^{n}$.
\begin{figure}[t!]
  \centering
  \subfigure{\includegraphics[width=0.45\textwidth]{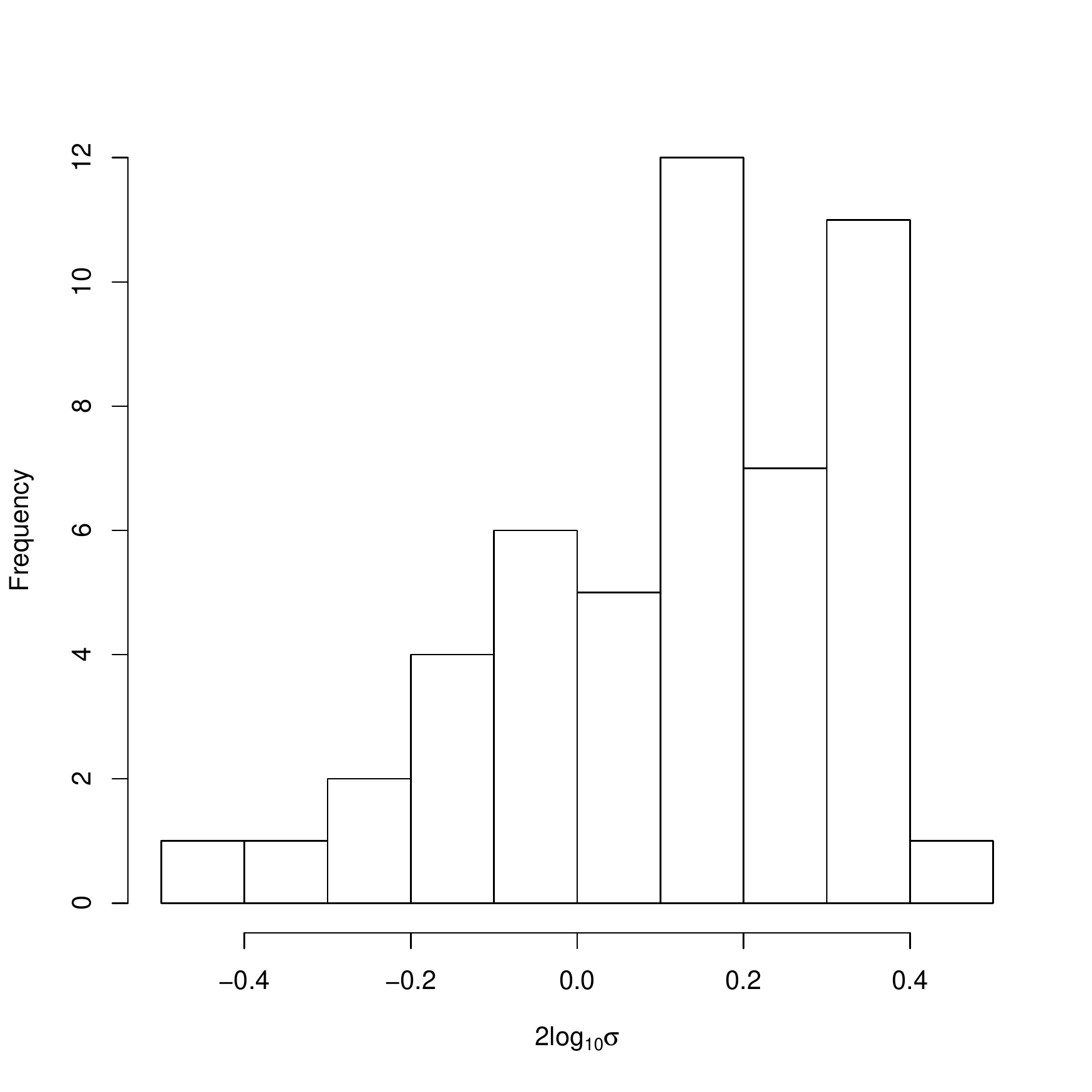}}
  \subfigure{\includegraphics[width=0.45\textwidth]{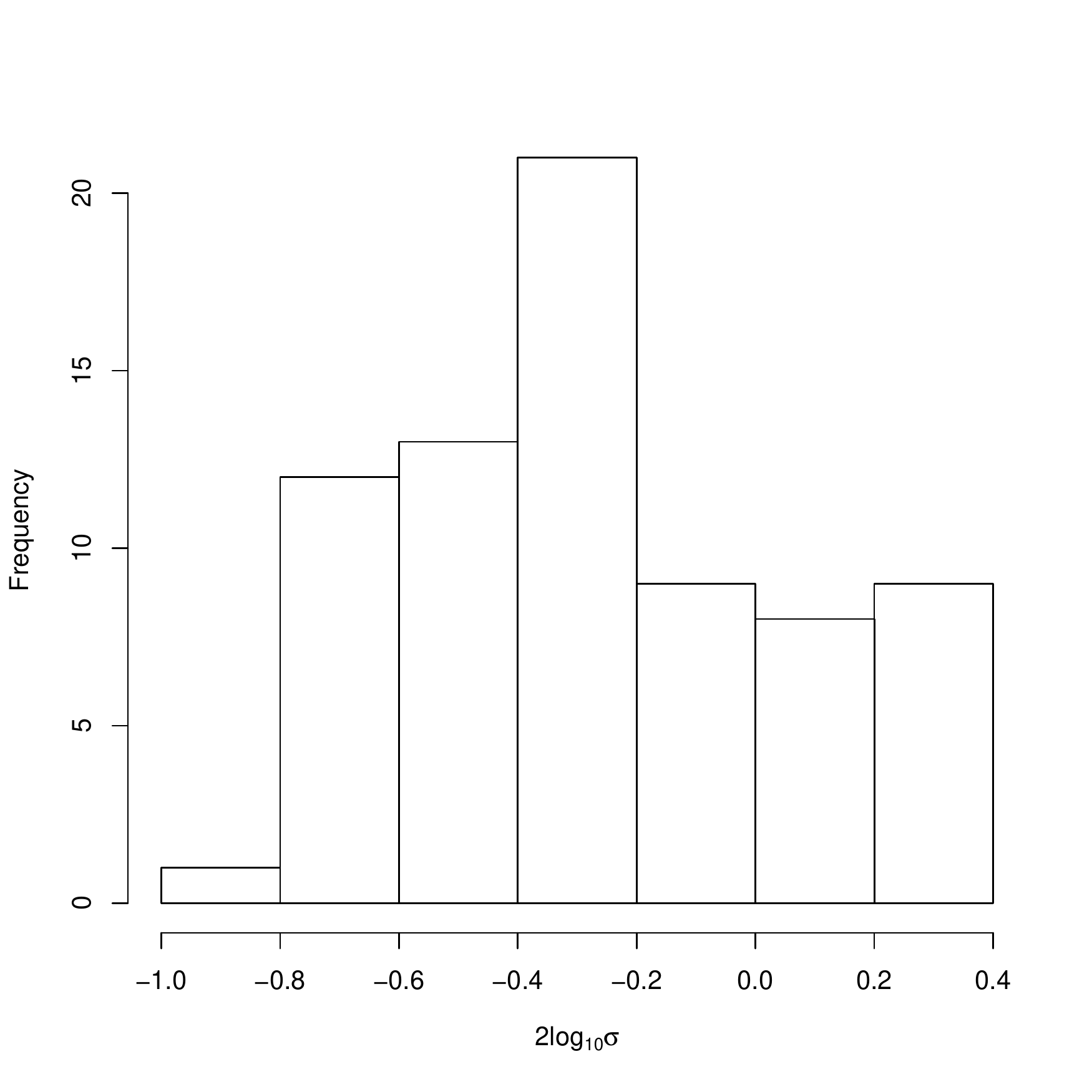}}
  \caption{Base 10 logarithm of the variance of the log posterior density at random points in the posterior for the linear Gaussian model (left panel) and the mixture of experts model (right panel).}
  \label{fig:var_post}
\end{figure}

Figure \ref{fig:post_noise} shows, for each of our two examples,
kernel density estimates of the log posterior density based on $500$
point estimates at each of two points sampled from the
posterior. The noise in the log posterior density is, at least approximately, Gaussian.
This is an important check as the theory that predicts a Gaussian distribution is based upon the use of a large
number of particles, but for the linear Gaussian and mixture of experts
models we needed respectively only $20$ and $100$ particles. 

\begin{figure}[t!]
  \centering
  \subfigure{\includegraphics[width=0.45\textwidth]{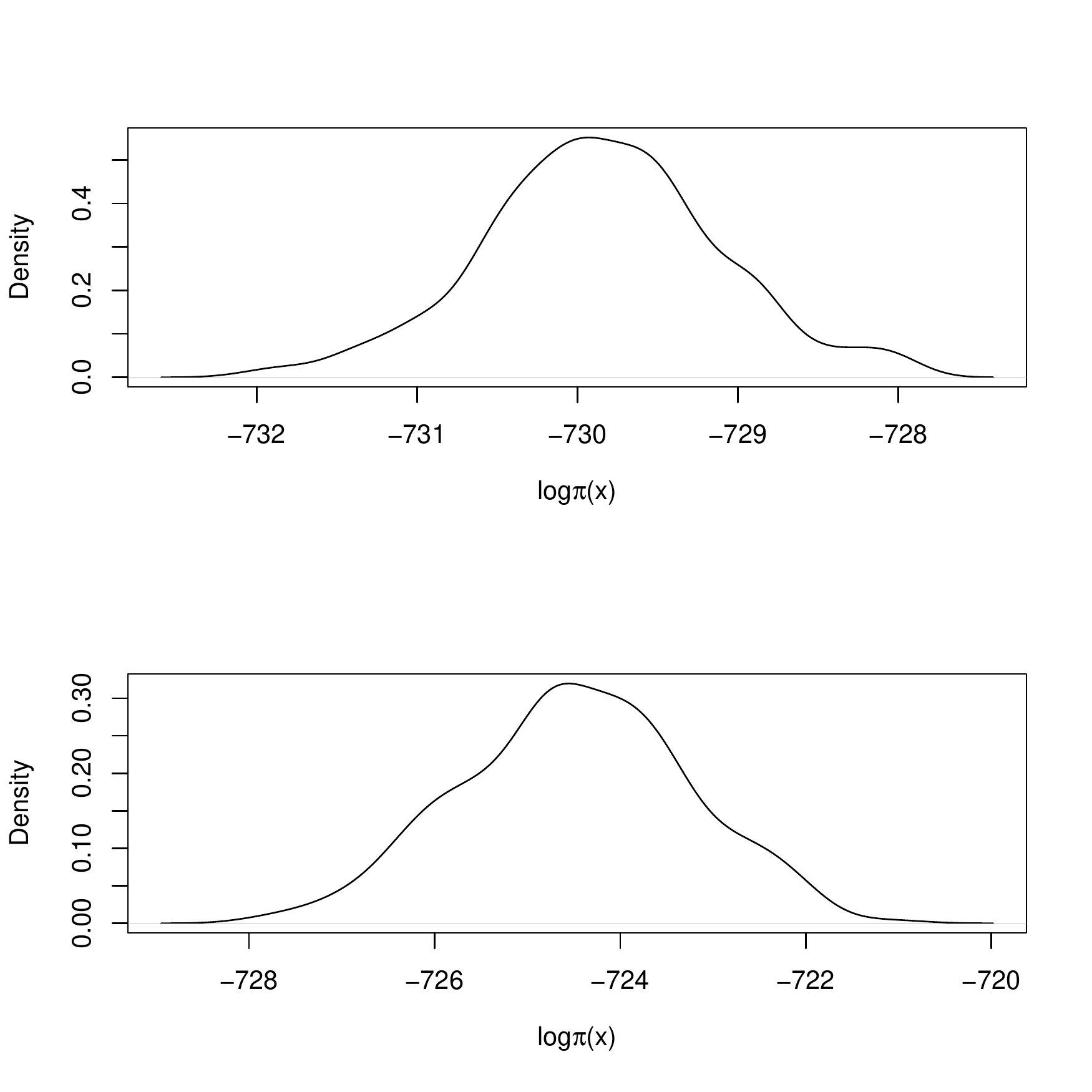}}
  \subfigure{\includegraphics[width=0.45\textwidth]{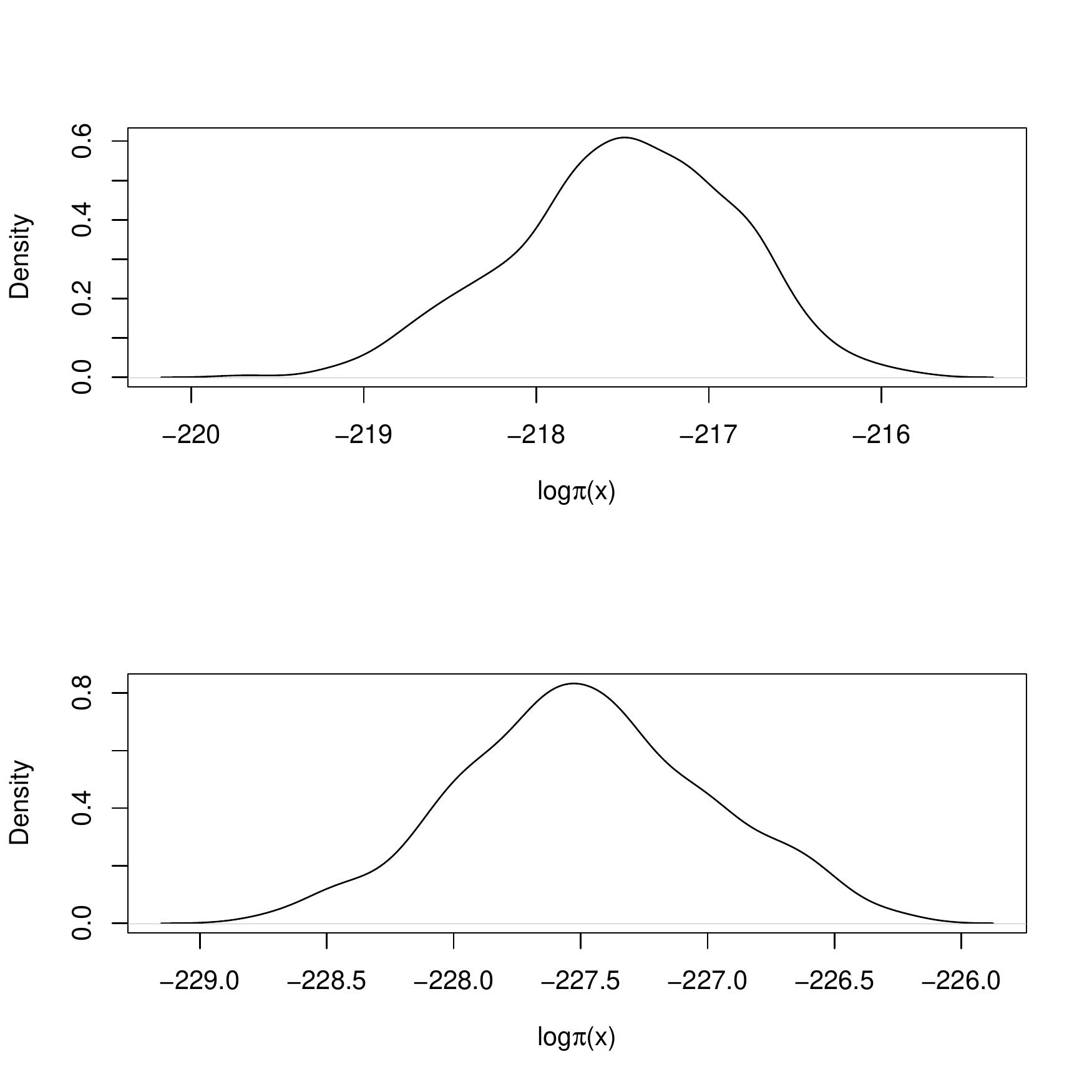}}
  \caption{Empirical log posterior density, taken at two random points in the posterior, for the linear Gaussian model (left panels) and the mixture of experts model (right panels).}
  \label{fig:post_noise}
\end{figure}

For each example, Figure \ref{fig:particles_vs_var} plots an
estimate of the logged-variance (obtained using
$500$ repeated estimations of the log posterior density for each number of particles) evaluated at the same random point in
the posterior against the logged number of particles. The straight line has gradient $-1$  and shows that the variance is indeed inversely proportional to the  number of particles.

\begin{figure}[t!]
  \centering
  \subfigure{\includegraphics[width=0.45\textwidth]{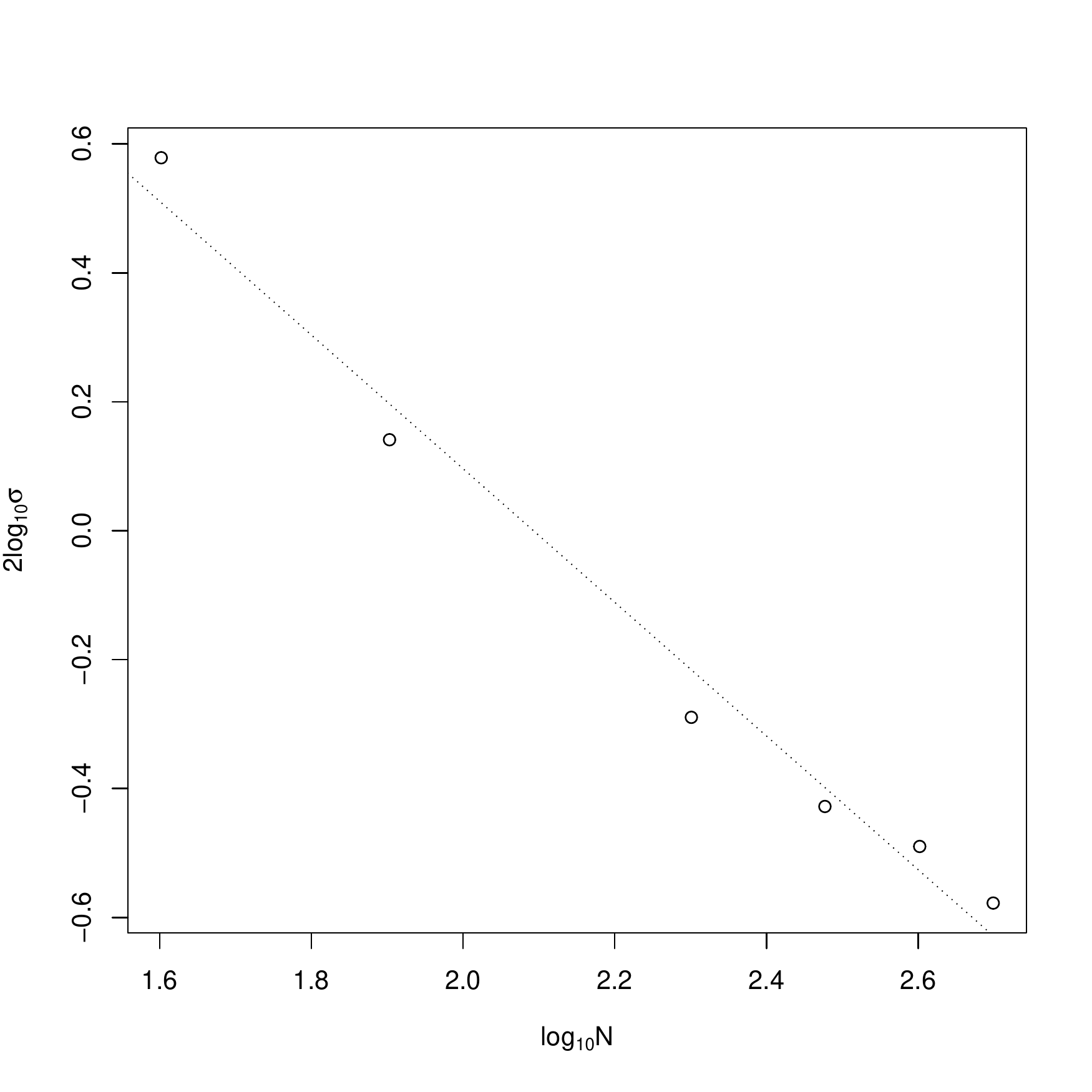}}
  \subfigure{\includegraphics[width=0.45\textwidth]{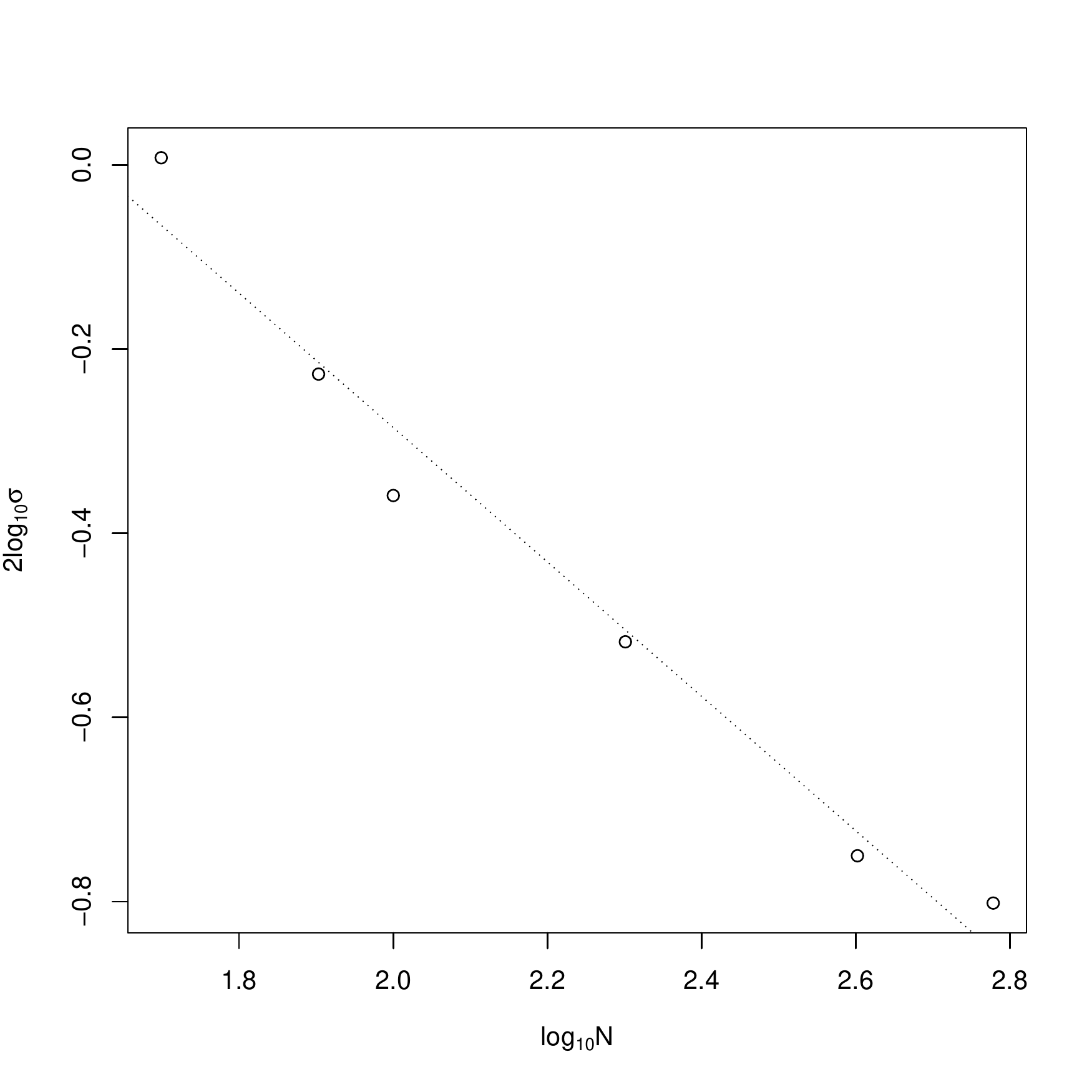}}
  \caption{Number of particles against the empirical variance, taken
    at a random point in the posterior, for the linear Gaussian model
    (left panel) and the mixture of experts model (right panel). The
    diagonal lines have slope $-1$.}
  \label{fig:particles_vs_var}
\end{figure}

\subsection{Noise in the estimate of the gradient}

Theorem \ref{sec:scaling-grads} allows for an  
 error in the 
estimate of a given component of the gradient in the
log posterior density. The variance of this error is assumed to be independent
of position and the error is assumed to be independent of the error in
the estimate of the log posterior density. This independence is by no means
certain since both estimates are created from the same run of a
particle filter.

To test these assumptions, in each of our two scenarios in Section \ref{sec:particle-filtering},
the linear Gaussian and mixture of experts examples, we sampled $100$
points independently from the posterior. For each of these points we ran the
particle filter $500$ times, creating $500$ estimates of the
log posterior density and $500$ estimates of the gradient of the
log posterior density. 

Figure \ref{fig:post_grad} plots, for one of these points in the
posterior, the estimate of the log posterior density against the
first and second components of the estimate of 
$\nabla \log\pi$. This lack of any visible pattern was repeated over the remaining 4 and 8 components of the linear Gaussian and mixture of experts models, respectively, and also over other points in the posterior. 
\begin{figure}[t!]
  \centering
  \subfigure{\includegraphics[width=0.45\textwidth]{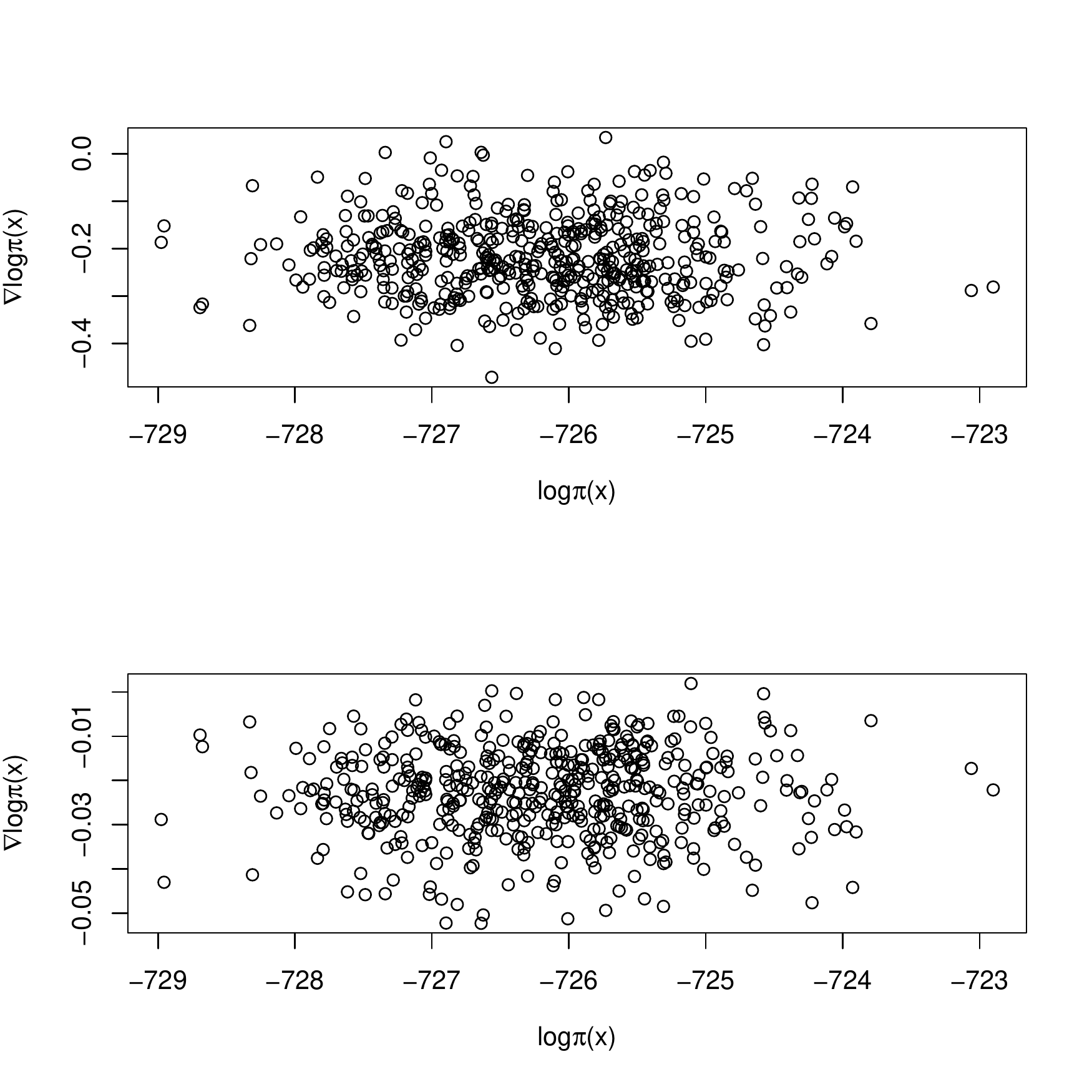}}
  \subfigure{\includegraphics[width=0.45\textwidth]{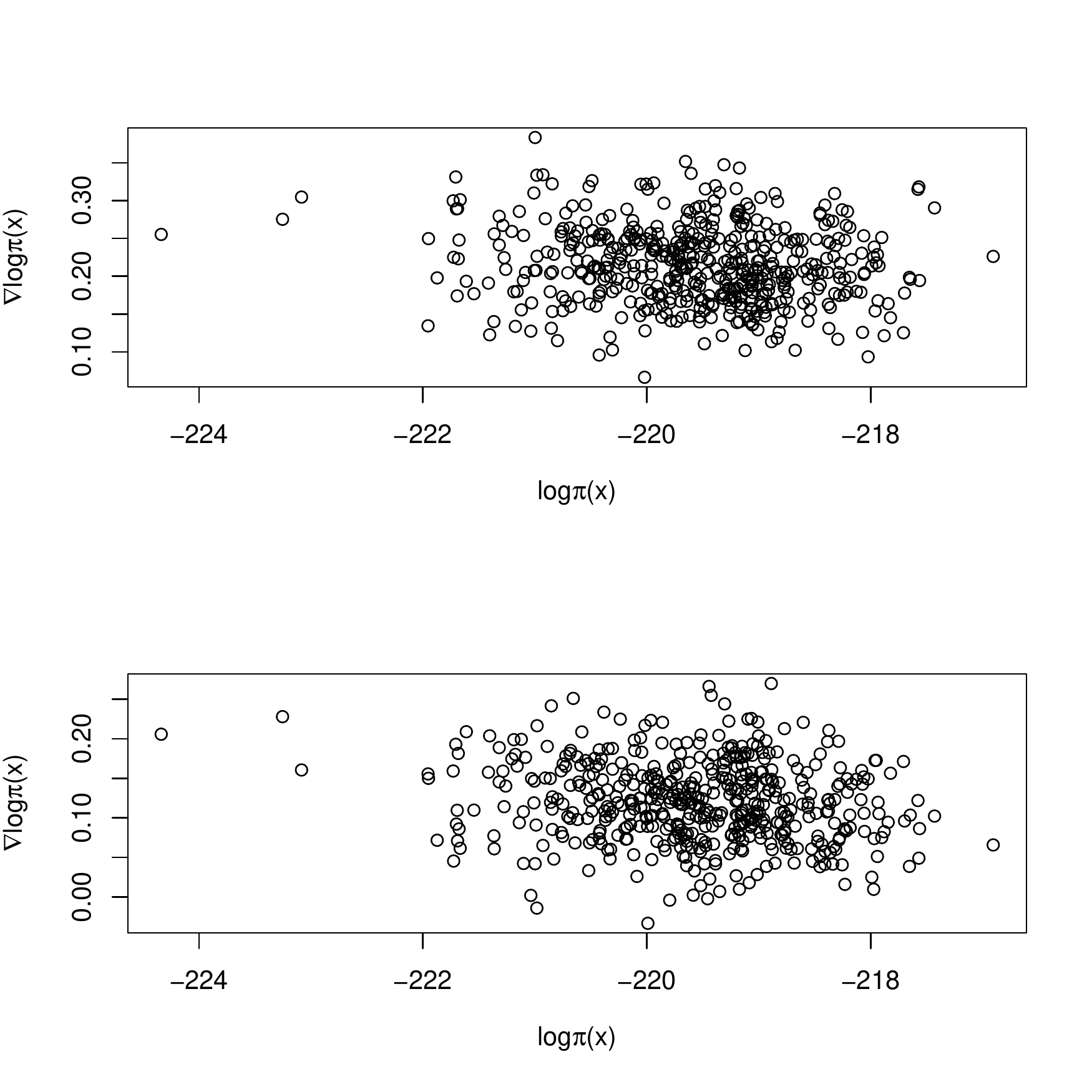}}
  \caption{Estimates of the log posterior density against the first and second components of the gradient of the log posterior density for the linear Gaussian model (left panel) and mixture of experts model (right panel).}
  \label{fig:post_grad}
\end{figure}

Figure \ref{fig:hist_var} presents a histogram of the variance of these
estimates in the gradient for each of the $6$ parameters in the linear Gaussian model. It shows that the variation in this variance
across the posterior is typically of an order of magnitude or less. 

\begin{figure}[t!]
  \centering
  \includegraphics[scale=0.5]{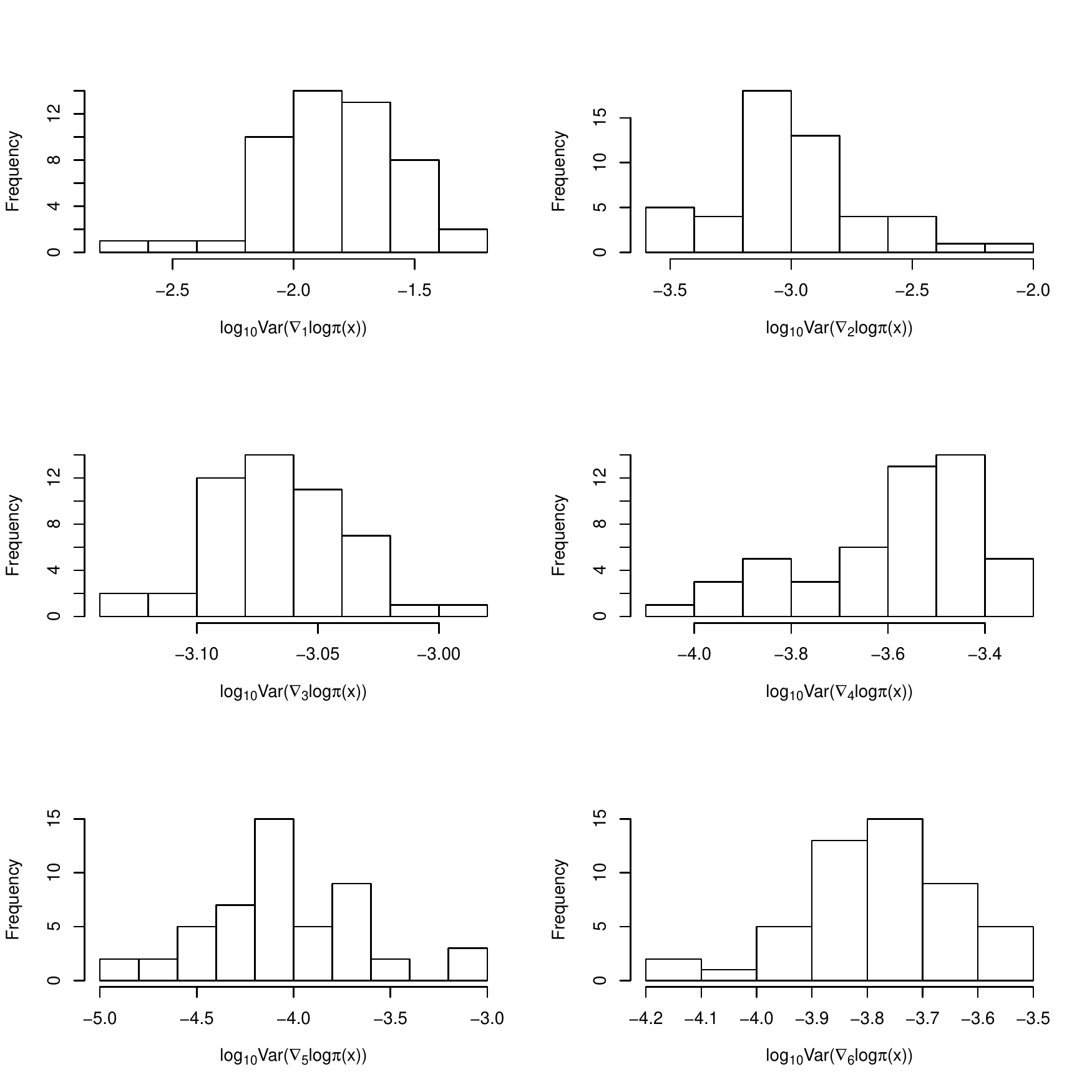}
  \caption{Histogram of log base 10 variances for each component of the gradient of the log posterior density for the linear Gaussian model.}
  \label{fig:hist_var}
\end{figure}

Figure \ref{fig:grad_noise} presents kernel density estimates of the
distribution of the noise in the estimates of the first two components
of the gradient in the log posterior density. The shapes suggest
that this density has light tails, in line with our assumption of
finite moments \eqref{eqn.mom.U}. Additionally, although we did not require this,
it is interesting that the noise in the gradient appears, at least
approximately, to be Gaussian.

\begin{figure}[t!]
  \centering
  \subfigure{\includegraphics[width=0.45\textwidth]{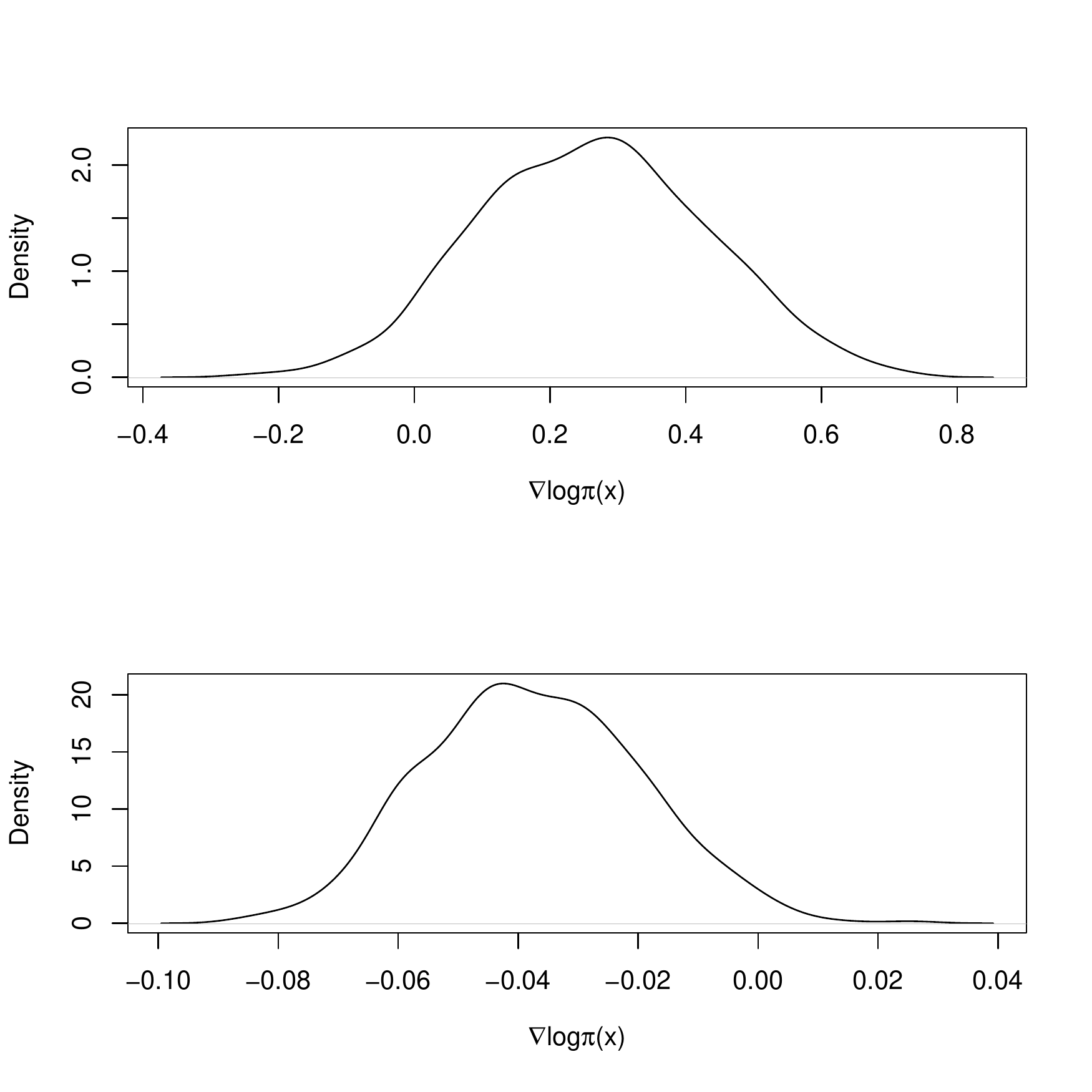}}
  \subfigure{\includegraphics[width=0.45\textwidth]{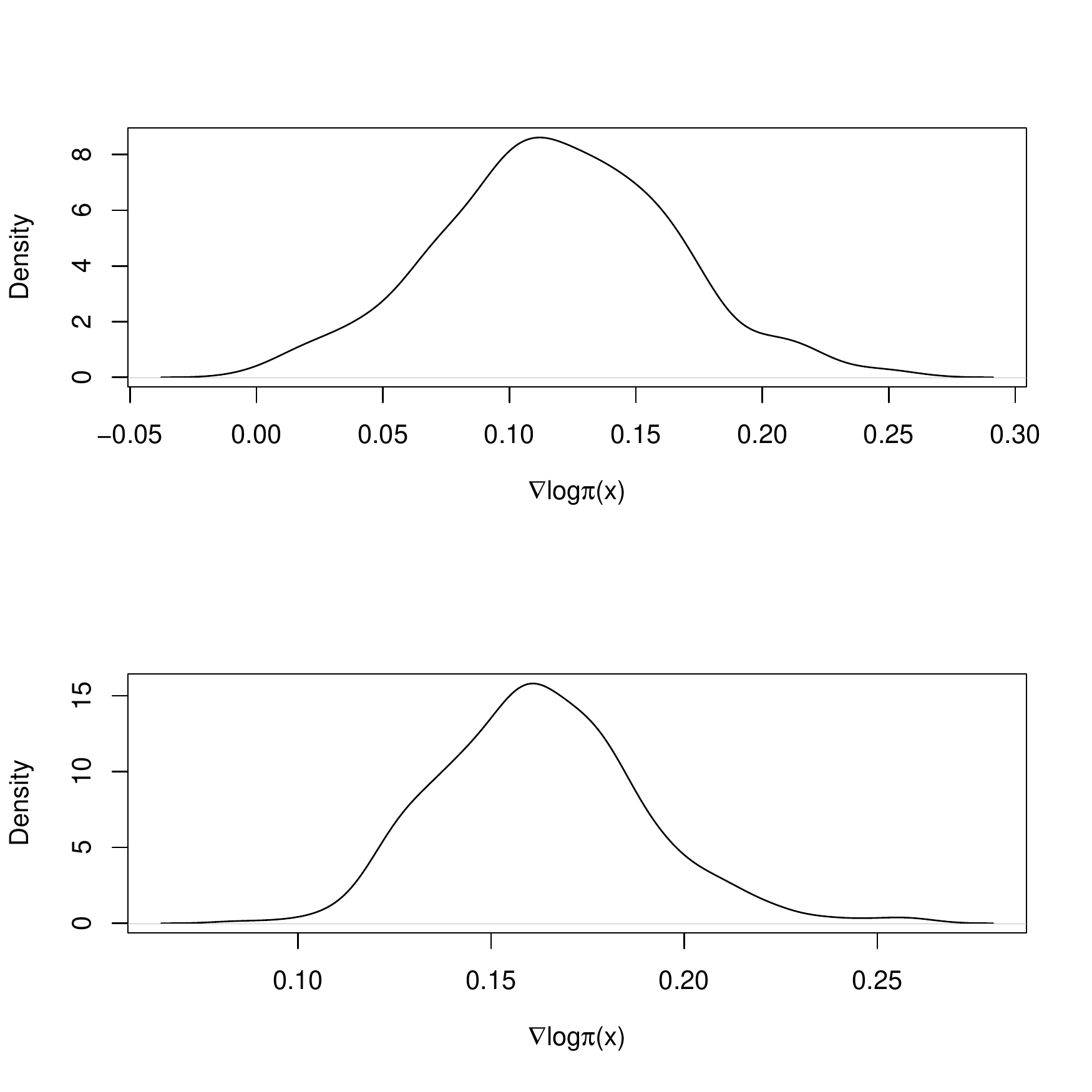}}
  \caption{Empirical density of the first and fourth components of the gradient of the log posterior density, taken at a random point in the posterior, for the linear Gaussian model (left panel) and the mixture of experts model (right panel).}
  \label{fig:grad_noise}
\end{figure}

\subsection{Regime diagnostics}
\label{sect.regime.diags}
Suppose for simplicity that we know precisely the log posterior density at the
current value $x$. We then estimate the log posterior density, $\log \pihat(x')$, at a proposed
value, $x'$. The change in the log posterior density, 
$\Delta:=\log \pihat(x')-\log \pi(x)$ can be split in to three
separate contributions:
\begin{itemize}
\item[$\Delta_A$] The change in the log posterior density that would have resulted if
  we had proposed a new value using the true gradient, $\nabla \log \pi(x)$.
\item[$\Delta_B$] The additional change in the log posterior density because we
  actually used an approximate gradient, $\hat{\nabla} \log \pi(x)$.
\item[$\Delta_C$] The error in the log posterior density at the proposed new value.
\end{itemize}
Throughout Theorem \ref{sec:scaling-grads}, $\Delta_C$ is assumed to have
a variance of $\sigma^2$ which we expect to be $O(1)$. In
regime (1), however $|\Delta_C|\sim|\Delta_B|>>|\Delta_A|$, whereas in Regime (3)
$|\Delta_C|\sim|\Delta_A|>>|\Delta_B|$. In Regime (2) the terms are
all of similar magnitudes.

To be specific, define
\begin{eqnarray*}
  x^* &=& x + \lambda Z + \frac{\lambda^2}{2} \nabla \log \pi(x) \\
  x^{\prime} &=& x + \lambda Z + \frac{\lambda^2}{2} \hat{\nabla} \log \pi(x),
\end{eqnarray*}
where $Z \sim \mathcal{N}(0,I)$. The first proposal is the standard Metropolis-adjusted Langevin proposal where the gradient is known exactly and the second is the particle Langevin proposal. Then
\begin{eqnarray*}
  \Delta_A &=& \log\pi(x^*) - \log\pi(x),\\
  \Delta_B &=& \log\pi(x^\prime) - \log\pi(x^*),\\ 
  \Delta_C &=& \log\pihat(x^\prime) - \log\pi(x^\prime).
\end{eqnarray*}

For each example in Section \ref{sec:particle-filtering}, and for each of $50$ points in the
posterior (each representing a value of $x$), we performed the
following. We ran the particle
filter $50$ times to obtain $50$ estimates, $\hat{\nabla}\log  \pi$, 
and then, for the mixture of experts model,
one further time with a very large number of particles to get a
very accurate estimate of $\nabla \log \pi(x)$ (for the linear Gaussian model this was calculated exactly using a Kalman filter). 
For each of the 50 estimates of $\nabla \log \pi$ we also simulated a vector of Gaussian random variables $Z$. This lead to $50$ pairs of $(x^*,x')$ values.
 For each $x^*$ and $x'$, for the mixture of experts model we ran the
particle filter with a very large number of particles to obtain a very
good estimate of the true log posterior density (for the linear Gaussian model
this was obtained from the Kalman filter), we also ran the particle
filter with $N$ particles (where $N$ is the same as in Section \ref{sec:particle-filtering}) to obtain an
estimate of the log posterior density at $x^\prime$. Thus for each of the
$50$ points we obtained $50$ estimates of
$\Delta_A,\Delta_B ~\mbox{and}~\Delta_C$.

\begin{figure}[t!]
  \centering
  \subfigure{\includegraphics[width=0.45\textwidth]{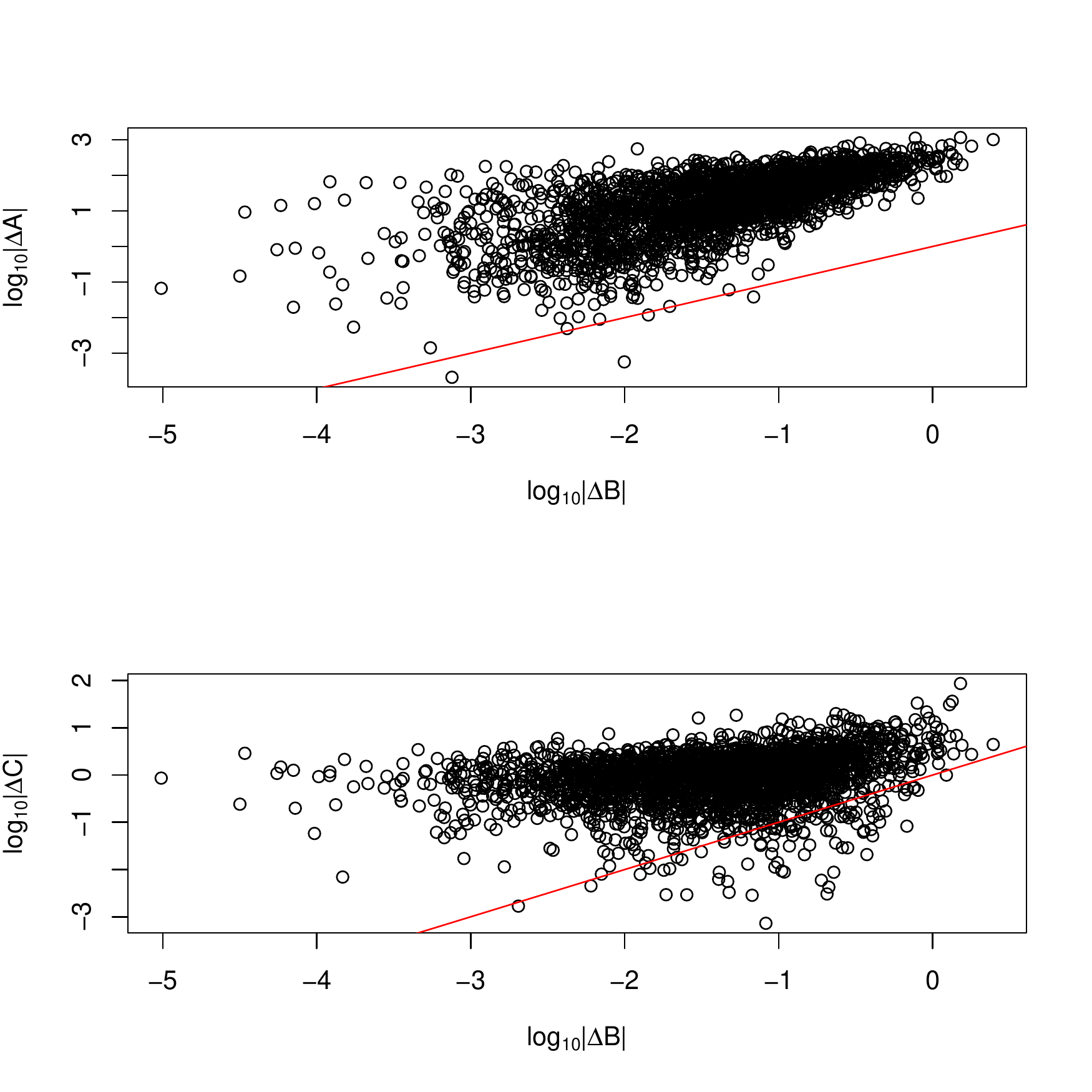}}
  \subfigure{\includegraphics[width=0.45\textwidth]{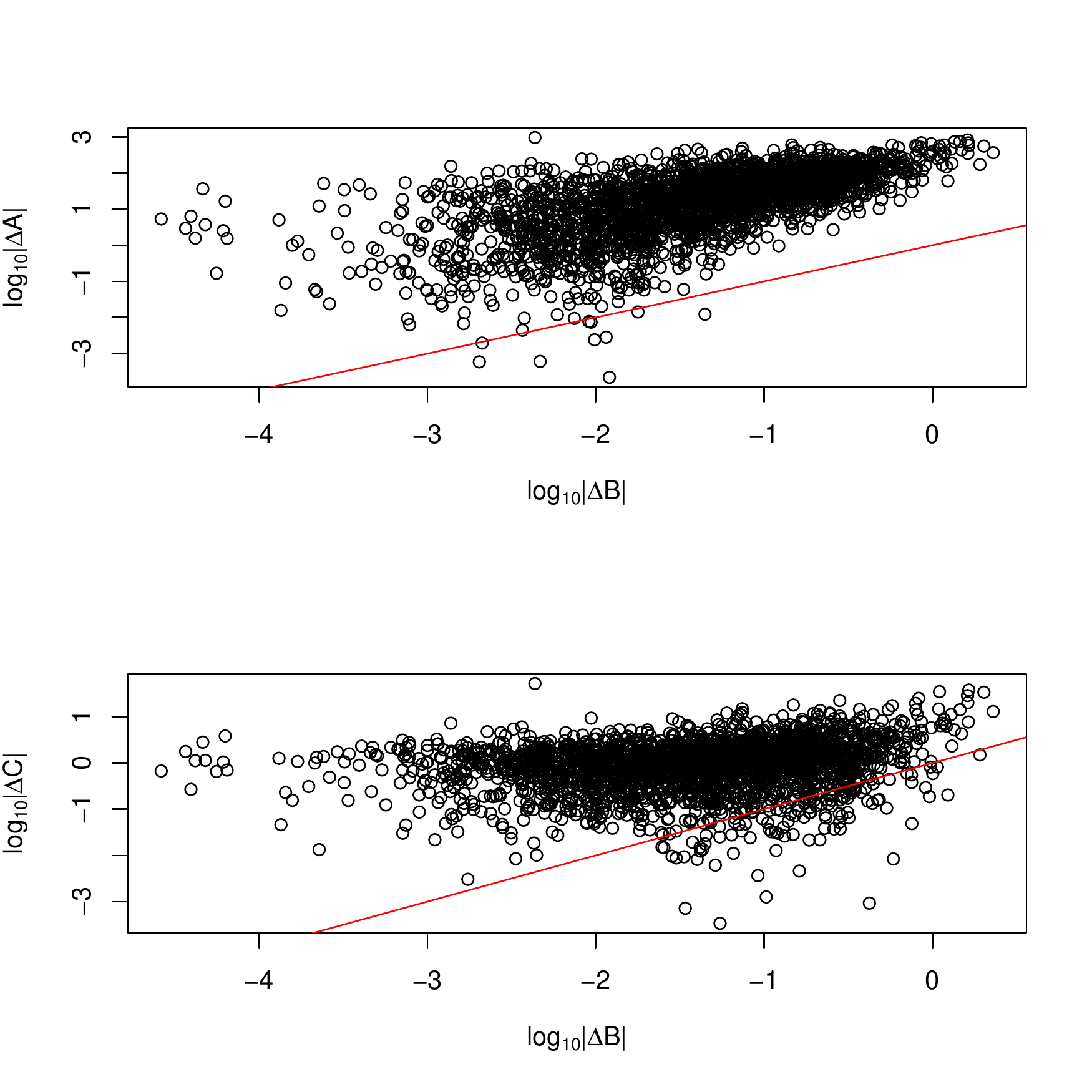}}
  \caption{Regime diagnostics for the linear Gaussian model (left panel) and the mixture of experts model (right panel) in log base 10. The red line in each plot represents equality.}
  \label{fig:deltas}
\end{figure}

Figure \ref{fig:deltas} plots $\Delta A$ and $\Delta_C$ against $\Delta_B$ for
each of the $2,500$ points. It can be seen from the plot that for both examples, $|\Delta_C|\sim|\Delta_A|>>|\Delta_B|$, confirming empirically that we expect to be in regime (3) of Theorem \ref{sec:scaling-grads}.

\end{document}